\newcounter{mnotecount}[section]
\let\oldmarginpar\marginpar
\renewcommand\marginpar[1]{\-\oldmarginpar[\raggedleft\footnotesize #1]%
{\raggedright\footnotesize #1}}
\title{Power Law Inflation with Electromagnetism}
\author[ ]{Xianghui Luo}
\author[$\dag$]{James Isenberg}
\affil[$\dag$]{Department of Mathematics, University of Oregon}
\date{} 
\begin{document}
\maketitle

\begin{abstract}
We generalize Ringstr\"om's global future causal stability results \cite{Ringstrom09} for certain expanding cosmological solutions of the Einstein-scalar field equations to solutions of the Einstein-Maxwell-scalar field system. In particular, after noting that the power law inflationary spacetimes $(M^{n+1}, \hat{g}, \hat{\phi})$ considered by Ringstr\"om in \cite{Ringstrom09} are solutions of the Einstein-Maxwell-scalar field system (with exponential potential) as well as of the Einstein-scalar field system (with the same exponential potential), we consider (nonlinear) perturbations of initial data sets of these spacetimes which include electromagnetic perturbations as well as  gravitational and scalar perturbations. We show that if (as in \cite{Ringstrom09}) we focus on pairs of relatively scaled open sets $U_{R_0} \subset U_{4R_0}$ on an initial slice of $(M^{n+1}, \hat{g})$, and if we choose a set of perturbed data which on $ U_{4R_0}$ is sufficiently close to that of $(M^{n+1}, \hat{g},\hat{\phi},\hat{A}=0)$, then in the maximal globally hyperbolic spacetime development $(M^{n+1},g,\phi,A)$ of this data via the Einstein-Maxwell-scalar field equations, all causal geodesics emanating from $U_{R_0}$ are future complete (just as in $(M^{n+1}, \hat{g})$). We also verify the controlled future asymptotic behavior of the fields in the spacetime developments of the perturbed data sets.

\end{abstract}

\tableofcontents
\newtheorem{definition}{Definition}
\newtheorem{theorem}{Theorem}
\newtheorem{lemma}{Lemma}
\newtheorem{proposition}{Proposition}
{\theoremstyle{remark} \newtheorem*{remark}{Remark}}
\newpage
\allowdisplaybreaks
\section{Introduction}
\label{Intro}

\subsection{Background}
The well-posedness of the Einstein vacuum equations, the Einstein-Maxwell equations, and other related field equation systems has been established for many years \cite{CB52, B-CB-I-Y}. Specifically, it is known that for any given set of initial data satisfying the constraint equations, there exists a unique solution to the Einstein (or Einstein-Maxwell, etc.) equations for some amount of  time to the future. Well-posedness also establishes that small perturbations to an initial data set only lead to small changes to the corresponding solution in finite time, and that if those changes to the data are confined to a subset of the initial hypersurface then the changes in the solution occur strictly in the domain of dependence of that subset. However, well-posedness gives no information about global-in-time behavior of the development.

One way to formulate this issue for Einstein's and other relativistic equations is in terms of \emph{global future causal stability} (``GFC-stability") which addresses the following question: Given a particular set of initial data for which the maximal globally hyperbolic development (``MGH-development") is future causally geodesically complete, if one makes small perturbations to that initial data set, is the resulting MGH-development also future causally geodesically complete? Note that while the property of well-posedness generally characterizes a PDE system together with all (or none) of its solutions, GFC-stability pertains to a particular solution  or family of solutions of the system. Note also that GFC-stability concerns the full nonlinear PDE system (Einstein, Einstein-scalar, etc.), not a linearization of the system, and in referring to ``perturbations to the initial data set",  we mean new data sets suitably near the original given set of initial data (which, for convenience here, we will refer to as the ``background" data set).

Global stability (GFC-stability or other related varieties) has been studied extensively for Einstein's theory, but has been established for only a very small number of solutions: The epic work of Christodoulou and Klainerman \cite{Christodoulou} (see also the later generalizations and simplifications in \cite{Bieri, Lindblad}) proves the global stability of Minkowski spacetime for the vacuum and Einstein-Maxwell equations, while that of Friedrich \cite{Friedrich} shows that the DeSitter spacetime is globally stable for Einstein's equations with a cosmological constant. More recently, Andersson and Moncrief \cite{Andersson} have proven that the Milne spacetimes\footnote{These expanding spacetimes are constructed by spatially compactifying the mass hyperboloids in Minkowski spacetime} are globally stable solutions of the vacuum equations.

One feature of the DeSitter and the Milne solutions  which makes it a bit easier to establish global stability for them is the fact that they are expanding solutions. In a rough sense, this property acts to inhibit the concentration of curvature, so that perturbations do not tend to lead to singularities forming. Hence in searching for solutions expected to be globally stable, one is led to consider expanding solutions.

Motivated both by this consideration and by the recent astrophysical evidence \cite{Jarosik} that our universe is likely expanding at an accelerated rate, Ringstr\"om has  recently shown that certain solutions of the Einstein-scalar field equations with accelerating expansion are GFC-stable. He does this for both exponentially expanding background spacetimes satisfying the Einstein-scalar field equations 
with fairly general scalar field potential functions $V(\phi)$ \cite{Ringstrom08}, and for power law expanding background spacetimes satisfying the Einstein-scalar field equations with a certain set of  exponentially-decaying scalar potential functions \cite{Ringstrom09}.

In this work, we show that the power law expanding solutions considered by Ringstr\"om in \cite{Ringstrom09} are globally stable with respect to the Einstein-\emph{Maxwell}-scalar field equations. By choosing the electromagnetic fields to vanish, we may consider Ringstr\"om's solutions from \cite{Ringstrom09} to be solutions of the Einstein-Maxwell-scalar system. To prove stability of these background solutions in the larger  PDE system of course requires us to allow the perturbation solutions to include non-vanishing electromagnetic fields. This paper shows that this can be done, and that GFC-stability holds. We note that in a recent paper \cite{Svedberg}, stability has been proven for an electromagnetic generalization of Ringstr\"om's results \cite{Ringstrom08} for exponentially expanding solutions.

A key feature of background solutions with sufficiently accelerated expansion is that the analysis can be strongly localized. This is because the entire future of a small subset in the initial hypersurface is determined completely by the initial data on a small neighborhood of that subset. Effectively then, the topology of the Cauchy slices of the background solutions being tested for stability and the topology of the perturbed solutions is irrelevant.

The general structure of our proof is very similar to that of \cite{Ringstrom09}: i)  localizing  the analysis to the development of data sets on open sets in the initial Cauchy slice, with the formal extension of such local data sets to spatial tori; (ii) establishing the well-posedness of the Cauchy problem for the field perturbations relative to the background solutions (with appropriate handling of the gauge choice);  iii) defining energy-type functionals for the perturbation fields and their derivatives, and (with the help of bootstrap assumptions) proving monotonicity estimates for them; iii) using the energy estimates together with bootstrap arguments to prove long-time existence, regularity, and global estimates for the MGH-developments  of the perturbed initial data; (iv) using the global estimates to analyze the dynamics of causal paths in the perturbed  spacetimes, and thereby verifying future causal completeness. The theme of this paper is showing that all of these steps work for the Einstein-Maxwell-scalar field equations (with exponentially-decaying scalar field potentials).

\subsection{Field Equations and Background Solutions}
\label{FE&BS}
Before stating our main results and  proceeding to prove them, we wish to set up the field equations for the parametrized set of Einstein-Maxwell-scalar field theories which we work with here, and we wish to also state what the background solutions are, explicitly.

The field variables for the Einstein-Maxwell-scalar field theories include the spacetime metric $g$, the electromagnetic vector potential $A$, and the scalar field $\phi$. Letting $R_{\mu\nu}$ and $R$ denote the Ricci tensor and the scalar curvature for $g$, letting $F$ denote the electromagnetic tensor for $A$, and choosing the scalar field potential to take the form $V(\phi) = V_0 e^{-\lambda \phi}$ (for constants $V_0$ and $\lambda$), we can write the field equations for this theory (for $(n+1)$-dimensional spacetimes) in the following (index) form:
\begin{align}
\label{fe1}R_{\mu\nu} - \frac{1}{2} R g_{\mu\nu} &= T_{\mu\nu},\\
\label{fe3}\nabla^\mu \nabla_\mu \phi - V'(\phi) &= 0,\\
\label{fe2}\nabla^\mu F_{\mu\nu} &= 0.
\end{align}
Here the stress-energy tensor for this system is given by
\begin{equation}
\label{Tmunu}
T_{\mu\nu} =  \partial_{\mu} \phi  \partial_{\nu} \phi - g_{\mu\nu} ( \frac{1}{2} g^{\rho\sigma} \partial_{\rho} \phi \partial_{\sigma} \phi + V(\phi) ) + ( F_{\mu\sigma} F_\nu\,^\sigma - \frac{1}{4} g_{\mu\nu} F_{\rho \sigma} F^{\rho \sigma}),
\end{equation}
and $V'(\phi)= -\lambda V(\phi)$.
Note that (\ref{fe1}) can be rewritten as
\begin{equation}
\label{fe1'}
R_{\mu\nu} = \partial_\mu \phi \partial_\nu \phi + \frac{2}{n-1} V(\phi) g_{\mu\nu} + F_{\mu\sigma} F_\nu\,^\sigma - \frac{1}{2(n-1)} g_{\mu\nu} F_{\rho\sigma} F^{\rho\sigma}.
\end{equation}
Note also that three parameters characterize these systems of field equations: the spatial dimension $n\ge 3$, the scalar potential scale $V_0>0$ and the scalar potential decay coefficient $\lambda>0$; hence for convenience, we shall denote  a particular choice of these theories by ``Einstein-Maxwell-scalar$_{\{n,V_0, \lambda \}}$".
Finally, note that the form of the field equations (\ref{fe1})-(\ref{Tmunu}) is consistent with the assumption here that the scalar fields are not charged, and so the interaction between the electromagnetic and the scalar fields is indirect (through the gravitational fields).

We now wish to specify the background fields $(\hat{g}, \hat{\phi}, \hat{A})$, which i) are solutions of the system (\ref{fe1})-(\ref{Tmunu}), ii) have accelerating expansion, and iii) (as we shall show) are GFC-stable. The fields are defined on the manifold $M^{n+1}= \mathbf{T}^n \times \mathbf{R}_+$, on which we choose the time coordinate $t>0$ and the global periodic spatial coordinates $x^i$. If we now choose the constant parameters  $t_0>0, p>1, c_0$, and $\kappa$, we write the following:

\begin{align}
\label{bgm} \hat{g} = - dt^2 + e^{2\kappa}(t/t_0)^{2p} \delta_{ij} dx^i dx^j,\\
\label{bg1} \hat{\phi} = \frac{2}{\lambda} \ln t - \frac{c_0}{\lambda} ,\\
\label{bg2} \hat{A}_{\mu} = 0 .
\end{align}

These fields do not generally satisfy the field equations (\ref{fe1})-(\ref{fe2}). However, if we require the field equation parameters $\{n, V_0, \lambda\}$ and the field parameters $\{ t_0, p, c_0,\kappa \}$ to satisfy the constraining relations
\begin{align}
\label{lambda}\lambda &= \frac{2}{[(n-1)p]^{1/2}} ,\\
\label{c0}c_0 &= \ln \left[ \frac{(n-1)(np - 1) p}{2 V_0} \right],
\end{align}
then indeed the fields $(\hat{g}, \hat{\phi}, \hat{A})$ do constitute a solution. Note that for a fixed spatial dimension $n$, (\ref{lambda}) expresses a one-to-one correspondence between the solution coefficient of expansion $p$, and the scalar potential exponent $\lambda$. So in effect, once one fixes the three field equation parameters $\{n, V_0, \lambda\}$, there remains a two parameter family of these background solutions. Note also that these solutions are identical to those appearing in \cite{Ringstrom09}, with the simple addition of the condition (\ref{bg2}). For convenience, we shall denote by ``$(\hat{g}, \hat{\phi}, \hat{A})_{ \{ t_0, p, c_0,\kappa \}}$"  a particular choice of the background solution to the Einstein-Maxwell-scalar$_{\{n, V_0, \lambda\}}$ field theory; in using this notation, we presume that the conditions (\ref{lambda})-(\ref{c0}) hold.

One of the key properties of any of the background spacetimes $(\mathbf{T}^n \times \mathbf{R}_+, \hat{g})_{ \{ t_0, p, c_0,\kappa \}} $ corresponding to the solutions (\ref{bgm})-(\ref{bg2}) is the accelerated expansion they exhibit, and the somewhat peculiar causal structure which consequently characterizes them. In particular, one finds that if one fixes a time $t_0$ and the corresponding Cauchy slice $\mathbf{T}^n_{t_0}$ in a background spacetime with expansion parameter $p$, and if for any point $q \in \mathbf{T}^n_{t_0}$ one considers a pair of coordinate balls  $B_{e^{-\kappa}\ell_0}(q)$ and $B_{3e^{-\kappa}\ell_0}(q)$ in $\mathbf{T}^n_{t_0}$  for the characteristic length  $\ell_0:=\frac{t_0}{p-1}$, then the causal future of $B_{e^{-\kappa}\ell_0}(q)$ is contained in the future domain of dependence of $B_{3e^{-\kappa}\ell_0}(q)$; in terms of standard notation (see, e.g., Wald \cite{Wald}), one has
\begin{equation}
\label{causal}
J^+ [ B_{e^{-\kappa}\ell(t_0)}(q) \times \{t_0\} ] \subseteq D^+ [B_{3e^{-\kappa}\ell(t_0)}(q)  \times \{t_0\}].
\end{equation}

The basis for this result is the fact that, if one considers any future causal path with starting point $(q,t_0)$ on the $t_0$ Cauchy surface $\mathbf{T}^n_{t_0}$, and if one calculates the projected spatial distance (relative to the induced metric) that the path can stray from $q$ on $\mathbf{T}^n_{t_0}$, a straightforward  calculation\footnote{It follows from the expression (\ref{bgm}) for the metric that any causal path $\gamma(s)$ satisfies the condition $ -( \dot{\gamma}^t)^2 + e^{2\kappa} (\frac{t}{t_0})^{2p}\delta_{ij} \dot{\gamma}^i\dot{\gamma}^j \le 0,$ which  can be rewritten as $e^{2\kappa} \delta_{ij}\dot{\gamma}^i \dot{\gamma}^j \le (\frac{t_0}{t})^{2p}(\dot{\gamma}^t)^2$. One then calculates the projected displacement as $\int^{s_1}_{s_0} [e^{2\kappa} \delta_{ij}\dot{\gamma}^i \dot{\gamma}^j]^{1/2}ds \le\int^{t_1}_{t_0} (\frac{t_0}{t})^pdt =\frac{t_0^p}{1-p} t^{1-p}|^{t_1}_{t_0}\le \ell_0. $ }
(see also \cite{Ringstrom09}) shows that it is bounded from above by $\ell_0$. Hence no causal path starting inside  $B_{e^{-\kappa}\ell_0}(q)$ can reach a spacetime point for which there are inextendible past directed paths which avoid  $B_{3e^{-\kappa}\ell_0}(q)$; the result follows.

Relying on this result, we can spatially localize the study of the GFC-stability of our background spacetimes, since in analyzing the future causal behavior of the spacetime evolved from perturbed data in $B_{e^{-\kappa}\ell_0}(q)$, we need not consider the influence of the development of any data outside of $B_{3e^{-\kappa}\ell_0}(q)$. Note that there is a small simplification of the proof of our results below if we work with an exterior ball of radius $4e^{-\kappa}\ell_0$ rather than $3e^{-\kappa}\ell_0$. Also, it is convenient to rescale the spatial metric in our background solutions by choosing the constant $\kappa= \kappa_0 := \ln [4\ell(t_0)]$. Doing this, we have the slightly simpler causal condition $J^+ [ B_{\frac{1}{4}}(q) \times \{t_0\} ] \subseteq D^+ [B_1(q)  \times \{t_0\}]$, which we can exploit in stating and proving our results here.

\subsection{Initial Value Formulation of the Field Equations}

The statement of our results, as well as the proof, rely strongly on a formulation of the field equations (\ref{fe1})-(\ref{Tmunu}) as an initial value problem. The standard $n+1$ ADM-type initial value formulation is as follows: The initial data consist of a choice of (i) a spatial manifold $\Sigma^n$, (ii) a Riemannian metric $h_{ab}$ and a symmetric tensor $K_{cd}$ on $\Sigma^n$ which together comprise the gravitational initial data, (iii) a pair of scalar fields $\varphi$ and $\pi$ on $\Sigma^n$ which provide initial data for the scalar field, and (iv) a two-form $B $ and one-form $E$ on $\Sigma^n$ which make up the electromagnetic initial data. The initial data set $(\Sigma^n, h, K, \varphi, \pi, B, E)$ satisfies the  \emph{Einstein-Maxwell-scalar constraint equations} (consisting of certain components of the field equations (\ref{fe1})-(\ref{Tmunu})) if the following hold \footnote{Here and throughout the paper, Latin indices run from $1$ to $n$ (space only) while Greek indices run from $0$ to $n$ (space plus time).}

\begin{align}
\label{cstrt1}
R - K_{ij}K^{ij} + (\mathrm{tr} K)^2 = \pi^2 + \nabla^i \varphi \nabla_i \varphi + 2 V(\varphi) + (E_j E ^j + \frac{1}{2}B_{ij}B^{ij}),\\
\label{cstrt2}\nabla^j K_{ji} - \nabla_i(\mathrm{tr} K) = \varphi \nabla_i \varphi + E_j B^j\,_i\,,\\
\label{cstrt3}\nabla_i  E^i = 0.
\end{align}
Here $\nabla$ is the Levi-Civita connection of $h$, $R$ is its scalar curvature, and the indices are raised and lowered using $h$.

Note that if we choose one of the natural $t=const.$ Cauchy surfaces (say, $t=t_1$) of the background solution $(\hat{g}, \hat{\phi}, \hat{A})_{ \{ t_0, p, c_0,\kappa \}}$, then the initial data on this Cauchy surface is $\hat{h}=e^{2\kappa}(\frac{t_1}{t_0})^{2p} \delta_{ij} dx^i dx^j$, $\hat{K}= -pe^{2\kappa}t_0^{-1}(\frac{t_1}{t_0})^{2p-1} \delta_{ij} dx^i dx^j$,  $\hat{\varphi}=\frac{2}{\lambda} \ln t_1-\frac{c_0}{\lambda}$, $\pi=\frac{2}{\lambda}\frac{1}{t_1}$, $\hat{B}=0$, and $\hat{E}=0$.

Given a choice of initial data satisfying the constraint equations, one seeks a \emph{globally hyperbolic development} of the data, which is a set $(M^{n+1}, g, \phi, A)$ such that the following hold true: (a) $(M^{n+1},g)$ is a globally hyperbolic spacetime, with $M^{n+1}$ diffeomorphic to $\Sigma^n \times \mathbf{R_+}$; (b) $(M^{n+1}, g, \phi, A)$ satisfies the Einstein-Maxwell-scalar field equations (\ref{fe1})-(\ref{Tmunu}); (c) there exists an embedding $i: \Sigma^n \to M^{n+1}$ such that $i(\Sigma^n)$ is a Cauchy hypersurface for $(M^{n+1}, g)$, with first and second fundamental forms $h$ and $K$, with $ \phi \circ i = \varphi$ and $\nabla_{e_\perp} \phi \circ i=\pi$ (for $e_\perp$ the future-directed unit normal vector field on $i(\Sigma^n)$), and with $B=i^*F$ and $E=i^*F(e_\perp,\cdot)$ for $F=dA$.

With small modifications (to generalize from the vacuum Einstein equations to the Einstein-Maxwell-scalar field equations), the well-known results of Choquet-Bruhat \cite{CB52} (see also \cite{B-CB-I-Y}) and of Choquet-Bruhat and Geroch \cite{CB-G} guarantee that for any smooth set of initial data satisfying the constraint equations (\ref{cstrt1})-(\ref{cstrt3}) there exists a globally hyperbolic development; moreover, for such data there exists a \emph{maximal globally hyperbolic development} (MGH-development)  unique up to isometry, which is maximal in the sense of containment (with appropriate isometry map).  The existence and uniqueness of MGH-developments plays a crucial role in the statement of our results, and in the proof of GFC-stability.

While it is useful to state our main theorem (below) in terms of initial data sets of the form $(\Sigma^n, h, K, \varphi, \pi, B, E)$, in carrying out the proof of our results we are led to work with modified specifications of initial data sets, which include quantities such as  $g_{0i}$ and $1+g_{00}$. Inclusion of these quantities is closely tied with the need to control gauges in the analysis, as we see below.


\subsection{Main Results}
\label{MainResults}

The standard idea of a stability theorem is that one fixes a solution of the field equations, noting certain important properties of the solution, one considers certain classes of perturbations of the solution, and one shows that the properties of interest remain true for the perturbed solutions. As a consequence of the localized character of the causal structure of the expanding solutions under study here (see Section \ref{FE&BS}), following \cite{Ringstrom09} we state our main theorems here in a slightly different way (which effectively leads to slightly stronger results). We consider sets of initial data $(\Sigma^n, h, K, \varphi, \pi, B, E)$ for an Einstein-Maxwell-scalar$_{\{n, V_0, \lambda\}}$ field theory which \emph{in local regions} are small perturbations of local initial data for one of our background solutions, and proceed to prove that the future development of the data restricted to a somewhat smaller region has the desired properties (causal geodesic completeness, etc.). With our results stated this way, they apply to solutions which may only locally be a small perturbation of one of the $(\hat{g}, \hat{\phi}, \hat{A})_{ \{ t_0, p, c_0,\kappa \}}$ solutions, or may be a perturbation of one of them in one region, and a different one in another region. This allows the results to hold for solutions with unrestricted topologies (unlike the background solutions, which are assumed to have the topology $\mathbf{T}^n \times \mathbf{R}_+)$.

To measure the degree to which initial data for the perturbed solutions locally deviate from that of the background solutions, we need to work with a set of norms.
Since the proof here depends on control of these norms via energy functionals, we are led to work with Sobolev norms; since the analysis is essentially local, we work with local Sobolev norms. In particular, for an open set $U\subset \Sigma^n$ diffeomorphic to a ball in $\mathbf{R^n}$ and therefore covered by Euclidean coordinates $(x^1,...x^n)$,  for a tensor field $\Psi$ on $\Sigma^n$ with $x^j$ coordinate-basis components $\Psi^{i_1\cdots i_q}_{j_1\cdots j_r}$, and for a non-negative integer $m$, we work with  Sobolev norms defined as follows
\[ \| \Psi \|_{H^m(U)} = \left( \sum^n_{i_1, \cdots, i_q =1} \sum^n_{j_1, \cdots, j_r =1} \sum_{| \alpha| \le m} \int_{x(U)} |\partial^\alpha \Psi^{i_1\cdots i_q}_{j_1\cdots j_r} \circ x^{-1} |^2 dx^1\cdots dx^n \right)^{1/2}.\]
Here the collective multi-index notation $``\partial^\alpha"$ is used for the partial derivatives, all of which are calculated using the $x^j$ coordinate basis.

In comparing (locally) a given set of initial data $(\Sigma^n, h, K, \varphi, \pi, B, E)$ for a perturbed solution with the data of a background solution, it is useful to find  the ``closest" background solution for the comparison. Presuming that the parameters $n, V_0$, and $\lambda$ have been chosen---thereby fixing the field theory and also thereby fixing  (via (\ref{lambda}) and (\ref{c0})) $p$ and $c_0$---it remains to determine $t_0$ and $\kappa$. As discussed above, it is convenient to choose (as a scaling) $\kappa_0 := \ln [4\ell(t_0)]$. Hence, one needs only to determine $t_0$.

The idea for determining  the appropriate choice of $t_0$ is based on equation (\ref{bg1}), which (for a given $\lambda$ and $c_0$) gives the time dependence of the background scalar field $\hat{\phi}$. Roughly speaking, to determine $t_0$ one calculates from the given (perturbed) data a local average of the scalar field, and then setting $\phi(t_0)$ equal to this average and
inverting (\ref{bg1}), one obtains $t_0$. More precisely, one chooses an open set $U\subset \Sigma^n$, together with a diffeomorphism $\zeta: U \rightarrow B_1(0) \subset \mathbf{R}^n$. Then, one calculates $\langle \varphi \rangle := \frac{1}{\omega_n} \int_{B_1(0)} \varphi \circ\, \zeta^{-1}\,dx $, where $\omega_n$ is the volume of the unit ball in $\mathbf{R}^n$ with respect to the Euclidean metric. Finally, one sets
\begin{equation}
\label{t0}
t_0 := \mathrm{exp} \left[ \frac{1}{2}(\lambda\langle\varphi \rangle + c_0)\right].
\end{equation}

We note that for a given set of initial data,  this procedure for mapping to a comparison background solution depends only on the choice of the open set $U$ and on the choice of the map $\zeta:U \rightarrow B_1(0)$; once that choice is made, $t_0$ (and therefore the comparison solution) is uniquely determined, regardless of whether the initial data is indeed close to a background solution. To simplify the discussion below, we use the notation $\Theta_{\{U,\zeta\}} (\Sigma^n, h, K, \varphi, \pi, B, E)$ to denote  the map taking the indicated data set to $t_0$, as defined above.

We are now ready to state our main theorem:

\begin{theorem}
\label{maintheorem}
Let  $(\Sigma^n, h, K, \varphi, \pi, B, E)$ be a set of initial data satisfying the constraint equations (\ref{cstrt1})-(\ref{cstrt3}) for a fixed choice of the Einstein-Maxwell-scalar$_{\{n, V_0, \lambda\}}$ field theory. There exists an $\epsilon>0$ (depending only on $n$ and $p$) such that if for some open set $U\subset \Sigma^n$ and for some diffeomorphism $\zeta: U \rightarrow B_1(0) \subset \mathbf{R}^n$ the data satisfy the smallness condition
\begin{eqnarray}
\label{epsilon}
\| e^{-2\kappa_0}h - \delta \|_{H^{m_0+1}(U)} + \| e^{-2\kappa_0}t_0 K - p \delta \|_{H^{m_0}(U)}  \nonumber \\
+ \| \varphi - \langle \varphi \rangle \|_{H^{m_0+1}(U)} + \| t_0 \pi - \frac{2}{\lambda} \|_{H^{m_0}(U)}\nonumber\\
+\sum_i \| E_i\|_{H^{m_0}(U)} + \sum_{i,k}\| B_{ik} \|_{H^{m_0}(U)} \le \epsilon,
\end{eqnarray}
with $t_0 =\Theta_{\{U,\zeta\}} (\Sigma^n, h, K, \varphi, \pi, B, E)$, with $\kappa_0$ chosen as above,
and with $m_0$ the smallest integer satisfying $m_0 > n/2 +1$;
then the MGH-development $(M^{n+1}, g, \phi, A)$ of  $(\Sigma^n, h, K, \varphi, \pi, B, E)$
has the property that if $i: \Sigma^n \to M$ labels the embedding corresponding to the initial data, then all causal geodesics starting in $i\{\zeta^{-1}[B_{1/4}(0)]\}$ are future complete.
\end{theorem}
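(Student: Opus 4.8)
The plan is to follow the architecture of Ringström's argument in \cite{Ringstrom09}, adapting each stage to accommodate the electromagnetic field. First I would perform the spatial localization: using the causal containment \eqref{causal} (in the rescaled form $J^+[B_{1/4}(q)\times\{t_0\}]\subseteq D^+[B_1(q)\times\{t_0\}]$), the entire future of $i\{\zeta^{-1}[B_{1/4}(0)]\}$ in the MGH-development depends only on the data restricted to $U\cong B_1(0)$. I would then extend this local data formally to data on a spatial torus $\mathbf{T}^n$, agreeing with $(\hat h,\hat K,\hat\varphi,\pi,\hat B,\hat E)$ (with parameters $t_0=\Theta_{\{U,\zeta\}}(\cdots)$, $\kappa_0=\ln[4\ell(t_0)]$, and $p,c_0$ fixed by \eqref{lambda}--\eqref{c0}) outside a slightly larger ball, and solve the Einstein-Maxwell-scalar constraint equations \eqref{cstrt1}--\eqref{cstrt3} on $\mathbf{T}^n$ — e.g.\ via the conformal method — producing nearby constraint-satisfying data whose restriction to $B_{1/4}(0)$ has the same development as the original. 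This reduces the theorem to a statement about perturbations of the background $(\hat g,\hat\phi,\hat A=0)_{\{t_0,p,c_0,\kappa_0\}}$ on $\mathbf{T}^n\times\mathbf{R}_+$.

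Next I would set up the evolution problem in a suitable gauge. Following \cite{Ringstrom09} I would use a wave-type (generalized harmonic / CMC-spatial-harmonic-hybrid) gauge for the metric, writing $g=\hat g+$ perturbation in terms of variables including $1+g_{00}$ and $g_{0i}$ (as the excerpt flags), so that \eqref{fe1'} becomes a system of quasilinear wave equations; simultaneously I would impose the Lorenz gauge $\nabla^\mu A_\mu=0$ on the electromagnetic potential, so that \eqref{fe2} reads $\nabla^\mu\nabla_\mu A_\nu = R_\nu{}^\mu A_\mu$ (plus lower-order curvature terms), again a wave equation; and \eqref{fe3} is already a scalar wave equation. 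Local well-posedness for this coupled quasilinear hyperbolic system, together with propagation of the gauge conditions and of the constraints (using \eqref{cstrt3} and the Maxwell Bianchi identity for the electromagnetic constraint), then gives the Cauchy problem. The key structural point to verify is that the electromagnetic stress-energy contributions in \eqref{Tmunu}--\eqref{fe1'} — the terms $F_{\mu\sigma}F_\nu{}^\sigma$ and $g_{\mu\nu}F_{\rho\sigma}F^{\rho\sigma}$ — have the right sign and decay structure to be absorbed into the energy scheme, and crucially that, since the scalar field is uncharged, $F$ couples to $\phi$ only through $g$, so no new dangerous cross-terms appear.

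The core of the proof is the energy-estimate/bootstrap machinery. I would introduce rescaled perturbation variables (dividing by the appropriate powers of $t/t_0$ dictated by \eqref{bgm}--\eqref{bg1}, so that the background expansion is factored out), define $H^m$-type energy functionals for the metric perturbation, the scalar perturbation, and now also for the one-form $A$ and its field strength $F$ (with the natural normalization $\sum_i\|E_i\|+\sum_{i,k}\|B_{ik}\|$ suggested by \eqref{epsilon}), and derive differential inequalities for them. The decisive feature is that the accelerated power-law expansion supplies friction terms (from $\mathrm{tr}\hat K = -np/t < 0$, hence Hubble-type damping $\sim p/t$ in the rescaled wave operators) which, when $p>1$, dominate the error terms; for the Maxwell sector one expects the electromagnetic energy to decay because the conformal weight of $F$ in $n+1 \ge 4$ dimensions combined with the volume expansion yields a net decay rate. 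Running a bootstrap: assume smallness of all energies on $[t_1,T)$, use this to control the coefficients and nonlinearities, derive improved monotonicity estimates, and conclude global existence ($T=\infty$) with explicit asymptotics (e.g.\ $g_{ij}/(t/t_0)^{2p}\to$ limit metric, $\phi - \tfrac{2}{\lambda}\ln t \to$ const, $F\to 0$ at a definite rate). Finally, armed with these global pointwise bounds on $g$, I would re-run the causal-path estimate from the footnote following \eqref{causal} — now for the perturbed metric, using that $g$ stays uniformly close to $\hat g$ — to show every future causal geodesic from $i\{\zeta^{-1}[B_{1/4}(0)]\}$ has infinite affine length, i.e.\ is future complete.

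The main obstacle I anticipate is the Maxwell sector's interaction with the gauge and energy structure: one must choose the electromagnetic gauge and the precise weights in the energy functionals so that (a) the Lorenz-gauge wave equation for $A$ closes with the metric perturbation equations without loss of derivatives, (b) the curvature-coupling term $R_\nu{}^\mu A_\mu$ and the stress-energy feedback $F_{\mu\sigma}F_\nu{}^\sigma$ into \eqref{fe1'} are genuinely subcritical relative to the expansion-induced damping for all $n\ge 3$ and all $p>1$, and (c) the electromagnetic constraint \eqref{cstrt3} is propagated. Verifying the sharp decay rate for the electromagnetic energy — and checking that it is at least as fast as what the scalar-metric system already tolerates — is the technical heart; everything else is a careful but essentially routine adaptation of \cite{Ringstrom09} (cf.\ also the exponential-expansion analogue in \cite{Svedberg}).
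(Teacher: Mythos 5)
Your outline tracks the paper's architecture closely, but it departs from the actual proof at one structurally important point. You propose to extend the local data on $U$ to constraint-satisfying data on $\mathbf{T}^n$ ``e.g.\ via the conformal method.'' The paper does \emph{not} do this: it glues the data to background data with a cutoff function, explicitly accepts that the resulting torus data violate the constraints \eqref{cstrt1}--\eqref{cstrt3} in an annular transition region, and then designs the gauge-modified hyperbolic system so that the gauge source quantities $\mathcal{D}^\mu$ and $\mathcal{G}$ and their first time derivatives vanish only on the subset $S$ of the initial slice where the constraints do hold. The homogeneous wave system \eqref{Dhyperbolic}--\eqref{Ghyperbolic} for $(\mathcal{D}_\mu,\mathcal{G})$ then forces these gauge functions to vanish throughout the domain of dependence of $S$, so the modified solution agrees with a genuine Einstein--Maxwell--scalar solution precisely where it matters. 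Your conformal-method route would require showing existence of nearby constraint solutions on $\mathbf{T}^n$ with prescribed closeness to the background and prescribed data on $B_1(0)$, which is a nontrivial elliptic problem not addressed by the hypotheses; the paper's cutoff-and-propagate approach avoids it entirely. Relatedly, you describe the gauge as producing wave equations but omit the extra correction terms $M^{[g]}_{\mu\nu}$, $M^{[\phi]}$, $M^{[A]}_\nu$ the paper adds on top of mere hyperbolization; these are what give the reformulated system \eqref{refm1}--\eqref{refm6} its semi-decoupled first-order structure and the friction coefficients that make the hierarchy of differential inequalities \eqref{ine1}--\eqref{ine5} close. Without them, the bootstrap scheme you sketch does not obviously close. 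Finally, you would also need the small lemma (Lemma~\ref{lemmaAtilde}) producing a controlled vector potential $\tilde A$ with $d\tilde A=\tilde B$ on $\mathbf{T}^n$, which is needed to convert the magnetic two-form data into Cauchy data for the Lorenz-gauge wave equation for $A$; this does not appear in your outline but is a necessary ingredient.
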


This theorem shows that for a data set which in a local region is sufficiently close to data for one of the background solutions, geodesic completeness holds in the development of the data on a specified subset of that local region. If this holds for data in a neighborhood of every point in $\Sigma^n$, then clearly the entire MGH-development of the data set is future geodesically complete. It follows as a special case that if one chooses a Cauchy surface $\mathbf{T}^n_{t_0}$ for one of the background solutions $(\hat{g}, \hat{\phi}, \hat{A})_{ \{ t_0, p, c_0,\kappa \}}$, and if one considers sufficiently small  Einstein-Maxwell-scalar$_{\{n, V_0, \lambda\}}$ field perturbations of the data on $\mathbf{T}^n_{t_0}$, then the MGH-development of that data is future geodesically complete; hence the solutions $(\hat{g}, \hat{\phi}, \hat{A})_{ \{ t_0, p, c_0,\kappa \}}$ are all GFC-stable.

One might ask if, in addition to the property of future geodesic completeness, the MGH-development of a set of perturbed data has the property that its fields in some sense approach those of the corresponding background solution. This is in fact the case, in a certain weak sense:

\begin{theorem}
\label{2ndtheorem}
Let  $(\Sigma^n, h, K, \varphi, \pi, B, E)$ be a set of initial data satisfying the constraint equations (\ref{cstrt1})-(\ref{cstrt3}) for a fixed choice of the Einstein-Maxwell-scalar$_{\{n, V_0, \lambda\}}$ field theory, and also satisfying the $\epsilon$-smallness condition (\ref{epsilon}) from Theorem \ref{maintheorem}. Let $(M^{n+1}, g, \phi, A)$ denote the MGH-development of this data.
There are constants $t_{-} \in(0, t_0), \, a>0$, and $\kappa_m>0$ for all non-negative integers $m$,  there is a smooth map  $\Psi: (t_-, \infty) \times B_{5/8}(0) \to M^{n+1}$ which is a diffeomorphism onto its image and satisfies $\Psi(t_0, q) = i \circ \zeta^{-1} (q)$ for $ q\in B_{5/8}(0)$, and there is a Riemannian metric $H_{ab}$ on  $B_{5/8}(0)$ such that the following are true:

All causal paths that start in $i\{\zeta^{-1}[B_{1/4}(0)]\}$ remain in $\mathrm{Image}\{\Psi\}$ for all of the future.

Letting  $\| \cdot \|_{C^m}$ denote the $C^m$ norm on $B_{5/8}(0)$, letting $(g,\phi, A)$ denote the pullback of the MGH-development fields via $\Psi$, and letting $(\hat{g}, \hat{\phi}, \hat{A})$ denote the corresponding background fields, we have, for $t \ge t_0$, the following decay estimates:
\begin{align}
\label{exp1}
\| \phi(t, \cdot) - \hat{\phi}(t) \|_{C^m} + \| (t\partial_t \phi (t, \cdot)) - (t\partial_t \hat{\phi})(t)\|_{C^m} \le \kappa_m (t/t_0)^{-a}\,,\\
\| E_i \|_{C^m} = \| \partial_i A_0 - \partial_0 A_i \|_{C^m} \le \kappa_m e^{\kappa_0} (t/t_0)^p \,(t/t_0)^{-1-a},\\
\| B_{ij} \|_{C^m} = \| \partial_i A_j - \partial_j A_i \|_{C^m} \le \kappa_m e^{2\kappa_0} (t/t_0)^{2p} \,(t/t_0)^{-1-a},\\
\| (1 + g_{00}) (t, \cdot) \|_{C^m} + \| t\partial_t g_{00}(t, \cdot) \|_{C^m} \le \kappa_m (t/t_0)^{-a}\,,\\
\| \frac{1}{t}g_{0i}(t, \cdot) - \frac{1}{(n-2)p +1}H^{jl} \gamma_{jil} \|_{C^m} + \| t\partial_t (\frac{1}{t}g_{0i}(t, \cdot)) \|_{C^m} \nonumber\\ \le \kappa_m (t/t_0)^{-a}\,,\\
\| (t/t_0)^{-2p} e^{-2\kappa_0} g_{ij}(t, \cdot) - H_{ij} \|_{C^m}\qquad\qquad\qquad\qquad\quad \nonumber\\ +\,
\| (t/t_0)^{-2p} e^{-2\kappa_0} (t\partial_t g_{ij} (t, \cdot)) - 2pH_{ij}\|_{C^m} \le \kappa_m (t/t_0)^{-a}\,,\\
\| (t/t_0)^{2p} e^{2\kappa_0} g^{ij}(t, \cdot) - H^{ij} \|_{C^m} \le \kappa_m (t/t_0)^{-a}\,,\\
\label{exp7}  \| (t/t_0)^{-2p} e^{-2\kappa_0} t\,K_{ij}(t, \cdot) - pH_{ij} \|_{C^m}\le \kappa_m (t/t_0)^{-a}\,.
\end{align}
Here $\gamma_{jil}$ are the (lowered index) Christoffel symbols for the metric $H$ on $B_{5/8}(0)$, and $K_{ij}$ is the (evolving) second fundamental form for the hypersurface $B_{5/8}(0) \times \{t\}$.

\end{theorem}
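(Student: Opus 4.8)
The plan is to obtain Theorem \ref{2ndtheorem} directly from the global existence and decay estimates that are established in the course of proving Theorem \ref{maintheorem}. Recall that the latter proof proceeds by extending the local data on $U$ to data on a spatial torus, solving the Einstein--Maxwell--scalar system in the gauge employed there, and running a bootstrap argument that produces a solution on a set of the form $(t_-,\infty)\times\mathbf{T}^n$ with uniform bounds, and decay, for the natural rescaled perturbation quantities, namely $e^{-2\kappa_0}(t/t_0)^{-2p}g_{ij}$, $1+g_{00}$, $t^{-1}g_{0i}$, $\varphi-\hat\phi$, $t\,\partial_t(\varphi-\hat\phi)$, and the appropriately rescaled electromagnetic components. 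The first step is to upgrade these from Sobolev estimates to estimates on arbitrarily many derivatives: one propagates the same type of energy inequality for the higher-order energies $\mathcal{E}_m$ built from $\le m+m_0$ spatial derivatives, obtaining decay of the form $\mathcal{E}_m \le \kappa_m (t/t_0)^{-2a}$ (with $\kappa_m$ growing in $m$), and then invokes Sobolev embedding on $\mathbf{T}^n$ to convert these into the pointwise $C^m$ statements appearing in (\ref{exp1})--(\ref{exp7}).

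Next I would construct the map $\Psi$. Since the solution furnished by the proof of Theorem \ref{maintheorem} lives, by construction, on a product manifold with the gauge coordinates as global coordinates, $\Psi$ is essentially the identification of that product with a neighborhood in $M^{n+1}$, precomposed with $i\circ\zeta^{-1}$ at $t=t_0$; the lower limit $t_-\in(0,t_0)$ records the time of local existence to the past of the initial slice. The two things to check are that the gauge coordinates do not degenerate to the future --- which is immediate from the bootstrap estimates, as they keep $g_{00}$ bounded away from $0$ and the rescaled spatial metric uniformly positive definite, so that $\Psi$ is a diffeomorphism onto its image --- and that the spatial ball $B_{5/8}(0)$ is large enough to contain the image of every future causal path starting in $B_{1/4}(0)$. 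The latter follows from the causal-localization computation underlying (\ref{causal}): applied to the \emph{perturbed} metric, whose components differ from the background ones by $O(\epsilon)$, it shows that a future causal path issuing from $B_{1/4}(0)$ has coordinate spatial excursion at most $\tfrac14+C\epsilon<\tfrac38$, hence remains in $B_{5/8}(0)$; the same estimate (now with exterior ball $B_1(0)$) is what guarantees that no data outside $B_1(0)$ could have influenced the development, justifying the torus extension.

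With the solution and $\Psi$ in hand, the asymptotics are read off as follows. The limiting metric is defined by $H_{ij} := \lim_{t\to\infty} e^{-2\kappa_0}(t/t_0)^{-2p} g_{ij}(t,\cdot)$, and this limit exists in every $C^m$ because the bound $\|\,t\,\partial_t[e^{-2\kappa_0}(t/t_0)^{-2p}g_{ij}]\,\|_{C^m}\le\kappa_m(t/t_0)^{-a}$ makes $\partial_t$ of the rescaled metric integrable in $dt$; integrating the tail from $t$ to $\infty$ produces precisely the decay rate $(t/t_0)^{-a}$ asserted in (\ref{exp1})--(\ref{exp7}), and likewise gives the companion estimate for $t\,\partial_t g_{ij}$ and, via the definition of the second fundamental form together with the (lower-order) lapse and shift corrections controlled above, the estimate (\ref{exp7}) for $K_{ij}$; the estimate for $g^{ij}$ follows by inverting. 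The estimates for $\varphi$ (against $\hat\phi$ from (\ref{bg1})) and for $E_i$ and $B_{ij}$ are obtained the same way from the rescaled evolution equations for the scalar and electromagnetic fields, in which the expansion provides the damping responsible for the extra factor $(t/t_0)^{-1-a}$. The one genuinely delicate point is the limit $\tfrac1t g_{0i}\to\tfrac{1}{(n-2)p+1}H^{jl}\gamma_{jil}$: the limiting value is \emph{not} zero and must be extracted by substituting the leading-order asymptotics of $g_{ij}$, $g_{00}$ and $\varphi$ into the gauge-reduced evolution equation for $g_{0i}$ and solving the resulting leading-order linear ODE in $t$, whose relevant exponent combination is exactly $(n-2)p+1$.

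I expect the main obstacle to lie in this last asymptotic analysis of $g_{0i}$ together with the bookkeeping required to make the $C^m$ estimates genuinely uniform in $m$ --- in particular, verifying that the gauge-source terms and the additional Maxwell contributions to the stress-energy (\ref{Tmunu}) do not degrade the decay rate $a$ at any derivative order, and that the constant $a$ and the limiting profiles are consistent across all the quantities in (\ref{exp1})--(\ref{exp7}). The substantive analytic work --- the energy functionals, their monotonicity, the bootstrap, and the causal-completeness argument --- has already been carried out in proving Theorem \ref{maintheorem}; Theorem \ref{2ndtheorem} is then largely a matter of organizing the output of that argument, passing to the limit $t\to\infty$, and pinning down the limiting coefficients.
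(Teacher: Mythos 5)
Your proposal is correct and follows essentially the same route as the paper: one first establishes boundedness of all higher-order rescaled energies (Theorem~\ref{thmorderm}), converts the Sobolev bounds to $C^m$ decay via Sobolev embedding (Proposition~\ref{asym}), builds $\Psi$ from the gauge/torus coordinates used in the global existence argument, and uses the causal-localization estimate (Proposition~\ref{geo1}) to confine future causal paths from $B_{1/4}(0)$ inside $B_{5/8}(0)$. Your more explicit ODE analysis of the $g_{0i}$ asymptotic coefficient $\tfrac{1}{(n-2)p+1}$ is consistent with the paper's approach, though the paper simply defers that computation to Ringstr\"om's Section~10 and checks that the Maxwell stress-energy term $F_{0\sigma}F_i{}^\sigma - \tfrac{1}{2(n-1)}g_{0i}F_{\alpha\beta}F^{\alpha\beta}$ decays fast enough not to disturb it.
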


We remark that while most of these inequalities clearly indicate decay, two of them appear not to do so: the second and the third, involving electromagnetic fields. We note, however, that the electromagnetic fields are vector components relative to coordinates in which the metric is expanding. If one considers locally measured fields (factoring out the expansion), then these fields do decay.

We also remark that these decay results do \emph{not} show that the developments of the perturbed data sets decay to the original background metrics directly. Rather, one obtains decay \emph{only} if one adds a diffeomorphism, and also adds the fiducial metric $H$ on $B_{5/8}$. It may be that sharper and more direct decay rates can be proven. We do not pursue this question here.

Finally, we note that while the results we prove here are generalizations of Theorem 2 in \cite{Ringstrom09},  we have not gone on to prove a generalization of Ringstr\"om's Theorem 3, which applies his Theorem 2 to prove global stability for  a class of locally spatially homogeneous spacetimes. Such results likely could be obtained for spatially homogeneous spacetimes containing electromagnetic fields; we do not, however, consider that issue here.

\subsection{Outline of the Proof}

A key feature of the proofs of Theorems \ref{maintheorem} and \ref{2ndtheorem} is the spatial localizability of the analysis, noted above. This allows the analysis to be carried out independently on each open set $U$ satisfying the hypotheses of Theorem \ref{maintheorem}. However in order to avoid working on regions with free boundaries, it is useful to patch the data set on $U$ into a set of background solution data on $\mathbf{T}^n\setminus U$, and then study the development of the patched-together data on $\mathbf{T}^n$. The expansion behavior of the background solutions as well as their perturbations guarantee that the development of the data on sufficiently small subsets of $U$ is independent of the externally patched-in data; hence the proof can be done via analysis on these patched data sets.

In general, the patched initial data sets violate the constraints in an annular region around $U$. It is thus necessary to formulate a global stability analysis that works for initial data sets which violate the constraint equations. We begin to set up such an analysis in Section \ref{Reform}. To start, we modify the field equations by introducing gauge source functions $\mathcal{D}^\mu$ and $\mathcal{G}$, and using them to hyperbolize the field equations. The gauge source function $\mathcal{D}^\mu$ replaces the contracted Christoffel symbols of the perturbed unknown metric by that of the background metric we perturb around and therefore is related to the ``wave coordinate gauge", while $\mathcal{G}$, if it is zero, corresponds to the Lorentz gauge of electromagnetism. For handling a global stability problem, hyperbolization of the field equations is not enough. We also need to add to the field equations some correction quantities that are expressed in terms of the gauge source functions. The purpose of these correction terms is to partially decouple the field equations to linear order in the field perturbed, and to insert damping terms in these equations. These features are helpful in proving stability for the modified system. Since we are interested in global stability of the original field equations, we reformulate the equations in such a way that the gauge source functions satisfy a system of hyperbolic equations which admit zero as a solution, and so that it is possible to prepare initial data for the modified equations so that the gauge source functions and their first order time derivatives vanish initially. Thus, the gauge source functions vanish identically, and by proving global stability for the modified equations, we obtain global stability for our original field equations.

We express our reformulated equations as PDEs for variables which are essentially the differences between the perturbed fields and the corresponding background fields. More specifically, we work with
$u:= 1 + g_{00}$, $u_i:= g_{0i}$, $\psi:= \phi - \hat{\phi}$, $A_0$, and $A_i$, all of which vanish for the background solution, and also  $\gamma_{ij}:= (t/t_0)^{-2p} g_{ij}$ which is time independent for $g_{ij} = \hat{g}_{ij}$. We further make a change of time coordinate from $t$ to $\tau$, such that $t \partial_t = \partial_\tau$, by defining $\tau = \ln (t/t_0)$. The reason for doing this is to eliminate the time dependences of the background fields. Thus we obtain a system of equations (\ref{refm1}) - (\ref{refm6}) such that each equation is in the form of a hyperbolic equation with dissipation and dispersion, plus some extra terms. Note that in the modified system, the equations for $u$, $\psi$ and $A_0$ are decoupled to first order from the equations for the other field variables.

In Section \ref{energies}, we define a sequence of energy functionals for the field variables, and we specify the bootstrap assumptions which we use to prove global existence in $\tau$. The bootstrap assumptions state that for all $\tau \in I =[0,s)$ for some unspecified $s$, we have solutions to the field equations, and the energy functionals  are controlled by some small number $\epsilon$.  Note that the specific bootstrap assumptions we use here are not optimal; they are chosen because they are sufficient to carry out the global existence argument, and because they make the estimation of the nonlinear terms (following the algorithm for estimates introduced in \cite{Ringstrom08}) applicable in our case.

One of the key tools for proving global existence is the set of differential inequalities (\ref{ine1}) - (\ref{ine5}) for the energy functionals. We derive these in Section \ref{energies}, and we also show that as a consequence of the hierachical structure of the equations, the differential inequalities  exhibit an hierachical structure as well. Relying on these inequalities, we are able to show (in Section \ref{S:global}) that the bootstrap assumptions can be improved. That is, combining the the bootstrap assumptions with \eqref{ine1}-\eqref{ine5}, we can show that a more rigorous version of the bootstrap assumptions holds. Bootstrap improvement is then used (also in Section \ref{S:global}) to verify the ``open" portion of an ``open-closed" type argument which show that indeed, the interval $I$ on which solutions exist extends to $s=\infty$. We thereby prove global existence. 

The remaining work in proving our two theorems is first to show that  the geodesics which start from a subset of the domain $U \times \{t_0\}$ are complete (we do this in Section \ref{S:geodesic}), and then to verify the asymptotic expansions (i.e., the decay results) (\ref{asym1}) - (\ref{asym8}) for the fields in the MGH-development of our initial data, as stated in Theorem \ref{2ndtheorem}; we do this in Section \ref{S:asymptotic}. We conclude our proof of the main theorems in Section \ref{S:proof}, and make concluding remarks  in Section \ref{S:conclusion}.

\section{ Field Equation Reformulation}
\label{Reform}

As noted in Section \ref{Intro}, as a consequence of the accelerated expansion of the background solutions, we can carry out the analysis in spatially local regions, which for convenience are each patched into a set of data for a background solution on $\mathbf{T}^n$ (generally different for each local region). We defer discussion of the details of the patching to Section \ref{S:proof}. For now, we presume that the patching has been done, and that we are consequently working with a set of data on $\mathbf{T}^n$ 
which is (in an appropriate sense) a small perturbation of a set of background field data, but does
 not necessarily satisfy the constraints everywhere. The MGH-development of the patched data on $\mathbf{T}^n$ consists of fields defined on the spacetime manifold $\mathbf{T}^n \times I$ for some interval $I$. Working on $\mathbf{T}^n \times I$, and working with fields which are small perturbations of the background fields, we can always choose coordinates $(x^i, t)$ with $x^i$ global periodic spatial coordinates.

The aim of the reformulation of the field equations (\ref{fe1})-(\ref{Tmunu}) we carry out here is to replace them by alternative equations which are manifestly hyperbolic and lead to well-posedness even if the constraints are not satisfied, and also to make sure that the reformulated equations can be used to develop a set of energy functionals which are controlled in time and lead to global existence. The replacement equations are obtained by adding terms to equations (\ref{fe1})-(\ref{Tmunu}) which can be made to vanish via gauge choice. We carry out the reformulation in two steps. In the first step, we add gauge terms which result in equations which are manifestly hyperbolic for the fields $g_{\mu \nu}, \phi$ and $A_{\mu}$ (with components defined using the coordinate basis $(x^i, t)$  defined above). In the second step, we rewrite the field equations in terms of the variables $u, u_i, \gamma_{ij}, \psi, A_0,$ and $A_i$ in a certain semi-decoupled form which is very useful for the analysis.

\subsection{Reformulation I: Hyperbolization}
\label{RefI}

The gauge functions we use to carry out the first stage of reformulation of the field equations (\ref{fe1})-(\ref{Tmunu}) are
\begin{align}
\label{gauge}
\mathcal{D}^\mu &:= \hat{\Gamma}^\mu - \Gamma^\mu,\\
\mathcal{G} &:= \nabla^\mu A_\mu,
\end{align}
where $\Gamma^\mu:= \frac{1}{2} g^{\alpha \beta}g^{\mu \nu}(\partial_{\alpha} g_{\beta \nu} +\partial_{\beta} g_{\nu \alpha} - \partial_{\nu} g_{\alpha \beta})$ is the contracted Christoffel symbol for the metric $g_{\alpha \beta}$, where $\hat{\Gamma}^\mu := \Gamma^\mu(\hat{g}) = \frac{np}{t} \delta^\mu_0$ is the same for the background metric $\hat{g}$, and where $\nabla$ is the covariant derivative compatible with $g_{\alpha \beta}$. Note that $\mathcal{D}^\mu$ is not covariant. However, we define $\mathcal{D}_\nu:= g_{\nu \mu} \mathcal{D}^\mu$, and also $\nabla_\mu \mathcal{D}_\nu := \partial_\mu \mathcal{D}_\nu - \Gamma_{\mu\nu}^\gamma \mathcal{D}_\gamma$.

Using these gauge quantities, we define modified versions of the Ricci and Faraday tensors
\begin{align}
\bar{R}_{\mu\nu} &= R_{\mu\nu} + \nabla_{(\mu}\mathcal{D}_{\nu)},\\
\bar{F}_{\mu\nu} &= F_{\mu\nu} + g_{\mu\nu} (\mathcal{G} - \mathcal{D}_\gamma A^\gamma),
\end{align}
and we also define certain arrays of correction terms
\begin{align}
M^{[g]}_{\mu\nu} &:= \frac{2p}{t}\left[ \begin{array} {cc} - \mathcal{D}^0 & \mathcal{D}_i\\ \mathcal{D}_i & 0 \end{array} \right], \\
M^{[\phi]} &:= - g^{\mu\nu} \mathcal{D}_\mu \partial_\nu \phi ,\\
M^{[A]}_{\nu} &:= - g^{\alpha\beta}\mathcal{D}_\alpha F_{\beta\nu} + \frac{2p}{t} g_{0\nu} (\mathcal{D}_\gamma A^\gamma - \mathcal{G}).
\end{align}
We then construct the following set of ``gauge-modified"  Einstein-Maxwell-scalar$_{\{n, V_0, \lambda\}}$ field equations
\begin{align}
\label{me1} \bar{R}_{\mu\nu}  + M^{[g]}_{\mu\nu} & =\partial_\mu \phi \partial_\nu \phi + \frac{2}{n-1} V(\phi) g_{\mu\nu} \nonumber\\ &\quad+ F_{\mu\sigma} F_\nu\,^\sigma - \frac{1}{2(n-1)} g_{\mu\nu} F_{\rho\sigma} F^{\rho\sigma} ,\\
\label{me2} g^{\alpha\beta} \partial_\alpha \partial_\beta \phi - \Gamma^\mu \partial_\mu \phi - V'(\phi) + M^{[\phi]} &= 0,\\
\label{me3} \nabla^\mu \bar{F}_{\mu\nu} + M^{[A]}_\nu &= 0 ,
\end{align}
noting the following properties: (i) If the gauge quantities $\mathcal{D}^\mu$ and $\mathcal{G}$ vanish,  then this system (\ref{me1})-(\ref{me3}) is satisfied if and only if the Einstein-Maxwell-scalar$_{\{n, V_0, \lambda\}}$ system (\ref{fe1})-(\ref{Tmunu}) is satisfied. (ii) The gauge modified system (\ref{me1})-(\ref{me3}) is manifestly (second order) hyperbolic for the fields $(g, \phi, A)$.

In view of these two properties of the gauge-modified system, our goal now is to show that for any given set of initial data\footnote{We use the tildes here to denote the fact that the data on $\mathbf{T}^n$ has been constructed via patching of local data on $U\subset \Sigma^n$ to background data on  $\mathbf{T}^n \setminus U$.} 
  $(\tilde h, \tilde K, \tilde \varphi, \tilde \pi, \tilde B,\tilde  E)$ (on the torus $\mathbf{T}^n$) for the Einstein-Maxwell-scalar$_{\{n, V_0, \lambda\}}$ system (\ref{fe1})-(\ref{Tmunu}), there is a corresponding set of initial data $(g, \partial_t  g, \varphi, \partial_t \varphi, A, \partial_t A)$ for (\ref{me1})-(\ref{me3}) which enforces the condition that the gauge quantities $\mathcal{D}^\mu$ and $\mathcal{G}$ vanish for as long as the solution exists. We show in the next subsection that this can be done.

\subsection{Initial Data and Gauge Choice}
\label{GaugeChoice}
There are many choices of the data  $(g, \partial_t  g, \varphi, \partial_t \varphi, A, \partial_t A)$ which are consistent with a given set of initial data $( \tilde{h}, \tilde{K}, \tilde{\varphi}, \tilde{\pi}, \tilde{B}, \tilde{E})$ specified on $\mathbf{T}^n$; this is a manifestation of the gauge freedom in the Einstein-Maxwell-scalar field system. The following choice restricts this freedom, and in so doing, it enforces the condition that the gauge functions $\mathcal{D}^\mu$ and $\mathcal{G}$ vanish at the initial time\footnote{Note that $\mathcal{D}_0=0$ corresponds to (\ref{init3}), $\mathcal{D}_l=0$ corresponds to (\ref{init4}), and $\mathcal{G}=0$ corresponds to the first part of (\ref{init7}).}
$t_0$:
\begin{align}
\label{init1}
g_{ij}(t_0, \cdot) = \tilde{h} (\partial_i, \partial_j) , \quad g_{00}(t_0, \cdot) = -1, \quad g_{0j}(t_0, \cdot) = 0,\\
\partial_t g_{ij}(t_0, \cdot) = 2 \tilde{K}(\partial_i, \partial_j),\\
\label{init3}
\partial_t g_{00}(t_0, \cdot) = 2\hat{\Gamma}^0(t_0, \cdot) - 2\,\mathrm{tr} \tilde{K},\\
\label{init4}
\partial_t g_{0l}(t_0, \cdot) =\left[ -\tilde{h}_{li} \hat{\Gamma}^i + \frac{1}{2}\tilde{h}^{ij}(2\partial_i \tilde{h}_{jl} - \partial_l \tilde{h}_{ij})\right](t_0, \cdot),\\
\phi(t_0, \cdot) = \tilde{\varphi}, \quad \partial_t\phi(t_0, \cdot) = \tilde{\pi},\\
A_0(t_0, \cdot) = 0,\quad A_i(t_0, \cdot) = \tilde{A}_i(x),\\
\label{init7}\
\partial_t A_0 (t_0, \cdot) = \tilde{h}^{ij} \partial_i A_j (t_0, \cdot),\quad \partial_t A_i (t_0, \cdot) = -\tilde{E}_i,
\end{align}
where $\tilde{A}_k(x)$ is a solution of $\partial_i \tilde{A}_j(x) - \partial_j \tilde{A}_i(x) = \tilde{B}_{ij}(x)$.\footnote{We choose an appropriate $\tilde{A}_i(x)$ in Section \ref{S:proof}.}

The vanishing of the first time derivatives of  $\mathcal{D}^\mu$ and $\mathcal{G}$ on (portions of) the initial surface follows not from further restrictions on the gauge choice, but rather from the constraint equations (\ref{cstrt1})-(\ref{cstrt3}). The constraints are not satisfied everywhere (recall the consequences of the patching of the initial data) but they do hold on a subset $S$ of the initial hypersurface. To see that $\partial_t \mathcal{D}^\mu$ and $\partial_t \mathcal{G}$ vanish on $S$, we first calculate the quantity $G_{\mu \nu} -T_{\mu \nu}$, \emph{assuming that the gauge-modified field equations hold}, rather than the original system (\ref{fe1})-(\ref{Tmunu}); we obtain
\begin{equation}
\label{eq1}
G_{\mu\nu} - T_{\mu\nu} = - \nabla_{(\mu}\mathcal{D}_{\nu)} + \frac{1}{2}(\nabla^\rho \mathcal{D}_\rho) g_{\mu\nu} - M^{[g]}_{\mu\nu} + \frac{1}{2}(g^{\alpha\beta} M^{[g]}_{\alpha\beta}) g_{\mu\nu}.
\end{equation}
If we now let $e_{\perp}$ be the unit normal to the initial surface and let $X$ be any vector tangent to the surface, and if we contract (\ref{eq1}) with $e_{\perp}$ and $X$ while assuming that the constraints (\ref{cstrt1})-(\ref{cstrt3}) hold on $S$, then we obtain
\begin{equation}
- \frac{1}{2} (e_{\perp})^\mu X^\nu (\partial_\mu \mathcal{D}_\nu + \partial_\nu \mathcal{D}_\mu) = 0.
\end{equation}
Setting $t=t_0$, and noting that since $\mathcal{D}_i(t_0,\cdot)=0$ we must have $X^\nu \partial_\nu \mathcal{D}_\mu (t_0,\cdot) = 0$, it follows that
\begin{equation}
\partial_t \mathcal{D}_i(t_0, \cdot) = 0 \mbox{ on } S \subseteq \mathbf{T}^n.
\end{equation}
Arguing similarly, but now contracting (\ref{eq1}) twice with $e_{\perp}$  while assuming that the constraints hold at $t_0$, we obtain
\begin{equation}
\partial_t \mathcal{D}_0(t_0, \cdot) = 0  \mbox{ on } S \subseteq \mathbf{T}^n.
\end{equation}
Finally, if we contract  (\ref{me3}) with $e_{\perp}$, then (setting $t=t_0$) we may use the constraint (\ref{cstrt3}) to argue that $e_{\perp}^\nu \nabla^\mu F_{\mu\nu} (t_0,\cdot) = 0$; combining this with the vanishing of $\mathcal{D}_\mu(t_0, \cdot)$, $\mathcal{G}(t_0, \cdot)$ and $\partial_t \mathcal{D}_\mu(t_0, \cdot)$ (as shown above), we are left with
\begin{equation}
\partial_t \mathcal{G}(t_0, \cdot) = 0 \mbox{ on } S \subseteq \mathbf{T}^n.
\end{equation}

Now that we have determined that for any choice of initial data we may choose a gauge so that $\mathcal{G}$,
$\mathcal{D}_\mu$, and their first time derivatives vanish on the subset $S$ of the initial slice, if we wish to show that
$\mathcal{G}$ and $\mathcal{D}_\mu$ vanish in a spacetime neighborhood of $S$ (within its domain of dependence) it is sufficient to show that these quantities satisfy homogeneous wave equations. To do this, we first take the divergence of
equation (\ref{eq1}), noting that
\begin{equation}
\nabla^\mu T_{\mu\nu} = - M^{[\phi]} \partial_\nu \phi - F_\nu\,^\sigma (M^{[A]}_\sigma + \partial_\sigma (\mathcal{G} - \mathcal{D}_\gamma A^\gamma)).
\end{equation}
It follows that, presuming the gauge-modified equations have smooth solutions, we obtain
\begin{equation}
\label{Dhyperbolic}
g^{\alpha\beta} \partial_\alpha \partial_\beta \mathcal{D}_\mu + Q_\mu\,^{\alpha\beta} \partial_\alpha \mathcal{D}_\beta + S_\mu\,^\nu \partial_\nu \mathcal{G} + V_\mu\,^\nu \mathcal{D}_\nu + W_\mu \mathcal{G} =0
\end{equation}
for smooth functions $Q_\mu\,^{\alpha\beta}$, $ S_\mu\,^\nu$, $V_\mu\,^\nu$ and $W_\mu$.
Similarly, taking the divergence of (\ref{me3}), we obtain
\begin{equation}
\label{Ghyperbolic}
g^{\alpha\beta} \partial_\alpha \partial_\beta \mathcal{G} + H^\mu \partial_\mu \mathcal{G} + I^{\alpha\beta}\partial_\alpha \mathcal{D}_\beta + J \mathcal{G} + L^\mu \mathcal{D}_\mu = 0.
\end{equation}
for smooth functions  $H^\mu$, $I^{\alpha\beta}$, $J$ and $L^\mu$. This pair of equations together constitutes the desired homogeneous hyperbolic system for  $\mathcal{G}$ and $\mathcal{D}_\mu$.

\subsection{Reformulation II: Perturbation Variables and First Order Semi-Decoupling }
The system of PDEs (\ref{me1})-(\ref{me3}) for the field variables $(g_{\mu \nu}, A_\mu, \phi)$ together with the system (\ref{Dhyperbolic})-(\ref{Ghyperbolic}) for the gauge quantities $(\mathcal{D}_\mu, \mathcal{G})$ constitute a coupled PDE system which could be used to argue local existence of solutions. To be able to show long time existence, however, it is advantageous to replace the variables $(g_{\mu \nu}, A_\mu, \phi)$ by others which are closely related to perturbations of the background solution; following \cite{Ringstrom09}, we choose to work with $u := 1 + g_{00}, u_i := g_{0i},  \gamma_{ij} := (t/t_0)^{-2p} g_{ij}, \psi := \phi - \hat{\phi}, A_0$, and $A_i$.

In deriving (from (\ref{me1})-(\ref{me3})) the evolution PDEs for these new variables, we wish to segregate those terms 
which are linear in perturbations of the background solution (i.e., linear in $u, u_i,  \gamma_{ij} , \psi,  A_0$, and $A_i$) 
from those which are higher order. Doing this, we obtain the following
\begin{align}
\label{ref1} - g^{\mu\nu}\partial_\mu \partial_\nu u + \frac{(n+2)p}{t} \partial_t u + \frac{2p[n(p-1) +1]} {t^2} u \nonumber\\ - \frac{8}{\lambda t} \partial_t \psi - \frac{2\lambda p(np-1)}{t^2}\psi +\, \Delta'_{00} = 0,\\
\label{ref2} - g^{\mu\nu}\partial_\mu \partial_\nu u_i + \frac{np}{t} \partial_t u_i + \frac{p(n-2)(2p-1)}{t^2}u_i \nonumber\\ - \frac{4}{\lambda t} \partial_i \psi - \frac{2p}{t} g^{lm}\Gamma_{lim} + \, \Delta'_{0i} = 0,\\
\label{ref3} - g^{\mu\nu}\partial_\mu \partial_\nu \gamma_{ij} + \frac{np}{t} \partial_t \gamma_{ij} + \frac{2p}{t^2} \left( \lambda (np-1) \psi- u \right) \gamma_{ij} + \, \Delta'_{ij} =0,\\
\label{ref4} - g^{\mu\nu}\partial_\mu \partial_\nu \psi + \frac{np}{t} \partial_t \psi + \frac{2(np-1)}{t^2}\psi - \frac{2}{\lambda t^2} u + \Delta'_\psi = 0,\\
\label{ref5} - g^{\mu\nu}\partial_\mu \partial_\nu A_0 + \frac{(n+2)p}{t} \partial_t A_0 + \frac{np(2p - 1)}{t^2} A_0 + \Delta'_e = 0,\\
\label{ref6} - g^{\mu\nu}\partial_\mu \partial_\nu A_i + \frac{(n-2)p}{t} \partial_t A_i + \frac{2p}{t} \partial_i A_0 + \Delta'_{b,i} = 0.
\end{align}
Here, the quantities $\Delta'_{\mu\nu}$, $\Delta'_\psi$, $\Delta_e'$ and $\Delta_{b,i}'$ are all of quadratic order or higher in the quantities that vanish for the background solution  (such as $u, u_i, \psi, A_0, A_i$, their first derivatives, and the first derivatives of $\gamma_{ij}$)\footnote{We include the primes on these $\Delta'$ quantities here because below, we replace them by slightly changed $\Delta$ quantities without primes.}. More specifically, one can write 
\begin{align}
\Delta'_{00} &= \tilde{\Delta}_{00} - 2\left[ F_{0\sigma} F_0\,^\sigma - \frac{1}{2(n-1)} g_{00} F_{\alpha\beta} F^{\alpha\beta}\right], \\
\Delta'_{0i} &= \tilde{\Delta}_{0i} - 2\left[ F_{0\sigma} F_i\,^\sigma - \frac{1}{2(n-1)} g_{0i} F_{\alpha\beta} F^{\alpha\beta}\right], \\
\Delta'_{ij} &= \tilde{\Delta}_{ij} - 2\left(\frac{t}{t_0}\right)^{-2p}\left[ F_{i\sigma} F_j\,^\sigma - \frac{1}{2(n-1)} g_{ij} F_{\alpha\beta} F^{\alpha\beta}\right], \\
\quad \Delta'_\psi &= \tilde{\Delta}_\psi,\\
\Delta'_e &=  2\left[ ( g^{\mu\rho} \Gamma_{0\mu}^\sigma - \frac{p}{t} g^{\rho\sigma} ) \partial_\rho A_\sigma - \frac{p}{t} \partial_t A_0 - \frac{p}{t} u ( \hat{\Gamma}^\mu A_\mu - g^{\rho\sigma} \partial_\rho A_\sigma ) \right], \\
\Delta'_{b,i} &= 2\left[ \frac{p}{t} F_{0i}+ g^{\rho\sigma} \Gamma_{i\rho}^\mu \partial_\sigma A_\mu  - \frac{p}{t} g_{0i} ( \hat{\Gamma}^\mu A_\mu - g^{\rho\sigma} \partial_\rho A_\sigma )\right],
\end{align}
where $\tilde{\Delta}_{\mu\nu}$ and $\tilde{\Delta}_\psi$ are written out explicitly in the work of Ringstr\"om; see equations (51), (52), (55) and (56) in \cite{Ringstrom09}, together with equations (82)-(87) and (92)-(93) in \cite{Ringstrom08}.

We note that the parameters $n, p$, and $\lambda$ appear in (\ref{ref1})-(\ref{ref6}) because the new variables $(u, u_i,  \gamma_{ij},  \psi, A_0, A_i)$ are defined in terms of the background solution $(\hat{g}, \hat{\phi}, \hat{A})_{ \{ t_0, p, c_0,\kappa \}}$. We also note that the quantity $g^{\mu \nu}$ as well as other metric and Christoffel quantities appearing in (\ref{ref1})-(\ref{ref6}) may be viewed as functions of $u, u_i$, and $\gamma_{ij}$.

The calculations leading from the PDEs (\ref{me1})-(\ref{me3}) to (\ref{ref1})-(\ref{ref4}) are essentially the same as those done in proving Lemma 3 in \cite{Ringstrom09}. To derive (\ref{ref5}) and (\ref{ref6}), we calculate as follows: 
\begin{align}
0&= \nabla^\mu \bar{F}_{\mu\alpha} + M^{[A]}_\alpha \nonumber\\
&=g^{\rho\sigma} (\partial_\rho F_{\sigma\alpha} - \Gamma_{\sigma\rho}^\gamma F_{\gamma\alpha} - \Gamma_{\alpha\rho}^\gamma F_{\sigma\gamma} ) + \partial_\alpha (\nabla^\mu A_\mu) \nonumber\\ &\quad - \mathcal{D}^\mu \partial_\mu A_\alpha - A_\mu \partial_\alpha \mathcal{D}^\mu + \frac{2p}{t} g_{0\alpha} ( \hat{\Gamma}^\mu A_\mu - g^{\rho\sigma} \partial_\rho A_\sigma ) \nonumber\\
&= g^{\rho\sigma} \partial_\rho \partial_\sigma A_\alpha - g^{\rho\sigma} \partial_\alpha \partial_\rho A_\sigma - \Gamma^\mu(\partial_\mu A_\alpha - \partial_\alpha A_\mu) \nonumber\\ &\quad- g^{\rho\sigma} \Gamma_{\alpha\rho}^\mu \partial_\sigma A_\mu + g^{\rho\sigma} \Gamma_{\alpha\rho}^\mu \partial_\mu A_\sigma + \partial_\alpha (g^{\rho\sigma} \partial_\rho A_\sigma) \nonumber\\ &\quad- \partial_\alpha ( \Gamma^\mu A_\mu) - \mathcal{D}^\mu \partial_\mu A_\alpha - A_\mu \partial_\alpha \mathcal{D}^\mu + \frac{2p}{t} g_{0\alpha} ( \hat{\Gamma}^\mu A_\mu - g^{\rho\sigma} \partial_\rho A_\sigma )\nonumber\\
&=g^{\rho\sigma} \partial_\rho \partial_\sigma A_\alpha - \hat{\Gamma}^\mu \partial_\mu A_\alpha - A_\mu \partial_\alpha \hat{\Gamma}^\mu - g^{\rho\sigma} \Gamma_{\alpha\rho}^\mu \partial_\sigma A_\mu \nonumber\\ &\quad+ g^{\rho\sigma} \Gamma_{\alpha\rho}^\mu \partial_\mu A_\sigma + (\partial _\alpha g^{\rho\sigma}) \partial_\rho A_\sigma + \frac{2p}{t} g_{0\alpha} ( \hat{\Gamma}^\mu A_\mu - g^{\rho\sigma} \partial_\rho A_\sigma ) \nonumber\\
&= g^{\rho\sigma} \partial_\rho \partial_\sigma A_\alpha - \hat{\Gamma}^\mu \partial_\mu A_\alpha - A_\mu \partial_\alpha \hat{\Gamma}^\mu - 2g^{\rho\sigma} \Gamma_{\alpha\rho}^\mu \partial_\sigma A_\mu \nonumber\\ &\quad+ \frac{2p}{t} g_{0\alpha} ( \hat{\Gamma}^\mu A_\mu - g^{\rho\sigma} \partial_\rho A_\sigma ).
\end{align}
For $\alpha = 0$, we obtain
\begin{align}
0&= \nabla^\mu \bar{F}_{\mu0} + M^{[A]}_0\nonumber\\
&= g^{\rho\sigma} \partial_\rho \partial_\sigma A_0 - \hat{\Gamma}^\mu \partial_\mu A_0 - A_\mu \partial_0 \hat{\Gamma}^\mu - 2g^{\rho\sigma} \Gamma_{0\rho}^\mu \partial_\sigma A_\mu \nonumber\\ &\quad+ \frac{2p}{t} g_{00} ( \hat{\Gamma}^\mu A_\mu - g^{\rho\sigma} \partial_\rho A_\sigma )\nonumber\\
&= g^{\rho\sigma} \partial_\rho \partial_\sigma A_0 - \frac{np}{t} \partial_t A_0 + \frac{np}{t^2} A_0 - 2g^{\rho\sigma} \Gamma_{0\rho}^\mu \partial_\sigma A_\mu \nonumber\\
&\quad - \frac{2p}{t} \hat{\Gamma}^\mu A_\mu + \frac{2p}{t}g^{\rho\sigma}\partial_\rho A_\sigma + \frac{2p}{t}(1+g_{00}) ( \hat{\Gamma}^\mu A_\mu - g^{\rho\sigma} \partial_\rho A_\sigma )\nonumber\\
&= g^{\rho\sigma} \partial_\rho \partial_\sigma A_0 - \frac{np}{t} \partial_t A_0 - \frac{np(2p-1)}{t^2} A_0 \nonumber\\
&\quad+\left[- 2g^{\rho\sigma} \Gamma_{0\rho}^\mu \partial_\sigma A_\mu + \frac{2p}{t} g^{\rho\sigma}\partial_\rho A_\sigma + \frac{2p}{t} u ( \hat{\Gamma}^\mu A_\mu - g^{\rho\sigma} \partial_\rho A_\sigma )\right]\nonumber\\
&= g^{\rho\sigma} \partial_\rho \partial_\sigma A_0 - \frac{(n+2)p}{t} \partial_t A_0 - \frac{np(2p-1)}{t^2} A_0 \nonumber\\
&\quad+2\left[ (\frac{p}{t} g^{\rho\sigma} - g^{\mu\rho} \Gamma_{0\mu}^\sigma) \partial_\rho A_\sigma + \frac{p}{t} u ( \hat{\Gamma}^\mu A_\mu - g^{\rho\sigma} \partial_\rho A_\sigma ) + \frac{p}{t} \partial_t A_0\right].
\end{align}
For $\alpha = i$, we obtain
\begin{align}
0&= \nabla^\mu \bar{F}_{\mu i} + M^{[A]}_i\nonumber\\
&= g^{\rho\sigma} \partial_\rho \partial_\sigma A_i - \hat{\Gamma}^\mu \partial_\mu A_i - A_\mu \partial_i \hat{\Gamma}^\mu - 2g^{\rho\sigma} \Gamma_{i\rho}^\mu \partial_\sigma A_\mu \nonumber\\ &\quad+ \frac{2p}{t} g_{0i} ( \hat{\Gamma}^\mu A_\mu - g^{\rho\sigma} \partial_\rho A_\sigma )\nonumber\\
&= g^{\rho\sigma} \partial_\rho \partial_\sigma A_i - \frac{(n-2)p}{t} \partial_t A_i - \frac{2p}{t} \partial_i A_0 \nonumber\\ &\quad+ 2\left[-\frac{p}{t} F_{0i} +\frac{p}{t} g_{0i} ( \hat{\Gamma}^\mu A_\mu - g^{\rho\sigma} \partial_\rho A_\sigma )- g^{\rho\sigma} \Gamma_{i\rho}^\mu \partial_\sigma A_\mu \right].
\end{align}

The evolution PDEs (\ref{ref1})-(\ref{ref6}) involve a number of factors of $t^{-1}$. These can be conveniently removed by multiplying all of the equations (\ref{ref1})-(\ref{ref6}) by $t^2$,  by replacing $t$ by $\tau := \ln(t/t_0)$ (so that consequently $\partial_{\tau}= t \partial_t$), and by setting $\tau_0:=\ln(t_0)$; we then obtain the following system of PDEs, to be solved for $(u(x,\tau)=g_{00}(x, e^{\tau+\ln(t_0)})=g_{00}(x, e^{\tau+\tau_0}), u_i(x,\tau)= etc...)$:
\begin{align}
\label{refm1} \tilde{\Box}_g u + [(n+2)p - 1] \partial_\tau u + 2p[n(p-1) +1] u \nonumber\\- \frac{8}{\lambda} \partial_\tau \psi - 2\lambda p(np-1) \psi  + \Delta_{00} = 0,\\
\label{refm2} \tilde{\Box}_g u_i + (np - 1) \partial_\tau u_i + p(n-2)(2p-1) u_i \nonumber\\ - e^{\tau+\tau_0} [\frac{4}{\lambda} \partial_i \psi + 2p g^{lm}\Gamma_{lim}] + \Delta_{0i} = 0,\\
\label{refm3} \tilde{\Box}_g \gamma_{ij} + (np - 1) \partial_\tau \gamma_{ij} + 2p \left( \lambda (np-1) \psi- u \right) \gamma_{ij} + \Delta_{ij} =0,\\
\label{refm4} \tilde{\Box}_g \psi + (np - 1) \partial_\tau \psi + 2(np-1) \psi - \frac{2}{\lambda} u + \Delta_\psi = 0,\\
\label{refm5} \tilde{\Box}_g A_0 + [(n+2)p - 1] \partial_\tau A_0 + np(2p - 1)  A_0 + \Delta_e = 0,\\
\label{refm6} \tilde{\Box}_g A_i + [(n-2)p -1] \partial_\tau A_i + 2p e^{\tau+\tau_0} \partial_i A_0 + \Delta_{b,i} = 0.
\end{align}
Here we define the hyperbolic operator $\tilde{\Box}_g$ via
\[   
\tilde{\Box}_g := -g^{00}\partial^2_\tau - 2 e^{\tau+\tau_0}g^{0i}\partial_\tau \partial_i - e^{2(\tau+\tau_0)} g^{ij}\partial_i \partial_j ,
\]
and we define the nonlinear remainder terms $\Delta_{\mu\nu}$, $\Delta_\psi$, $\Delta_e$ and $\Delta_{b,i}$ via
\begin{align}
&\Delta_{00} := (1 + g^{00}) \partial_\tau u + e^{2(\tau+\tau_0)} \Delta'_{00},\\
&\Delta_{0i} := (1 + g^{00}) \partial_\tau u_i + e^{2(\tau+\tau_0)} \Delta'_{0i},\\
&\Delta_{ij} := (1 + g^{00}) \partial_\tau \gamma_{ij} + e^{2(\tau+\tau_0)} \Delta'_{ij},\\
&\Delta_\psi := (1 + g^{00}) \partial_\tau \psi + e^{2(\tau+\tau_0)} \Delta'_\psi,\\
&\Delta_e := (1 + g^{00}) \partial_\tau A_0 + e^{2(\tau+\tau_0)} \Delta'_e,\\
&\Delta_{b,i} := (1 + g^{00}) \partial_\tau A_i + e^{2(\tau+\tau_0)} \Delta'_{b,i}.
\end{align}

Our stability analysis in the rest of the paper focuses on the evolution PDEs (\ref{refm1})-(\ref{refm6}). We note that, \emph{if we ignore the $\Delta$ terms in these equations and also ignore (for the moment) the dependence of the wave operator $\tilde{\Box}_g$ on the metric}, then we have the following semi-decoupled setup: (i) Equation (\ref{refm5}) involves $A_0$ alone. (ii) Equation (\ref{refm6}) involves only $A_i$ and $A_0$.
(iii) Equations (\ref{refm1}) and (\ref{refm4}) together form a coupled system for $u$ and $\psi$, independent of the other variables. (iv) Equation (\ref{refm3}) involves only $\gamma_{ij}$ and $u$ and $\psi$. (v) Equation (\ref{refm2}) involves $u_i$ and $u$ and $\psi$ and $\gamma_{ij}$, but not the electromagnetic variables. This semi-decoupled structure plays an important role in the analysis we carry out below.

\section{Energy Functionals, Bootstrap Assumptions, Estimates, and Differential Inequalities }
\label{energies}
The key tool for proving global existence for solutions to a Cauchy problem for a hyperbolic PDE system is the set of energy functionals for the system. For a general (nonlinear) system, these functionals are neither canonically determined nor unique. However they may be obtained, the necessary properties are (i) that (perhaps assuming certain a priori conditions on the fields) their future evolution is bounded, and (ii) that they control appropriate norms of the field variables. In this section, we obtain energy functionals for the  PDE system (\ref{refm1})-(\ref{refm6}), we state our bootstrap assumptions and derive estimates to control nonlinear terms (such as those contained in the $\Delta$ terms) and finally we  derive the differential inequalities which, so long as the bootstrap assumptions hold, control the evolution of the energy functionals.

\subsection{The Energy Functionals and their Differential Inequalities for a Fixed Spacetime Metric}

The field equations (\ref{refm1})-(\ref{refm6}) all involve the differential operator $\tilde{\Box}_g$, which of course involves the metric $g$. By definition of the field variables  $u, u_i$, and $\gamma_{ij}$, the metric $g$ is closely tied to them, and consequently $g$ evolves with them. This fact (a key feature of Einstein's gravitational field equations) must be taken into account in setting up the energy functionals and verifying their evolution properties.

However, we start our discussion of the energy functionals by artificially decoupling the metric as it appears in $\tilde{\Box}_g$ and elsewhere in coefficients of (\ref{refm1})-(\ref{refm6}) from the evolving fields  $u, u_i$, and $\gamma_{ij}$. We do this, in this subsection, by fixing a (generally time dependent) spacetime metric $g$, basing $\tilde{\Box}_g$ and the other coefficients on this fixed $g$, and treating $u, u_i$, and $\gamma_{ij}$ as independent. We recouple g and the field variables in the next subsection, with the help of bootstrap assumptions.

In defining the energy functionals, it is useful to treat the field variables in blocks, according to the semi-decoupling of the (linearized) evolution equation, noted above. So we start by working with just $u$ and $\psi$, which we write collectively as the 2-vector 
\begin{eqnarray}
 \mathbf{u} =  \left[\begin{array}{c}u\\ \psi\end{array}\right].
\end{eqnarray}
The evolution PDEs (\ref{refm1}) and (\ref{refm4}) then take the form 
\begin{equation}
\label{boldueqn}
\tilde{\Box}_g \mathbf{u} + C \partial_{\tau} \mathbf{u}+ J \mathbf{u} + \mathbf{\Delta} = 0,
\end{equation}
where $J$ and $C$ are the constant matrices
\begin{eqnarray}
J = \left[\begin{array}{cc} 2p[n(p-1)+1]&-2\lambda p(np-1)\\-2/\lambda&2(np-1)\end{array}\right], \nonumber\\ \quad C =  \left[\begin{array}{cc}(n+2)p-1&-8/\lambda\\0&np-1\end{array}\right],  \nonumber
\end{eqnarray}
and $\mathbf{\Delta}$ is the vector of nonlinear terms
\begin{eqnarray}
\mathbf{\Delta} = \left[\begin{array}{c} \Delta_{00}\\ \Delta_\psi\end{array}\right].
\end{eqnarray}

Since the only effect of the electromagnetic field on the evolution of $\mathbf{u}$ is via the $\mathbf{\Delta}$ term, the form of the energy functionals we use for $\mathbf{u}$ and the calculation of their evolutions are formally very similar to what appears in Section 4 of Ringstr\"om in \cite{Ringstrom09}. Following that narrative, we first obtain a matrix $T$  (see equation (76) in \cite{Ringstrom09}) which diagonalizes the matrix $J$, and then setting $\tilde{\mathbf{u}} = T^{-1}\mathbf{u}$, $\tilde{\mathbf{\Delta}} = T^{-1}\mathbf{\Delta}$, $\tilde{J} = T^{-1}JT = diag\{\lambda_-, \lambda_+\}$ and $\tilde{C} = T^{-1}CT$, we have
\begin{equation}
\label{lp}
\tilde{\Box}_g \tilde{\mathbf{u}} + \tilde{C} \partial_{\tau}\tilde{\mathbf{u}} + \tilde{J} \tilde{\mathbf{u}} + \tilde{\mathbf{\Delta}} = 0.
\end{equation}

We next define the base energy functional we shall use for $\tilde{\mathbf{u}}$.  Letting $c_{\mathcal{LS}}$, $b_1$, and $b_2$ be any set of positive definite constants, and using the notation $\tilde{g}^{ij} = e^{2(\tau + \tau_0)} g^{ij}$, we define\footnote{Here and below, the subscripts $\mathcal{LS}$ refer to the fact that $\mathbf{u}$ involves the metric ``lapse function" as well as the scalar field.} 
\begin{equation} 
\label{E_LS}
\mathcal{E}_{\mathcal{LS}} [\tilde{\mathbf{u}}] := \frac{1}{2} \int_{\mathbf{T}^n} (- g^{00}\partial_\tau \tilde{\mathbf{u}}^t \partial_\tau \tilde{\mathbf{u}} + \tilde{g}^{ij} \partial_i \tilde{\mathbf{u}}^t \partial_j \tilde{\mathbf{u}} - 2c_{\mathcal{LS}} g^{00}\tilde{\mathbf{u}}^t \partial_\tau \tilde{\mathbf{u}} + b_1\tilde{u}^2 + b_2\tilde{\psi}^2 )dx, 
\end{equation}
where the superscript $t$ on $\tilde{\mathbf{u}}^t$ indicates the transpose of $\tilde{\mathbf{u}}$. We then verify the following:
\begin{lemma}
\label{lemma1}
Let g be a fixed Lorentz metric on the spacetime $\mathbf{T}^n \times I$ for some interval $I$, and let $\tilde{\mathbf{u}}$ be a solution to equation (\ref{lp}) on $\mathbf{T}^n \times I$ for some choice of the constants $p>1$ and $n \ge 3$ characterizing an Einstein-Maxwell-scalar field theory. There exist positive constants $\eta_{\mathcal{LS}}$, $\zeta_{\mathcal{LS}}$, $b_1$, $b_2$ and $c_{\mathcal{LS}}$ (depending on $n$ and $p$) such that if we define $\mathcal{E}_{\mathcal{LS}} $ using the last three of these constants (as in \eqref{E_LS}) and if we require 
\begin{equation}
\label{c1} 
| g^{00} + 1 | \le \eta_{\mathcal{LS}}, 
\end{equation}
then
\begin{equation} \label{bound1}
\mathcal{E}_{\mathcal{LS}} [\tilde{\mathbf{u}}]  \ge \zeta_{\mathcal{LS}} \int_{\mathbf{T}^n} (\partial_\tau \tilde{\mathbf{u}}^t \partial_\tau \tilde{\mathbf{u}} + \tilde{g}^{ij} \partial_i \tilde{\mathbf{u}}^t \partial_j \tilde{\mathbf{u}} + \tilde{\mathbf{u}}^t  \tilde{\mathbf{u}} )dx 
\end{equation}
and
\begin{equation}
\label{integralineq}
 \frac{d\mathcal{E}_{\mathcal{LS}} [\tilde{\mathbf{u}}]} {d\tau}  \le - 2\eta_{\mathcal{LS}} \mathcal{E}_{\mathcal{LS}} + \int_{\mathbf{T}^n} [-(\partial_\tau \tilde{\mathbf{u}}^t + c_{\mathcal{LS}} \tilde{\mathbf{u}}^t)\mathbf{\tilde\Delta} + \Delta_{{\mathcal{LS}} }[\tilde{\mathbf{u}}]] dx,
 \end{equation}
where $\Delta_{{\mathcal{LS}} }[\tilde{\mathbf{u}}]$ is a function quadratic in $\tilde{\mathbf{u}}$ and its derivatives, defined in equation (\ref{shit}) below.
\end{lemma}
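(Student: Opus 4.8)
The plan is to adapt, essentially verbatim, the energy argument of Section~4 of \cite{Ringstrom09}: the electromagnetic field enters the $\mathbf{u}$-block of the system only through the inhomogeneity $\tilde{\mathbf{\Delta}}$, so the algebra governing the energy is exactly Ringstr\"om's, and the work splits into fixing the constants, proving \eqref{bound1}, and computing $d\mathcal{E}_{\mathcal{LS}}/d\tau$. For the constants: since $\tilde{\mathbf{u}}=(\tilde u,\tilde\psi)^t$ already diagonalizes $J$, with $\tilde J=\mathrm{diag}\{\lambda_-,\lambda_+\}$, I would first record that $\lambda_\pm$ are real and strictly positive, using $\lambda^2(n-1)p=4$ to compute $\operatorname{tr}J=2(np^2+p-1)>0$, $\det J=4np(np-1)(p-1)>0$ for $n\ge 3$, $p>1$, and checking that the discriminant is nonnegative. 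Positivity of $\lambda_\pm$ permits the choice $b_1=\lambda_-$, $b_2=\lambda_+$, which is the decisive one: upon differentiating $\mathcal{E}_{\mathcal{LS}}$ it cancels the $O(1)$ cross term between $\tilde{\mathbf{u}}$ and $\partial_\tau\tilde{\mathbf{u}}$ produced by $-\partial_\tau\tilde{\mathbf{u}}^t\tilde J\tilde{\mathbf{u}}$ against the one produced by $\tfrac12\partial_\tau(b_1\tilde u^2+b_2\tilde\psi^2)$, leaving only $O(c_{\mathcal{LS}})$ cross terms. Then $c_{\mathcal{LS}}>0$ is taken small enough that the residual $4\times4$ quadratic form in $(\partial_\tau\tilde{\mathbf{u}},\tilde{\mathbf{u}})$ occurring below --- a strongly negative $\partial_\tau\tilde{\mathbf{u}}$-block governed by the symmetric part of $\tilde C=T^{-1}CT$, a mildly negative $\tilde{\mathbf{u}}$-block of size $c_{\mathcal{LS}}$ governed by $\tilde J$, and $O(c_{\mathcal{LS}})$ off-diagonal --- is negative definite with margin $\gtrsim c_{\mathcal{LS}}$ (here one uses that the eigenvalues $(n+2)p-1$, $np-1$ of $C$ are positive and, if necessary, normalizes $T$ so that $\tilde C$ has positive-definite symmetric part); and $\eta_{\mathcal{LS}}$ is then chosen small relative to $c_{\mathcal{LS}}$, with $\zeta_{\mathcal{LS}}\le\tfrac12$.

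For \eqref{bound1} I would argue pointwise on the integrand. From \eqref{c1}, $-g^{00}\ge 1-\eta_{\mathcal{LS}}$ and $|g^{00}|\le 1+\eta_{\mathcal{LS}}$, so Young's inequality gives $|2c_{\mathcal{LS}}g^{00}\tilde{\mathbf{u}}^t\partial_\tau\tilde{\mathbf{u}}|\le\tfrac{1-\eta_{\mathcal{LS}}}{2}|\partial_\tau\tilde{\mathbf{u}}|^2+\kappa|\tilde{\mathbf{u}}|^2$ with $\kappa\le\tfrac12\min(b_1,b_2)$ once $c_{\mathcal{LS}}$ is small; hence the portion of the $\mathcal{E}_{\mathcal{LS}}$-integrand not containing $\tilde g^{ij}$ is bounded below by $c_0(|\partial_\tau\tilde{\mathbf{u}}|^2+|\tilde{\mathbf{u}}|^2)$ for some $c_0>0$. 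The spatial-gradient term $\tfrac12\tilde g^{ij}\partial_i\tilde{\mathbf{u}}^t\partial_j\tilde{\mathbf{u}}$ occurs identically on both sides of \eqref{bound1}; since it is nonnegative ($\tilde g^{ij}$ being positive definite in the regime considered) and $\zeta_{\mathcal{LS}}\le\tfrac12$, it may be retained verbatim, which yields \eqref{bound1}.

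For \eqref{integralineq}, differentiate $\mathcal{E}_{\mathcal{LS}}$ in $\tau$ under the integral, eliminate $\partial_\tau^2\tilde{\mathbf{u}}$ by the identity $(-g^{00})\partial_\tau^2\tilde{\mathbf{u}}=2e^{\tau+\tau_0}g^{0i}\partial_i\partial_\tau\tilde{\mathbf{u}}+\tilde g^{ij}\partial_i\partial_j\tilde{\mathbf{u}}-\tilde C\partial_\tau\tilde{\mathbf{u}}-\tilde J\tilde{\mathbf{u}}-\tilde{\mathbf{\Delta}}$ read off from \eqref{lp}, and integrate the $\tilde g^{ij}\partial_i\partial_j$ contributions by parts in $x$ so their leading pieces cancel $\tfrac12\partial_\tau\!\int\tilde g^{ij}\partial_i\tilde{\mathbf{u}}^t\partial_j\tilde{\mathbf{u}}$ modulo a term carrying $\partial_\tau\tilde g^{ij}$. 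The $\tilde{\mathbf{\Delta}}$ contributions assemble precisely into $-\!\int(\partial_\tau\tilde{\mathbf{u}}^t+c_{\mathcal{LS}}\tilde{\mathbf{u}}^t)\tilde{\mathbf{\Delta}}$, matching the right-hand side of \eqref{integralineq}. What remains is the sum of: (i) a principal quadratic form $-\!\int\big[\partial_\tau\tilde{\mathbf{u}}^t\tilde C_s\partial_\tau\tilde{\mathbf{u}}+c_{\mathcal{LS}}(-g^{00})|\partial_\tau\tilde{\mathbf{u}}|^2+c_{\mathcal{LS}}\tilde{\mathbf{u}}^t\tilde J_s\tilde{\mathbf{u}}+(\text{surviving }O(c_{\mathcal{LS}})\text{ cross term})\big]$, with $\tilde C_s,\tilde J_s$ the symmetric parts; and (ii) a remainder consisting of every term carrying a derivative of a metric coefficient ($\partial_\tau g^{00}$, $\partial_i\tilde g^{ij}$, $\partial_\tau\tilde g^{ij}$) or a factor $g^{0i}$, together with the leftover multiple of $\int\tilde g^{ij}\partial_i\tilde{\mathbf{u}}^t\partial_j\tilde{\mathbf{u}}$. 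One then defines $\Delta_{\mathcal{LS}}[\tilde{\mathbf{u}}]$ to be exactly this remainder (the quantity named in the statement); it is manifestly quadratic in $\tilde{\mathbf{u}}$ and its first derivatives, and no smallness of it is claimed here, that being supplied later by the bootstrap assumptions. By the choice of constants, form (i) is $\le-2\eta_{\mathcal{LS}}$ times the non-gradient part of $\mathcal{E}_{\mathcal{LS}}$ and carries enough surplus to absorb $-2\eta_{\mathcal{LS}}$ times its gradient part as well (using \eqref{bound1} and the sign of $\tilde g^{ij}$); collecting terms gives \eqref{integralineq}.

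The hard part is the quadratic-form positivity underlying step (i): it does not follow from positivity of the spectra of $C$ and $J$ alone, since a matrix with positive spectrum may have indefinite symmetric part. It will require three features working together --- the exact cancellation of the $O(1)$ cross term via $b_1=\lambda_-$, $b_2=\lambda_+$; enough control of the explicit entries of $\tilde C$ (and, if necessary, a scaling of the diagonalizing matrix $T$) to make $\tilde C_s$, when shifted by $c_{\mathcal{LS}}I$ and paired with $c_{\mathcal{LS}}\tilde J_s$, positive definite; and the hierarchy $\eta_{\mathcal{LS}}\ll c_{\mathcal{LS}}\ll 1$, so that the parasitic $c_{\mathcal{LS}}(-g^{00})|\partial_\tau\tilde{\mathbf{u}}|^2$ and the residual off-diagonal terms stay dominated by $-\partial_\tau\tilde{\mathbf{u}}^t\tilde C_s\partial_\tau\tilde{\mathbf{u}}$. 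A secondary, purely bookkeeping, point is to keep straight where the spatial-gradient terms flow: $\tfrac12\partial_\tau\tilde g^{ij}=\tilde g^{ij}+\tfrac12e^{2(\tau+\tau_0)}\partial_\tau g^{ij}$ has no definite sign for a generic fixed metric, so it must be consigned to $\Delta_{\mathcal{LS}}$ rather than exploited to produce decay.
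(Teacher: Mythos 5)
Your proposal follows the same overall strategy as the paper's proof---energy identity, cancellation of the $O(1)$ cross terms via the choice of $b_1, b_2$, positive definiteness of the symmetric part of $\tilde{C}$ (the paper imports this from Lemma 5 of \cite{Ringstrom09}), a smallness hierarchy $\eta_{\mathcal{LS}}\ll c_{\mathcal{LS}}\ll 1$---and you correctly flag the positivity of $\tilde{C}_s$ as the load-bearing algebraic fact. Your choice $b_1=\lambda_-$, $b_2=\lambda_+$ is slightly different from the paper's $b_1=\lambda_- + c_{\mathcal{LS}}\tilde C_{11}$, $b_2 = \lambda_+ + c_{\mathcal{LS}}\tilde C_{22}$ (yours leaves diagonal $O(c_{\mathcal{LS}})$ cross terms rather than only off-diagonal ones), but that is harmless.

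There is, however, a genuine gap in your handling of the spatial-gradient terms. You consign the \emph{entire} $\tfrac12\partial_\tau\tilde{g}^{ij}\,\partial_i\tilde{\mathbf{u}}^t\partial_j\tilde{\mathbf{u}}$ contribution to the remainder $\Delta_{\mathcal{LS}}$, on the grounds that $\tfrac12\partial_\tau\tilde g^{ij}=\tilde g^{ij}+\tfrac12 e^{2(\tau+\tau_0)}\partial_\tau g^{ij}$ has no definite sign for a generic fixed metric. But once that is done, your ``principal quadratic form (i)'' contains only $\partial_\tau\tilde{\mathbf{u}}$ and $\tilde{\mathbf{u}}$ terms, so it can never produce $-2\eta_{\mathcal{LS}}$ times the gradient portion $\tfrac12\int\tilde{g}^{ij}\partial_i\tilde{\mathbf{u}}^t\partial_j\tilde{\mathbf{u}}\,dx$ of $\mathcal{E}_{\mathcal{LS}}$; there is simply no ``surplus'' in the non-gradient block that could dominate an independent positive-definite quadratic form in $\partial_i\tilde{\mathbf{u}}$. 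The paper avoids this by adding and subtracting $(p-1)\tilde{g}^{ij}$: the term $-\big((p-1)+c_{\mathcal{LS}}\big)\tilde{g}^{ij}\partial_i\tilde{\mathbf{u}}^t\partial_j\tilde{\mathbf{u}}$ is \emph{kept} in the main (negative-definite) part of \eqref{Energyderiv}, which is what supplies the required $-2\eta_{\mathcal{LS}}$ multiple of the gradient energy; only the residual $\big[\tfrac12\partial_\tau\tilde{g}^{ij}+(p-1)\tilde{g}^{ij}\big]\partial_i\tilde{\mathbf{u}}^t\partial_j\tilde{\mathbf{u}}$ --- which vanishes identically for the background metric and is $O(\epsilon)$ once the bootstrap assumption is in force --- goes into $\Delta_{\mathcal{LS}}$ as in \eqref{shit}. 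So the gradient term \emph{is} exploited to produce decay; your characterization of the split as ``purely bookkeeping'' understates its role, and omitting the $-(p-1)\tilde{g}^{ij}$ piece from form (i) makes the claimed inequality \eqref{integralineq} unattainable.
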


\begin{proof}
Examining inequality (\ref{bound1}), and using the choice of $\eta_{\mathcal{LS}}$ to ensure that $g^{00}$ is close to $-1$, we see that so long as $c^2_{\mathcal{LS}} < b_i$ for $i = 1, 2$,  there must exist a constant $\zeta_{\mathcal{LS}}$ for which inequality (\ref{bound1}) holds. To achieve $c^2_{\mathcal{LS}} < b_i$, 
we set 
\begin{equation}
\label{b_1b_2}
b_1 = \lambda_- + c_{\mathcal{LS}} \tilde{C}_{11}, \quad b_2 = \lambda_+ + c_{\mathcal{LS}} \tilde{C}_{22},
\end{equation}
where $ \tilde{C}_{11}$ and  $\tilde{C}_{22}$ are components of the matrix $\tilde{C}$ defined above, and we note that so long as we choose  $c_{\mathcal{LS}}$ to be sufficiently small, this inequality follows. 

Now consider the time derivative of $\mathcal{E}_{\mathcal{LS}}  [\tilde{\mathbf{u}}]$; we calculate
\begin{align}
\label{Energyderiv}
\frac{d\mathcal{E}_{\mathcal{LS}} } {d\tau} = &\int_{\mathbf{T}^n} \{ - \frac{1}{2} \partial_\tau \tilde{\mathbf{u}}^t (\tilde{C} + \tilde{C}^t ) \partial_\tau \tilde{\mathbf{u}} - \partial_\tau \tilde{\mathbf{u}}^t \tilde{J} \tilde{\mathbf{u}} - \partial_\tau \tilde{\mathbf{u}}^t \tilde{\mathbf\Delta} \nonumber \ \\&- ((p-1) + c_{\mathcal{LS}}) \tilde{g}^{ij} \partial_i \tilde{\mathbf{u}}^t \partial_j \tilde{\mathbf{u}} + c_{\mathcal{LS}} | \partial_\tau \tilde{\mathbf{u}} |^2 - c_{\mathcal{LS}} \tilde{\mathbf{u}}^t \tilde{C} \partial_\tau \tilde{\mathbf{u}} - c_{\mathcal{LS}} \tilde{\mathbf{u}}^t \tilde{J} \tilde{\mathbf{u}}
\nonumber\\ &-c_{\mathcal{LS}}\tilde{\mathbf{u}}^t \tilde{\mathbf{\Delta}}+ b_1 \tilde{u} \partial_\tau \tilde{u} + b_2 \tilde{\psi} \partial_\tau \tilde{\psi} + \Delta_{{\mathcal{LS}} } [ \tilde{\mathbf{u}}] \} dx,
\end{align}
where 
\begin{align}
\label{shit} \Delta_{{\mathcal{LS}} }  [ \tilde{\mathbf{u}}] := &- c_{\mathcal{LS}} (g^{00} + 1) \partial_\tau \tilde{\mathbf{u}}^t \partial_\tau \tilde{\mathbf{u}} - 2c_{\mathcal{LS}} (\tilde{g}^{0i} \partial_i \tilde{\mathbf{u}}^t \partial_\tau \tilde{\mathbf{u}} + (\partial_i \tilde{g}^{0i}) \tilde{\mathbf{u}}^t \partial_\tau \tilde{\mathbf{u}})\nonumber\\ &-c_{\mathcal{LS}} (\partial_j \tilde{g}^{ij}) \partial_i \tilde{\mathbf{u}}^t \tilde{\mathbf{u}} - \frac{1}{2} \partial_\tau g^{00} \partial_\tau \tilde{\mathbf{u}}^t \partial_\tau \tilde{\mathbf{u}} + \left[ \frac{1}{2} \partial_\tau \tilde{g}^{ij} + (p-1) \tilde{g}^{ij} \right] \partial_i \tilde{\mathbf{u}}^t \partial_j \tilde{\mathbf{u}} \nonumber\\ &- \partial_i \tilde{g}^{0i} \partial_\tau \tilde{\mathbf{u}}^t \partial_\tau \tilde{\mathbf{u}} - \partial_j \tilde{g}^{ij} \partial_\tau \tilde{\mathbf{u}}^t \partial_i \tilde{\mathbf{u}} - c_{\mathcal{LS}} \partial_\tau g^{00}  \tilde{\mathbf{u}}^t \partial_\tau \tilde{\mathbf{u}}.
\end{align}
Note that the constant $``(p-1)"$ appearing in equation (\ref{Energyderiv}) is labeled as $H$ in the analogous equation in \cite{Ringstrom09}; here, to avoid introducing an extra constant, we leave it in the form $``(p-1)"$.
Using (\ref{b_1b_2}) for $b_1$ and $b_2$, and choosing $c_{\mathcal{LS}}$ to be sufficiently small, we now obtain
\begin{align}
\frac{d\mathcal{E}_{\mathcal{LS}}} {d\tau} = &\int_{\mathbf{T}^n} \{ - \frac{1}{2} \partial_\tau \tilde{\mathbf{u}}^t (\tilde{C} + \tilde{C}^t ) \partial_\tau \tilde{\mathbf{u}} + c_{\mathcal{LS}} | \partial_\tau \tilde{\mathbf{u}} |^2 - ( p-1 + c_{\mathcal{LS}}) \tilde{g}^{ij} \partial_i \tilde{\mathbf{u}}^t \partial_j \tilde{\mathbf{u}}\nonumber\\ &- c_{\mathcal{LS}} \left[ \lambda_- \tilde{u}^2 + \lambda_+ \tilde{\psi}^2\right] - c_{\mathcal{LS}} \left[ \tilde{C}_{21} \tilde{\psi} \partial_\tau \tilde{u} + \tilde{C}_{12} \tilde{u} \partial_\tau \tilde{\psi} \right]\} dx,\nonumber \\ &+ \int_{\mathbf{T}^n} [-(\partial_\tau \tilde{\mathbf{u}}^t + c_{\mathcal{LS}} \tilde{\mathbf{u}}^t)\mathbf{\tilde\Delta} + \Delta_{{\mathcal{LS}}}[\tilde{\mathbf{u}}]] dx.
\end{align}
It follows from Lemma 5 of \cite{Ringstrom09}, that the matrix $\tilde{C} + \tilde{C}^t$ is positive definite. Thus,  provided  $g^{00}$ is close enough to $-1$ and provided  $c_{\mathcal{LS}}$ is chosen to be small enough, we obtain (\ref{integralineq}) for some constant $\eta_{\mathcal{LS}}$.
\end{proof}

It is important to note that the differential inequality (\ref{integralineq}) is useful for controlling the evolution of $\mathcal{E}_{\mathcal{LS}}$ only if we can establish estimates for $\mathbf{\tilde\Delta}$ and $\Delta_{{\mathcal{LS}}}$. We do this below, using the bootstrap assumptions. 

Since we need to control derivatives of $\mathbf{u}$ as well as $\mathbf{u}$ itself, it is very useful to work with energy functionals which involve higher spatial derivatives. For that purpose, we define the following sequence of energy functionals (labeled by positive integers $m$)
\begin{equation} 
E_{\mathcal{LS},m }= \sum_{|\alpha| \le m} \mathcal{E}_{\mathcal{LS}}[\partial^\alpha \tilde{\mathbf{u}}], 
\end{equation}
where $\partial^{\alpha}$ indicates the usual multi-index spatial derivative, of order $|\alpha|$. It follows readily from this definition of $E_{\mathcal{LS},m }$ and from Lemma \ref{lemma1} that the following differential inequality holds:
\newtheorem{corollary}{Corollary}
\begin{corollary}
\label{cor1}
Presuming the hypotheses of Lemma \ref{lemma1}, $E_{\mathcal{LS},m }$ satisfies
\begin{align}
\label{integralineqk}
 \frac{dE_{\mathcal{LS},m }}{d\tau} \le &- 2\eta_{\mathcal{LS}} E_{\mathcal{LS},m }\nonumber\\ &+ \sum_{|\alpha| \le m} \int_{\mathbf{T}^n} [(\partial^\alpha\partial_\tau \tilde{\mathbf{u}}^t + c_{\mathcal{LS}} \partial^\alpha\tilde{\mathbf{u}}^t)(-\partial^\alpha \tilde{\mathbf{\Delta}} + [ \tilde{\Box}_g, \partial^\alpha] \tilde{\mathbf{u}}) + \Delta_{\mathcal{LS}}[\partial^\alpha\tilde{\mathbf{u}}]] dx.
 \end{align}
\end{corollary}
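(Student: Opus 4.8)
The plan is to reduce the corollary to a term-by-term application of Lemma~\ref{lemma1} to each spatial derivative $\partial^\alpha\tilde{\mathbf{u}}$ with $|\alpha|\le m$, followed by a summation. The key observation is that, although $\partial^\alpha\tilde{\mathbf{u}}$ does not itself solve \eqref{lp}, it does solve an equation of exactly the same structural form, with a modified inhomogeneity.

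First I would commute $\partial^\alpha$ through \eqref{lp}. The matrices $\tilde{C}$ and $\tilde{J}$ are constant, so they commute with $\partial^\alpha$, as does $\partial_\tau$; only the variable-coefficient operator $\tilde{\Box}_g$ fails to commute. Writing $\partial^\alpha(\tilde{\Box}_g\tilde{\mathbf{u}}) = \tilde{\Box}_g(\partial^\alpha\tilde{\mathbf{u}}) - [\tilde{\Box}_g,\partial^\alpha]\tilde{\mathbf{u}}$ and applying $\partial^\alpha$ to \eqref{lp}, one obtains
\begin{equation}
\tilde{\Box}_g(\partial^\alpha\tilde{\mathbf{u}}) + \tilde{C}\,\partial_\tau(\partial^\alpha\tilde{\mathbf{u}}) + \tilde{J}\,(\partial^\alpha\tilde{\mathbf{u}}) + \bigl(\partial^\alpha\tilde{\mathbf{\Delta}} - [\tilde{\Box}_g,\partial^\alpha]\tilde{\mathbf{u}}\bigr) = 0 .
\end{equation}
Thus $\partial^\alpha\tilde{\mathbf{u}}$ satisfies \eqref{lp} with $\tilde{\mathbf{\Delta}}$ replaced by the effective source $\partial^\alpha\tilde{\mathbf{\Delta}} - [\tilde{\Box}_g,\partial^\alpha]\tilde{\mathbf{u}}$.

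Next I would invoke Lemma~\ref{lemma1} with this substitution. It is worth noting that the proof of Lemma~\ref{lemma1} never uses the specific (nonlinear) form of $\tilde{\mathbf{\Delta}}$; it only uses that the equation has the form \eqref{lp} and that $|g^{00}+1|\le\eta_{\mathcal{LS}}$, both of which hold under the hypotheses of the corollary. Hence inequality \eqref{integralineq} applies to $\partial^\alpha\tilde{\mathbf{u}}$ with $\tilde{\mathbf{\Delta}}$ replaced by the effective source, giving, for each $\alpha$,
\begin{align}
\frac{d\mathcal{E}_{\mathcal{LS}}[\partial^\alpha\tilde{\mathbf{u}}]}{d\tau} \le{}& -2\eta_{\mathcal{LS}}\,\mathcal{E}_{\mathcal{LS}}[\partial^\alpha\tilde{\mathbf{u}}] \nonumber\\
&+ \int_{\mathbf{T}^n}\bigl[(\partial^\alpha\partial_\tau\tilde{\mathbf{u}}^t + c_{\mathcal{LS}}\partial^\alpha\tilde{\mathbf{u}}^t)(-\partial^\alpha\tilde{\mathbf{\Delta}} + [\tilde{\Box}_g,\partial^\alpha]\tilde{\mathbf{u}}) + \Delta_{\mathcal{LS}}[\partial^\alpha\tilde{\mathbf{u}}]\bigr]\,dx ,
\end{align}
where I have used that $\partial^\alpha$ commutes with $\partial_\tau$ and rewritten $-(\,\cdot\,)(\partial^\alpha\tilde{\mathbf{\Delta}} - [\tilde{\Box}_g,\partial^\alpha]\tilde{\mathbf{u}}) = (\,\cdot\,)(-\partial^\alpha\tilde{\mathbf{\Delta}} + [\tilde{\Box}_g,\partial^\alpha]\tilde{\mathbf{u}})$. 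Summing over all $\alpha$ with $|\alpha|\le m$ and using $E_{\mathcal{LS},m} = \sum_{|\alpha|\le m}\mathcal{E}_{\mathcal{LS}}[\partial^\alpha\tilde{\mathbf{u}}]$ then yields \eqref{integralineqk}.

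I expect no serious obstacle here. The only points requiring a little care are (i) confirming that Lemma~\ref{lemma1} is genuinely applicable with an arbitrary inhomogeneity in place of $\tilde{\mathbf{\Delta}}$ --- which holds because its proof is insensitive to the structure of $\tilde{\mathbf{\Delta}}$ --- and (ii) keeping track of the sign and placement of the commutator term $[\tilde{\Box}_g,\partial^\alpha]\tilde{\mathbf{u}}$ as it is folded into the effective source. Of course \eqref{integralineqk} becomes a usable a priori estimate only once $\partial^\alpha\tilde{\mathbf{\Delta}}$ and the commutator $[\tilde{\Box}_g,\partial^\alpha]\tilde{\mathbf{u}}$ are themselves bounded via the bootstrap assumptions, but that is the business of the following subsections, not of this corollary.
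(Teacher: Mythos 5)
Your proof is correct and follows essentially the same route the paper takes (the paper omits an explicit proof of Corollary~\ref{cor1}, but the analogous Corollary~\ref{cor2} is proven by exactly this commute-then-apply-the-lemma argument). Your care with the sign of the commutator term and the observation that Lemma~\ref{lemma1} is insensitive to the specific structure of the inhomogeneity are both accurate and exactly what is needed.
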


Thus far, we have developed a sequence of energy functionals only for the pair of field variables $u$ and $\psi$ (with artificially fixed $g$). To obtain a similar sequence of energy functionals for $u_i, \gamma_{ij}, A_0$, and $A_j$, again with the metric $g$ fixed, it is useful to work with solutions of a model PDE
\begin{equation}
\label{form2}
\tilde{\Box}_g v + \alpha \partial_\tau v + \beta v = F
\end{equation}
for the scalar function $v$,
where $\alpha>0$ and $\beta\ge0$ are constants, $g$ is a fixed Lorentz metric on the spacetime manifold $\mathbf{T}^n\times I$ (for $n\ge 3$), and $F$ is a fixed function on the spacetime.  We have the following:
\begin{lemma}
\label{vlemma}
Let $v$ be a solution of equation (\ref{form2}) on $\mathbf{T}^n\times I$, with $g, \alpha, \beta$ and $F$  as stated above. 
There are constants $\eta_c, \zeta > 0$ and $\gamma, \delta \ge 0$, depending on $\alpha$ and $\beta$, such that if the given metric satisfies 
\begin{equation}
\label{c2} 
| g^{00} + 1 | \le \eta_c, 
\end{equation}
and if we define the energy functional via
\begin{equation}
\label{E_V}
\mathcal{E}_{\mathcal{V}} [v] := \frac{1}{2} \int_{\mathbf{T}^n} (- g^{00} (\partial_\tau v)^2 + \tilde{g}^{ij} \partial_i v\partial_j v - 2\gamma g^{00}v \partial_\tau v + \delta v^2 )dx, 
\end{equation}
then $\mathcal{E}_{\mathcal{V}} [v]$ bounds the following quadratic integral
\begin{equation}
\label{bound}
\mathcal{E}_{\mathcal{V}}[v] \ge \zeta \int_{\mathbf{T}^n} [(\partial_\tau v)^2 + \tilde{g}^{ij} \partial_i v \partial_j v + \iota_\beta v^2] dx,
\end{equation}
where $\iota_\beta = 0$ if $\beta = 0$ and $\iota_\beta = 1$ otherwise, 
and $\mathcal{E}_{\mathcal{V}} [v]$ satisfies the differential inequality 
\begin{equation}
\label{monotonicity}
\frac{d\mathcal{E}_{\mathcal{V}}}{d\tau} \le - 2\eta_c \mathcal{E}_{\mathcal{V}} + \int_{\mathbf{T}^n} [(\partial_\tau v + \gamma v)F + \Delta_{\mathcal{V}} [v] ] dx,
\end{equation}
where $\Delta_{\mathcal{V}} [v]$ is given by (\ref{crap}) below. If $\beta =0$, then $\delta = \gamma = 0$.
\end{lemma}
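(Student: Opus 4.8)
The plan is to run the argument of Lemma~\ref{lemma1} in the scalar setting, reading (\ref{form2}) as the one-component analogue of the diagonalized system (\ref{lp}): $\beta$ plays the role of $\tilde{J}$, $\alpha$ plays the role of $\tilde{C}$, and $F$ plays the role of $-\tilde{\mathbf{\Delta}}$. First I would dispose of the coercivity bound (\ref{bound}). Completing the square in the zeroth-order part $-2\gamma g^{00} v \partial_\tau v + \delta v^2$ of the integrand in (\ref{E_V}) and using (\ref{c2}) to keep $-g^{00}$ within $\eta_c$ of $1$, one sees that $\mathcal{E}_{\mathcal{V}}[v]$ dominates a positive multiple of $\int_{\mathbf{T}^n}[(\partial_\tau v)^2 + \tilde{g}^{ij}\partial_i v \partial_j v + \iota_\beta v^2]\,dx$ as soon as $\gamma^2 < \delta$; the value of $\iota_\beta$ is dictated by the case split, since for $\beta = 0$ we will take $\gamma = \delta = 0$, in which case there is no $v^2$ term in $\mathcal{E}_{\mathcal{V}}$ and (\ref{bound}) with $\iota_\beta = 0$ is immediate.

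Next I would compute $d\mathcal{E}_{\mathcal{V}}/d\tau$ by differentiating (\ref{E_V}) under the integral sign, substituting (\ref{form2}) to trade the second time derivative by writing $-g^{00}\partial_\tau^2 v = \tilde{\Box}_g v + 2\tilde{g}^{0i}\partial_\tau\partial_i v + \tilde{g}^{ij}\partial_i\partial_j v = F - \alpha\partial_\tau v - \beta v + 2\tilde{g}^{0i}\partial_\tau\partial_i v + \tilde{g}^{ij}\partial_i\partial_j v$, and integrating by parts on $\mathbf{T}^n$ to move derivatives off the $\partial_i\partial_j v$ terms. As in (\ref{Energyderiv}) this produces: the dissipation $-\alpha(\partial_\tau v)^2$ from the kinetic term together with $+\gamma(\partial_\tau v)^2$ from the energy's cross term; the three $v\partial_\tau v$ contributions $-\beta v\partial_\tau v$ (from feeding the equation into the kinetic term), $+\delta v\partial_\tau v$ (from differentiating $\delta v^2$), and $-\gamma\alpha v\partial_\tau v$ (from feeding the equation into the cross term); a gradient contribution with net coefficient $-((p-1)+\gamma)\tilde{g}^{ij}\partial_i v\partial_j v$, the scalar analogue of the $-((p-1)+c_{\mathcal{LS}})$ coefficient in (\ref{Energyderiv}); a term $-\gamma\beta v^2$ again coming from the equation and the cross term; the source pairing $(\partial_\tau v + \gamma v)F$; and finally a collection of terms each carrying a $\tau$- or $x$-derivative of the coefficients $g^{00}, \tilde{g}^{0i}, \tilde{g}^{ij}$ or the off-diagonal coefficients $\tilde{g}^{0i}$ themselves, which I would package as $\Delta_{\mathcal{V}}[v]$, i.e.\ the expression (\ref{crap}).

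The choice that closes the estimate is $\delta := \beta + \gamma\alpha$, the scalar shadow of (\ref{b_1b_2}): with it the three $v\partial_\tau v$ contributions cancel identically, and --- because $\alpha$ is a scalar rather than a matrix --- there is no residual $v\partial_\tau v$ cross term left over (unlike the $\tilde{C}_{12}, \tilde{C}_{21}$ remainders in Lemma~\ref{lemma1}). One is then left with a \emph{diagonal} leading quadratic form in $(\partial_\tau v,\, \partial_i v,\, v)$ with entries $-(\alpha - \gamma)$, $-((p-1)+\gamma)\tilde{g}^{ij}$, and $-\gamma\beta$. Since $\alpha > 0$ and $p > 1$, this form is negative definite once $0 < \gamma < \alpha$ (and negative semidefinite, flat in the $v$-direction, precisely when $\beta = 0$ and $\gamma = 0$, matching $\iota_\beta$); comparing it against the coercivity bound (\ref{bound}) and against an easy upper bound for $\mathcal{E}_{\mathcal{V}}$ yields $d\mathcal{E}_{\mathcal{V}}/d\tau \le -2\eta_c\mathcal{E}_{\mathcal{V}} + \int_{\mathbf{T}^n}[(\partial_\tau v + \gamma v)F + \Delta_{\mathcal{V}}[v]]\,dx$ for some $\eta_c > 0$. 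All of $\gamma, \delta, \eta_c, \zeta$ are then visibly functions of $\alpha$ and $\beta$ (and the fixed background $p$) only, not of $g$, and the whole construction degenerates to $\gamma = \delta = 0$ when $\beta = 0$, as claimed.

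The main obstacle I anticipate is the same one that is slightly delicate in Lemma~\ref{lemma1}: checking that after committing to $\delta = \beta + \gamma\alpha$ there genuinely remains a strictly negative leading form, i.e.\ that the positive piece $+\gamma(\partial_\tau v)^2$ from the cross term does not swamp the dissipation $-\alpha(\partial_\tau v)^2$, and that the $v^2$-coefficient survives as the strictly negative number $-\gamma\beta$ when $\beta > 0$ (this is exactly why $\delta = 0$, and hence $\gamma = 0$, is \emph{forced} when $\beta = 0$, since otherwise $\mathcal{E}_{\mathcal{V}}$ would carry a $v^2$ term with no negative counterpart in $d\mathcal{E}_{\mathcal{V}}/d\tau$). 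This amounts to a quantitative smallness requirement on $\gamma$ in terms of $\alpha, \beta, p$, after which $\eta_c$ can be taken to be, say, half the least of the three diagonal decay rates measured against the energy. Everything else --- differentiation under the integral, integration by parts on the torus, and the bookkeeping of the metric-derivative terms into $\Delta_{\mathcal{V}}$ --- is routine.
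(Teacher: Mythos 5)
Your proposal is correct and follows essentially the same route as the paper's proof: differentiate $\mathcal{E}_{\mathcal{V}}$, substitute the equation, integrate by parts on $\mathbf{T}^n$, and choose $\delta$ so the $v\,\partial_\tau v$ contributions cancel (the paper fixes $\gamma=\alpha/2$, $\delta=\beta+\alpha^2/2$, a particular instance of your $\delta=\beta+\gamma\alpha$ with $0<\gamma<\alpha$), after which the leading quadratic form is manifestly negative definite and the remaining metric-derivative terms are absorbed into $\Delta_{\mathcal{V}}$. Your observation that $\gamma=\delta=0$ is forced when $\beta=0$ — because a positive $\delta v^2$ in the energy would have no matching $-\gamma\beta v^2$ dissipation — is exactly the reason the paper splits the two cases.
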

\begin{proof}
There are two separate cases to consider, depending on whether $\beta$ vanishes or not in the model PDE (\ref{form2}). If $\beta>0$, then we see that if we choose $\gamma = \alpha/2$ and $\delta = \beta + \alpha^2/2$, and if we choose $\eta_c$ so that $g^{00}$ is close to $-1$, then the inequality (\ref{bound}) holds. Calculating the time derivative of 
$\mathcal{E}_{\mathcal{V}}[v]$, we get
\begin{align}
\label{timeder}
\frac{d \mathcal{E}_{\mathcal{V}}[v] }{d \tau} = &\int_{\mathbf{T}^n} \{ - (\alpha - \gamma) (\partial_\tau v)^2 + (\delta - \beta - \gamma \alpha)v \partial_\tau v - \beta\gamma v^2\nonumber\\ \nonumber &- ((p-1) + \gamma) \tilde{g}^{ij} \partial_i v \partial_j v + (\partial_\tau v + \gamma v) F + \Delta_{\mathcal{V}}[v] \} dx,\\
\end{align}
with
\begin{align}
\label{crap}
\Delta_{\mathcal{V}} [v] = &- \gamma (\partial_i \tilde{g}^{ij}) v \partial_j v - 2 \gamma (\partial_i \tilde{g}^{0i}) v \partial_\tau v - 2 \gamma \tilde{g}^{0i} \partial_i v \partial_\tau v - (\partial_i \tilde{g}^{0i}) (\partial_\tau v)^2\nonumber\\ &- (\partial_j \tilde{g}^{ij})\partial_i v \partial_\tau v - \frac{1}{2} (\partial_\tau g^{00}) (\partial_\tau v)^2 + \left( \frac{1}{2} \partial_\tau \tilde{g}^{ij} + (p-1) \tilde{g}^{ij}\right) \partial_i v \partial_j v\nonumber\\ &-\gamma \partial_\tau g^{00} v \partial_\tau v - \gamma (g^{00} +1) (\partial_\tau v)^2.
\end{align}
If we now substitute in the above formulas for $\gamma$ and $\delta$, we have
\begin{align}
\frac{d \mathcal{E}_{\mathcal{V}}[v] }{d \tau} = &- \frac{1}{2} \int_{\mathbf{T}^n} \{ \alpha (\partial_\tau v)^2 + \alpha\beta v^2 + (2(p-1) + \alpha) \tilde{g}^{ij} \partial_i v \partial_j v \} dx\nonumber \\ &+ \int_{\mathbf{T}^n} \{(\partial_\tau v + \gamma v) F + \Delta_{\mathcal{V}}[v] \} dx.\nonumber
\end{align}
Combining this with the inequality (\ref{bound}), we determine that for some constant $\eta_c$, the differential inequality (\ref{monotonicity}) holds.

For the $\beta=0$ case, we choose $\gamma=0$ and $\delta=0$. We then readily check that, (for $g^{00}$ sufficiently close to $-1$), the energy inequality (\ref{bound}) holds. As well, using the calculation (\ref{timeder}), we readily verify (for the $\beta=0$ case) the differential inequality (\ref{monotonicity}).
\end{proof}

As with the energy functionals for $\mathbf{u}$, discussed above, it is useful to proceed from $\mathcal{E}_{\mathcal{V}}$ to a sequence of energy functionals involving higher derivatives of $v$:\begin{equation}
E_{\mathcal{V},m}[v] := \sum_{|\alpha| \le m} \mathcal{E}_{\mathcal{V}} [\partial^\alpha v]. 
\end{equation}
One then proves the following differential inequality result:
\begin{corollary}
\label{cor2}
Presuming that the hypotheses of Lemma \ref{vlemma} hold, the sequence of higher order energy functionals $E_{\mathcal{V},m}$ satisfy the following inequalities:
\begin{equation}
\label{EkIneq} 
\frac{dE_{\mathcal{V},m}}{d\tau} \le - 2\eta_c E_{\mathcal{V},m} + \sum_{|\alpha| \le m} \int_{\mathbf{T}^n} \{(\partial_\tau \partial^\alpha v + \gamma\, \partial^\alpha v)(\partial^\alpha F + [\tilde{\Box}_g, \partial^\alpha] v) + \Delta_{\mathcal{V}} [\partial^\alpha v] \} dx. 
\end{equation}
\end{corollary}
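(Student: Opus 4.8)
\emph{Proof proposal.} The plan is to reduce Corollary \ref{cor2} to Lemma \ref{vlemma} by applying the latter not to $v$ itself but to each spatially differentiated field $\partial^\alpha v$. The first step is to observe that, because $\tilde{\Box}_g$ is a second-order linear operator whose coefficients ($-g^{00}$, $-e^{\tau+\tau_0}g^{0i}$, $-e^{2(\tau+\tau_0)}g^{ij}$) are smooth functions on $\mathbf{T}^n\times I$, and because the constant-coefficient spatial multi-index derivatives $\partial^\alpha$ commute with $\partial_\tau$, applying $\partial^\alpha$ to the model equation (\ref{form2}) gives
\[
\tilde{\Box}_g(\partial^\alpha v) + \alpha\,\partial_\tau(\partial^\alpha v) + \beta\,(\partial^\alpha v) = \partial^\alpha F + [\tilde{\Box}_g,\partial^\alpha]v,
\]
where I have moved the commutator $[\tilde{\Box}_g,\partial^\alpha]v := \tilde{\Box}_g(\partial^\alpha v) - \partial^\alpha(\tilde{\Box}_g v)$ to the right-hand side. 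Thus $\partial^\alpha v$ is again a solution of a model PDE of the form (\ref{form2}), with the \emph{same} constants $\alpha$ and $\beta$ (in the notation of (\ref{form2})) but with source $F$ replaced by $\partial^\alpha F + [\tilde{\Box}_g,\partial^\alpha]v$.

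The second step is simply to invoke Lemma \ref{vlemma} for each such $\partial^\alpha v$. This is legitimate because the constants $\eta_c,\zeta,\gamma,\delta$ supplied by Lemma \ref{vlemma} depend only on $\alpha$ and $\beta$ and not on the source term, and because the smallness condition (\ref{c2}) on $g^{00}$ is exactly the standing hypothesis of Corollary \ref{cor2}; hence a single choice of these constants works simultaneously for all multi-indices with $|\alpha|\le m$. Lemma \ref{vlemma} then yields, for each such $\alpha$, the inequality
\[
\frac{d\mathcal{E}_{\mathcal{V}}[\partial^\alpha v]}{d\tau} \le -2\eta_c\,\mathcal{E}_{\mathcal{V}}[\partial^\alpha v] + \int_{\mathbf{T}^n}\bigl[(\partial_\tau\partial^\alpha v + \gamma\,\partial^\alpha v)(\partial^\alpha F + [\tilde{\Box}_g,\partial^\alpha]v) + \Delta_{\mathcal{V}}[\partial^\alpha v]\bigr]\,dx,
\]
with $\Delta_{\mathcal{V}}[\partial^\alpha v]$ being the expression (\ref{crap}) with $v$ replaced throughout by $\partial^\alpha v$. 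Summing these inequalities over all $|\alpha|\le m$ and using the definition $E_{\mathcal{V},m}[v] = \sum_{|\alpha|\le m}\mathcal{E}_{\mathcal{V}}[\partial^\alpha v]$ gives $\frac{dE_{\mathcal{V},m}}{d\tau}$ on the left and $-2\eta_c E_{\mathcal{V},m}$ as the leading term on the right, which is precisely (\ref{EkIneq}).

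I do not expect a genuine obstacle here; the argument is essentially bookkeeping once the commutator trick above is in place. The one point deserving a remark is a regularity caveat: the manipulation requires $v$ (and the metric coefficients) to be smooth enough that $\partial^\alpha v$ lies in the class for which the energy identity underlying Lemma \ref{vlemma} was derived, so that the integration-by-parts computations and the commutator term are meaningful; this is covered by the standing smoothness assumptions on the solutions we work with. A secondary, purely notational, caution is that the symbol $\alpha$ is used both for the damping constant in (\ref{form2}) and for the summation multi-index in (\ref{EkIneq}); I would keep the constant fixed throughout and let $\partial^\alpha$ range over multi-indices, exactly as in Corollary \ref{cor1}.
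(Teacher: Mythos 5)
Your proposal is correct and follows exactly the same route as the paper's (one-line) proof: differentiate the model equation (\ref{form2}), absorb the commutator $[\tilde{\Box}_g,\partial^\alpha]v$ into the source term so that $\partial^\alpha v$ satisfies a PDE of the same form, apply Lemma \ref{vlemma} to each $\partial^\alpha v$, and sum over $|\alpha|\le m$. The added remarks about the constants being independent of the source and about the overloaded use of the symbol $\alpha$ are accurate clarifications rather than departures from the paper's argument.
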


\begin{proof}
If one differentiates equation (\ref{form2}) and then applies Lemma \ref{vlemma}, the corollary immediately follows.
\end{proof}

To obtain the energy functionals for the fields $u_i, \gamma_{ij}, A_0$, and $A_j$, we now manipulate the  evolution equations 
for these fields so that, with varying specifications of the function $F$ and of the constants $\alpha$ and $\beta$, these evolution equations (for \emph{each of the components} of $u_i, \gamma_{ij}$, etc.) match the form of the model equation (\ref{form2}). For present purposes, we presume that $g$ is a fixed Lorentz metric, satisfying the condition $| g^{00} + 1 | \le \eta$, with $\eta$ a constant to be determined below.

For $u_i$,  we work with equation (\ref{refm2}). If we set $\alpha = np -1 > 0$ and $\beta = p(n-2)(2p-1) > 0$, and if we set $F$ equal to the negative of all except the first three terms\footnote{We note that, according to  this construction, the function $F$ includes information---terms depending on $\gamma_{ij}$, on $A_i$, etc.---which is not known. The formal derivation of the differential inequalities of the form (\ref{integralineq}) or (\ref{integralineqk}) still works, however. }
in (\ref{refm2}), then we have an equation of the form (\ref{form2}) for each of the components of $u_i$. It then follows from Lemma \ref{vlemma} and Corollary \ref{cor2} that  there exists a set of  positive constants $\gamma_{\mathcal{SH}}$, $\delta_{\mathcal{SH}}$, $\eta_{\mathcal{SH}}$ and $\zeta_{\mathcal{SH}}$ such that if we set $\gamma=\gamma_{\mathcal{SH}}$ and $\delta=\delta_{\mathcal{SH}}$ in the expression (\ref{E_V})  for $\mathcal{E}_{\mathcal{V}}$ and if we then define
\begin{equation} 
E_{u_i,m} :=  \sum_{| \lambda| \le m} \mathcal{E}_{\mathcal{V}}[ \partial^\lambda u_i], 
\end{equation}
(for $m$ any non-negative integer) then the conclusions of Lemma \ref{vlemma}, including the bounding condition (\ref{bound}) (with $\zeta=\zeta_{\mathcal{SH}}$) holds for $E_{u_i,0}$,
and the differential inequality \eqref{EkIneq} (with appropriate choices of constants) holds for each $E_{u_i,m}$. Furthermore, if we set
\begin{equation}
E_{\mathcal{SH},m}:= \sum_i E_{u_i,m}
\end{equation}
then one  verifies that similar sorts of bounds and differential inequalities (accounting for the summation) hold for these quantities. 

For $\gamma_{ij}$, we work with equation (\ref{refm3}). In this case, we set $\alpha = np-1, \beta = 0$, and $F=-\Delta_{ij}-2p(\lambda(np-1)\psi-u)\gamma_{ij}$; we then have an equation of the form (\ref{form2}) for the components of $\gamma_{ij}$. Lemma \ref{vlemma} now implies that for appropriate choices of constants $\gamma_{\mathcal{M}}=0$, $\delta_{\mathcal{M}}=0$, $\eta_{\mathcal{M}}$ and $\zeta_{\mathcal{M}}$, a bounding condition of the form (\ref{bound}) and a differential inequality of the form (\ref{monotonicity}) hold for $\mathcal{E}_{\mathcal{V}}[\gamma_{ij}]$. 
 However, the most useful energy functionals for the metric field $\gamma_{ij}$ are
\begin{equation}
\label{Emetric} 
E_{{\mathcal{M}},m} := \frac{1}{2} \sum_{i,j}\sum_{| \lambda | \le m} \left( \mathcal{E}_{\mathcal{V}}[ \partial^\lambda \gamma_{ij}] + \int_{\mathbf{T}^n} e^{-2a\tau} a_\lambda ( \partial^\lambda \gamma_{ij})^2 dx \right), 
\end{equation}
with $a_\lambda = 0$ for $|\lambda|= 0$ and $a_\lambda= 1$ for $|\lambda| > 0$ (and with $a >0$ a constant to be determined below, by (\ref{con2})). Below in Section \ref{Bootstrap}, using bootstrap assumptions and consequent estimates, we obtain  a differential inequality for each $E_{{\mathcal{M}},m}$ which controls its evolution for all time. We also verify that the energies $E_{{\mathcal{M}},m}$ control Sobolev norms of $\gamma_{ij}$ and of its derivatives. The unfamiliar second (integral) term in the definition \eqref{Emetric}  is needed so that the energies $E_{{\mathcal{M}},m}$ do indeed control these quantities. We note that this term is also consistent with the condition that the sequence of energies  $E_{{\mathcal{M}},m}$ all vanish for fields which are the same as the background fields $(\hat{g}, \hat{\phi}, \hat{A})_{ \{ t_0, p, c_0,\kappa \}}$.

We proceed to the energy functionals for the electromagnetic fields. For the electromagnetic scalar potential $A_0$, we work with the evolution equation (\ref{refm5}). It takes the desired form if we set $\alpha=(n+2)p-1, \beta =np(2p-1)$, and $F=-\Delta_e$. It then follows from Lemma \ref{vlemma} that for appropriate choices of constants $\gamma_{\mathcal{SP}}$, $\delta_{\mathcal{SP}}$, $\eta_{\mathcal{SP}}$ and $\zeta_{\mathcal{SP}}$, a bounding condition of the form (\ref{bound}) and a differential inequality of the form (\ref{monotonicity}) hold for $\mathcal{E}_{\mathcal{V}}[A_0]$. 
 The energy functionals for the scalar potential of primary interest are 
\begin{equation} 
\label{EScalpot}
E_{{\mathcal{SP}},m} = \sum_{| \lambda| \le m} \mathcal{E}_{\mathcal{V}}[ \partial^\lambda A_0].
\end{equation}
We use bootstrap arguments below (in Section  \ref{Bootstrap}) to obtain  differential inequalities which control  $E_{{\mathcal{SP}},m}$, and consequently control $A_0$ and its derivatives.

Finally for $A_j$, the electromagnetic vector potential, we work with equation (\ref{refm6}). To match the form of equation (\ref{form2}), we set $\alpha = (n-2)p-1$, $\beta =0$, and $F= -\Delta_{b,i}-2pe^{\tau+\tau_0}\partial_iA_0$.  It then follows from Lemma \ref{vlemma} that we have the constants $\gamma_{\mathcal{VP}}=0$, $\delta_{\mathcal{VP}}=0$, $\eta_{\mathcal{VP}}$ and $\zeta_{\mathcal{VP}}$; the main energy functionals for $A_j$ are 
\begin{equation}
\label{EVectpot}
E_{{\mathcal{VP}},m} = \sum_i\sum_{| \lambda | \le m} \mathcal{E}_{\mathcal{V}}[ \partial^\lambda A_i].
\end{equation}
Corollary \ref{cor2} implies that inequalities similar to (\ref{EkIneq}) can be derived; control of $E_{{\mathcal{VP}},m}$ is established in Section  \ref{Bootstrap}. Note that due to the vanishing of $\gamma_{\mathcal{VP}}$ and $\delta_{\mathcal{VP}}$, $E_{{\mathcal{VP}},m}$ does not control $\| A_i \|_{H^m}$; since we do not need to estimate $\| A_i \|_{H^m}$ anywhere in this paper, we do not include  in the definition of $E_{{\mathcal{VP}},m}$ an extra term similar to the extra term in the definition of $E_{{\mathcal{M}},m}$.

To establish appropriate differential inequalities for the energy functionals\footnote{The subscript labels for these energy functionals  $E_{{\mathcal{LS}},km}, E_{{\mathcal{SH}},m}, E_{{\mathcal{M}},m}, E_{{\mathcal{SP}},m}$, and $E_{{\mathcal{VP}},m}$ stand for, respectively, $\mathcal{L}$apse and $\mathcal{S}$calar field, $\mathcal{SH}$ift, $\mathcal{M}$etric, $\mathcal{S}$calar $\mathcal{P}$otential, and $\mathcal{V}$ector $\mathcal{P}$otential.}
$E_{{\mathcal{LS}},m }, E_{{\mathcal{SH}},m}, E_{{\mathcal{M}},m}, E_{{\mathcal{SP}},m}$, and $E_{{\mathcal{VP}},m}$, and consequently to control their evolution and that of the fields  $u, \psi, u_j, \gamma_{ij}, A_0,$ and $A_j$, we need to know something about the behavior of the function $``F"$ corresponding to each of the fields.  
In the next  subsection, we use bootstrap assumptions to establish this knowledge. Bootstrap assumptions also play a role in controlling the nature of the evolving metric $g$ which appears in the differential operator $\tilde{\Box}_g$ as well as elsewhere in the  evolution equations (\ref{refm1})-(\ref{refm6}), once we restore the relation between $g$  and the evolving field variables $u, u_i$, and $\gamma_{ij}$. We also discuss this in the next subsection.

\subsection{Bootstrap Assumptions and the Consequent Estimates}
\label{Bootstrap}
The idea of bootstrap assumptions is that one assumes that the evolving fields satisfy certain conditions for $t \in I \subset \mathbf{R}^1$, one uses those conditions to prove that certain estimates consequently hold (on $I$), and then one uses these estimates together with the evolution equations to argue that the solutions (satisfying the bootstrap assumptions) exist for all time $t\in\mathbf{R}^1$.

We use two bootstraps assumptions here. The first (which, following Ringstr\"om, we call the ``primary bootstrap assumption") essentially says that the evolving metric stays Lorentzian for $t\in I$. The second (called the ``main bootstrap assumption" ) says that the energy functionals for the fields evolving from the perturbed initial data set remain small for $t\in I$. We now state these more precisely, and discuss their immediate consequences.

To state the primary bootstrap assumption, it is useful to work with the following notation: Let $g$ be a real valued $(n+1)\times (n+1)$ matrix, with components  $g_{\mu \nu}$ for $\mu \in \{i,0\}$ and $i\in\{1,...n\}$. We use $g_\flat$ to denote the $n \times n$ submatrix with components $g_{ij}$, and we write $g_\flat >0$ if this matrix is positive definite. We use $g^{-1}$ to denote the inverse of $g$ and we use $g^\sharp$ to denote the $n \times n$ submatrix with components $g^{ij}$, presuming that these inverses exist. As above, we let $u [g] := 1+g_{00}$. To simplify notation here,\footnote{The specific aim is to distinguish the vector $u_i$ from the scalar $u$ without using indices.} rather than using $u_i$ to denote $g_{0i}$ we write $v[g]:=(g_{01},\cdots, g_{0n})$. As well
we use $u[g^{-1}] := 1+g^{00}=1+(g_{00}-g^{ij}u_iu_j)^{-1}$ and $v[g^{-1}]:=(g^{01}, \cdots, g^{0n})=(\cdots, (g^{ij}u_iu_j-g_{00})^{-1}g^{kl}u_l, \cdots)$.  We recall that $g$ is defined to be a Lorentz metric if it is symmetric, has $n$ positive eigenvalues and a single  negative one. Noting that $g$ is Lorentzian if $u[g]<1$ and if $g_\flat>0$, we call $g$ a canonical Lorentz metric if these two inequalities hold. We use $\mathcal{L}_n$ to denote the set of canonical $(n+1)\times (n+1)$ Lorentz matrices. 

Now, given a symmetric positive definite $n\times n$ matrix $M$ with components $M_{ij}$, and given a vector  $w \in \mathbf{R}^n$, we write (presuming the Einstein summation convention) 
\[ 
| w |_M := \left( M_{ij} w^i w^j \right)^{1/2}. 
\]
If $M$ is the identity matrix, we simply write $| w | := | w |_{Id}$. We can now state the following:

\begin{definition}
\label{PrimaryBootstrap}
Let $p > 1$, $ a > 0$, $c_1 > 1$, $\eta \in (0, 1)$, $\tau_0$ and $\kappa_1$ be real numbers and let $n \ge 3$ be an integer. A function $g: I \times \mathbf{T}^n \rightarrow \mathcal{L}_n$ satisfies the \textbf{primary bootstrap assumption}  $\mathcal{PBA}_{\{p, a, c_1, \eta, \tau_0, \kappa_1, n\}}$ on an interval $I$  if
\begin{align}
\label{p0}\frac{1}{c_1} | w |^2 \le e^{-2p\tau - 2\kappa_0} | w |_{g_\flat}^2  \le  c_1 | w |^2,\\
\label{p1} | u[g] |  \le  \eta,\\
\label{p2} | v[g] |^2  \le  \eta c_1^{-1} e^{2p\tau + 2\kappa_0 - 2 a\tau},
\end{align}
for all $ w \in \mathbf{R}^n$ and for all $(\tau, x) \in I \times \mathbf{T}^n$, with $\kappa_0 = \tau_0 + \kappa_1$.
\end{definition}
We remark  that in Section \ref{S:global}, we determine $t_0$ from the initial data and we set $\kappa_0 := \ln [4\ell(t_0)] = \ln \frac{4t_0}{p-1}$. It then follows that $\kappa_1 = \ln \frac{4}{p-1}$.

It is useful to note the following immediate consequence of the primary bootstrap assumption (following Lemma 7 of \cite{Ringstrom08}), which we use to control the inverse of the metric:

\begin{lemma}
\label{inverse}
If $g: I \times \mathbf{T}^n \rightarrow \mathcal{L}_n$ satisfies the $\mathcal{PBA}_{\{p, a, c_1, \eta, \tau_0, \kappa_1, n\}}$ on an interval $I$, then there exists a constant $\eta_0 \in (0, 1/4)$ such that if $\eta \le \eta_0$, then\footnote{We note that $(v[g], v[g^{-1}]) =g_{0i}g^{0i}.$}
\begin{align}
| v[g^{-1}] | \le 2 c_1 e^{-2p\tau - 2\kappa_0} | v[g] |,\\
| (v[g], v[g^{-1}]) | \le 2 c_1 e^{-2p\tau - 2\kappa_0} | v [g] |^2,\\
|u[g^{-1}]| \le 4 \eta,\\
\frac{2}{3c_1} | w |^2 \le e^{2p\tau + 2\kappa_0} | w |^2_{g^\sharp} \le \frac{3c_1}{2} | w |^2,
\end{align}
hold for all $ w \in \mathbf{R}^n$ and all $(\tau, x) \in I \times \mathbf{T}^n$.
\end{lemma}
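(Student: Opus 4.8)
The plan is to start from the $(n+1)\times(n+1)$ block decomposition of $g$ and its inverse. Writing $g$ in the form
\[
g = \begin{pmatrix} g_{00} & v[g]^t \\ v[g] & g_\flat \end{pmatrix},
\]
with $g_\flat>0$ invertible by \eqref{p0}, the Schur complement formula gives the standard expressions
\[
g^{00} = (g_{00} - v[g]^t g_\flat^{-1} v[g])^{-1}, \qquad v[g^{-1}] = -g^{00}\, g_\flat^{-1} v[g], \qquad g^\sharp = g_\flat^{-1} + g^{00}\, g_\flat^{-1} v[g] v[g]^t g_\flat^{-1},
\]
which already match the componentwise formulas recorded before the lemma statement. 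The whole proof is then a matter of inserting the three bootstrap bounds \eqref{p0}--\eqref{p2} into these algebraic identities and tracking the exponential weights; there is no PDE content here.

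First I would estimate the scalar $g_{00}-v[g]^t g_\flat^{-1} v[g]$. By \eqref{p0}, $g_\flat^{-1}$ is controlled by $c_1 e^{-2p\tau-2\kappa_0}$ on unit vectors, so $|v[g]^t g_\flat^{-1} v[g]| \le c_1 e^{-2p\tau-2\kappa_0}|v[g]|^2 \le \eta$ using \eqref{p2}. Since $|g_{00}+1|\le\eta$ by \eqref{p1}, the quantity $g_{00}-v[g]^t g_\flat^{-1} v[g]$ lies within $2\eta$ of $-1$, hence (for $\eta$ small, which is where the threshold $\eta_0$ enters) its reciprocal $g^{00}$ lies within $4\eta$ of $-1$; this gives $|u[g^{-1}]| = |1+g^{00}| \le 4\eta$, and in particular $|g^{00}|\le 2$. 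Next, $|v[g^{-1}]| = |g^{00}|\,|g_\flat^{-1} v[g]| \le 2\cdot c_1 e^{-2p\tau-2\kappa_0}|v[g]|$, giving the first inequality. The second follows by pairing: $|(v[g],v[g^{-1}])| = |g^{00}|\,|v[g]^t g_\flat^{-1} v[g]| \le 2 c_1 e^{-2p\tau-2\kappa_0}|v[g]|^2$.

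For the last inequality I would use the Schur formula for $g^\sharp$: on unit vectors $w$, the first term $g_\flat^{-1}$ contributes within $[\tfrac{1}{c_1}, c_1]\,e^{-2p\tau-2\kappa_0}$ by \eqref{p0}, while the correction term $g^{00} (g_\flat^{-1}v[g])(g_\flat^{-1}v[g])^t$ has operator norm at most $2\, (c_1 e^{-2p\tau-2\kappa_0})^2 |v[g]|^2 \le 2 c_1 \eta e^{-2a\tau} e^{-2p\tau-2\kappa_0}$ by \eqref{p2} — an extra factor $\eta e^{-2a\tau}$ smaller than the main term. Hence for $\eta$ small enough (shrinking $\eta_0$ further if needed, uniformly in $\tau$ since $e^{-2a\tau}\le 1$ on, say, $\tau\ge 0$, or more carefully absorbing it as a relative perturbation), $e^{2p\tau+2\kappa_0}|w|^2_{g^\sharp}$ lies in $[\tfrac{2}{3c_1}, \tfrac{3c_1}{2}]|w|^2$, the crude constants $\tfrac23$ and $\tfrac32$ leaving ample room.

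The only mild subtlety — the ``main obstacle,'' such as it is — is bookkeeping the choice of the single threshold $\eta_0\in(0,1/4)$ so that all four conclusions hold simultaneously with the stated numerical constants ($4\eta$, $2c_1$, $\tfrac23$, $\tfrac32$), and making sure the correction-term bound is genuinely uniform. Since \eqref{p2} supplies the factor $e^{-2a\tau}$ which is bounded by $1$ whenever $\tau\ge 0$ (and in the application $I=[0,s)$), this uniformity is automatic; if one wants the lemma on a general interval one simply notes the correction is a relative perturbation of the main $g_\flat^{-1}$ term by a factor $O(\eta)$ independent of $\tau$. I would present this as a short computation citing Lemma 7 of \cite{Ringstrom08} for the analogous vacuum statement and noting that the electromagnetic fields do not enter these purely metric estimates at all.
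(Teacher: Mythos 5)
Your proof is correct and is the standard Schur-complement calculation that the paper delegates, without further detail, to Lemma~7 of \cite{Ringstrom08}: block-invert $g$, bound the Schur complement $g_{00}-v[g]^t g_\flat^{-1} v[g]$ within $2\eta$ of $-1$ using \eqref{p0}--\eqref{p2}, and propagate through the identities $v[g^{-1}]=-g^{00}g_\flat^{-1}v[g]$ and $g^\sharp=g_\flat^{-1}+g^{00}(g_\flat^{-1}v[g])(g_\flat^{-1}v[g])^t$; the threshold $\eta\le 1/4$ is exactly what makes $|g^{00}+1|\le 4\eta$ and $|g^{00}|\le 2$, and the $\tfrac{2}{3c_1},\tfrac{3c_1}{2}$ brackets absorb the rank-one correction since it is negative and of relative size $O(c_1^2\eta e^{-2a\tau})$. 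The only imprecision is your closing parenthetical that the correction is a relative perturbation by a factor ``$O(\eta)$ independent of $\tau$'': the relative factor is $\eta e^{-2a\tau}$, which is only uniformly bounded when $\tau$ is bounded below (e.g.\ $\tau\ge 0$), so the uniformity comes from $I=[0,s)$ as you correctly observe earlier rather than from the algebra alone; with that understanding, the argument is complete and matches the intended one.
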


To show that the energy functionals we have defined, so long as they are finite, control the fields and their derivatives, we need only the primary bootstrap assumptions. To argue this, we fix a time interval $I$, and assume that  $\mathcal{PBA}_{\{p, a, c_1, \eta, \tau_0, \kappa_1, n\}}$ holds, with $\eta$ and $a$ set as
\begin{align}
\label{con1}
\eta &:= \min\{\eta_0, \eta_{\mathcal{LS}}/4, \eta_{\mathcal{SH}}/4, \eta_{\mathcal{M}}/4, \eta_{\mathcal{SP}}/4, \eta_{\mathcal{VP}}/4\},\\
\label{con2}
a &:= \frac{1}{4} \min\{p-1, \eta_{\mathcal{LS}}, \eta_{\mathcal{SH}}, \eta_{\mathcal{M}}, \eta_{\mathcal{SP}}, \eta_{\mathcal{VP}}\},
\end{align}
where $\eta_{\mathcal{LS}}$ comes from Lemma \ref{lemma1}, while $ \eta_{\mathcal{SH}}, \eta_{\mathcal{M}}, \eta_{\mathcal{SP}}$, and $\eta_{\mathcal{VP}}$ are the corresponding constants which arise in the application of Lemma \ref{vlemma} to the other fields. We note (as a consequence of Lemmas  \ref{lemma1} and \ref{vlemma}) that these constants $a$ and $\eta$ depend only on the dimension $n$ and the expansion parameter $p$. Invoking 
a convenient rescaling (which we justify immediately following the proof of Lemma \ref{Delta}) in defining the following,
\begin{align}
\label{scalings}
\tilde{E}_{\mathcal{LS},m} := e^{2a\tau} E_{\mathcal{LS}, m}, \, \tilde{E}_{\mathcal{SH},m} := e^{-2p\tau - 2\kappa_0 + 2a\tau} E_{\mathcal{SH}, m},\nonumber\\ \tilde{E}_{\mathcal{M},m} := e^{- 4\kappa_0+ 2a\tau} E_{\mathcal{M}, m},\, \tilde{E}_{\mathcal{SP},m} := e^{2a\tau} E_{\mathcal{SP}, m}, \, \tilde{E}_{\mathcal{VP},m} := e^{-2p\tau - 2\kappa_0 + 2a\tau} E_{\mathcal{VP}, m},
\end{align}
we obtain a lemma which summarizes the desired (Sobelev) norm control of the fields:
\begin{lemma}
\label{control lemma}
Let the fields $\{u, u_i, \gamma_{ij}, \psi, A_0, A_i\}$ be specified on the spacetime manifold,
$I \times \mathbf{T}^n$ for some time interval $I$. If the metric $g$ constructed from these fields satisfies the $\mathcal{PBA}\{p, a, c_1, \eta, \tau_0, \kappa_1,n \}$ (and therefore is Lorentzian) for $\eta$ and $a$  defined by (\ref{con1}) and (\ref{con2}), respectively, and for real numbers $c_1>1, \tau_0$, and $\kappa_1$, then the inequalities
\begin{align}
\label{51} e^{a\tau} [ \| \psi \|_{H^m} + \| \partial_\tau \psi \|_{H^m} + e^{-(p-1)\tau - \kappa_1} \| \partial_i \psi \|_{H^m} ] \le C \tilde{E}^{1/2}_{\mathcal{LS},m},\\
\label{eq11} e^{a\tau} [ \| u \|_{H^m} + \| \partial_\tau u \|_{H^m} + e^{-(p-1)\tau - \kappa_1} \| \partial_i u \|_{H^m} ] \le C \tilde{E}^{1/2}_{\mathcal{LS},m},\\
\label{eq10} e^{-p\tau - \kappa_0 + a\tau} [ \| u_j \|_{H^m} + \| \partial_\tau u_j \|_{H^m} + e^{-(p-1)\tau - \kappa_1} \| \partial_i u_j \|_{H^m} ] \le C \tilde{E}^{1/2}_{\mathcal{SH},m},\\
\label{es0} e^{-2p\tau - 2\kappa_0 + a\tau} [ \| \partial_\tau g_{ij} - 2p g_{ij} \|_{H^m} + e^{-(p-1)\tau - \kappa_1} \| \partial_l g_{ij} \|_{H^m} ] \le C \tilde{E}^{1/2}_{\mathcal{M},m},\\
\label{es} e^{-2p\tau - 2\kappa_0} \| \partial^\alpha g_{ij} \|_2 \le C \tilde{E}^{1/2}_{\mathcal{M},m},\,\, 0 < | \alpha | \le m,\\
\label{52} e^{a\tau} [ \| A_0 \|_{H^m} + \| \partial_\tau A_0 \|_{H^m} + e^{-(p-1)\tau - \kappa_1} \| \partial_i A_0 \|_{H^m} ] \le C \tilde{E}^{1/2}_{\mathcal{SP},m},\\
\label{53} e^{-p\tau - \kappa_0 + a\tau} [ \| \partial_\tau A_i \|_{H^m}  + e^{-(p-1)\tau - \kappa_1} \| \partial_l A_i \|_{H^m} ] \le C \tilde{E}^{1/2}_{\mathcal{VP},m}\,,
\end{align}
hold on $I$, where the constants depend on $c_1$, $n$ and $p$.
\end{lemma}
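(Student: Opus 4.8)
The plan is to reduce Lemma \ref{control lemma} to the bounding inequalities already proved in Lemma \ref{lemma1} and Lemma \ref{vlemma} (and their higher-derivative corollaries), the only new ingredient being to track how the scalings in \eqref{scalings} convert the ``$\tilde g^{ij}$-weighted'' gradient terms appearing there into the explicitly weighted norms $e^{-(p-1)\tau-\kappa_1}\|\partial_i(\cdot)\|_{H^m}$ appearing here, using the $\mathcal{PBA}$ to compare $\tilde g^{ij}$ with the Euclidean metric. First I would record the elementary consequence of \eqref{p0} of the primary bootstrap assumption, namely that $e^{2p\tau+2\kappa_0}g^{ij}$ is uniformly comparable (with constants depending only on $c_1$) to the identity, which is exactly the last inequality of Lemma \ref{inverse}; since $\tilde g^{ij}=e^{2(\tau+\tau_0)}g^{ij}=e^{2\tau-2\kappa_1}\big(e^{2p\tau+2\kappa_0}g^{ij}\big)e^{-2p\tau+2\kappa_1-2\kappa_0}$ and $\kappa_0=\tau_0+\kappa_1$, one gets $\tilde g^{ij}\sim e^{-2(p-1)\tau-2\kappa_1}\,\mathrm{Id}$, hence $\tilde g^{ij}\partial_i w\,\partial_j w \sim e^{-2(p-1)\tau-2\kappa_1}|\partial w|^2$. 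This is the translation dictionary between the two forms of the gradient terms.

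Next I would treat the $\mathbf u=(u,\psi)$ block. By hypothesis $\eta\le\eta_{\mathcal{LS}}/4\le\eta_{\mathcal{LS}}$, so \eqref{c1} holds and Lemma \ref{lemma1} applies to each $\partial^\alpha\tilde{\mathbf u}$; summing the bounding inequality \eqref{bound1} over $|\alpha|\le m$ gives $E_{\mathcal{LS},m}\ge \zeta_{\mathcal{LS}}\sum_{|\alpha|\le m}\int_{\mathbf T^n}\big(|\partial_\tau\partial^\alpha\tilde{\mathbf u}|^2+\tilde g^{ij}\partial_i\partial^\alpha\tilde{\mathbf u}^t\partial_j\partial^\alpha\tilde{\mathbf u}+|\partial^\alpha\tilde{\mathbf u}|^2\big)dx$. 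Substituting the gradient comparison above, multiplying by $e^{2a\tau}$, recalling $\tilde E_{\mathcal{LS},m}=e^{2a\tau}E_{\mathcal{LS},m}$, and finally undoing the fixed linear change of variables $\tilde{\mathbf u}=T^{-1}\mathbf u$ (which costs only a constant depending on $n,p$ through $\|T\|,\|T^{-1}\|$), yields exactly \eqref{51} and \eqref{eq11} after taking square roots and using $\|\partial_i(\cdot)\|_{H^m}\le C\,\big(\sum_{|\alpha|\le m}\|\partial_i\partial^\alpha(\cdot)\|_2^2\big)^{1/2}$ (the top-order spatial derivative terms being controlled by the $|\alpha|\le m$ sum, the lower ones by the zeroth-order term). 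For $u_j$, $A_0$, $A_j$, and $\gamma_{ij}$, the argument is the same with Lemma \ref{vlemma} in place of Lemma \ref{lemma1}: each component solves a scalar equation of the model form \eqref{form2} with the stated $\alpha,\beta$, the relevant smallness $|g^{00}+1|\le\eta\le\eta_{\bullet}/4$ is in force, so \eqref{bound} holds with the appropriate $\zeta_\bullet$ and $\iota_\beta$; summing over $|\alpha|\le m$, applying the gradient comparison, and multiplying by the matching exponential weight from \eqref{scalings} gives \eqref{eq10}, \eqref{es0}, \eqref{es}, \eqref{52}, \eqref{53}. For \eqref{es0} one must additionally note $\partial_\tau g_{ij}=\partial_\tau\big((t/t_0)^{2p}\gamma_{ij}\big)$ so that $\partial_\tau g_{ij}-2pg_{ij}=(t/t_0)^{2p}\partial_\tau\gamma_{ij}=e^{2p\tau}\partial_\tau\gamma_{ij}$ and $\partial_l g_{ij}=e^{2p\tau}\partial_l\gamma_{ij}$, which exactly absorbs the $e^{-2p\tau-2\kappa_0}$ prefactor and reduces matters to the $\mathcal{E}_{\mathcal V}$-bound for $\gamma_{ij}$; for \eqref{es}, the extra integral term $\int e^{-2a\tau}a_\lambda(\partial^\lambda\gamma_{ij})^2dx$ in the definition \eqref{Emetric} is precisely what supplies control of $\|\partial^\alpha\gamma_{ij}\|_2$ for $0<|\alpha|\le m$ once one multiplies by $e^{2a\tau}$. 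For \eqref{53} one uses that $\gamma_{\mathcal{VP}}=\delta_{\mathcal{VP}}=0$, so only $\|\partial_\tau A_i\|$ and the gradient term are controlled — consistent with the statement omitting $\|A_i\|_{H^m}$.

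The main obstacle — really the only place one must be careful rather than mechanical — is bookkeeping of the exponential weights: making sure that the factor $e^{-2(p-1)\tau-2\kappa_1}$ coming from $\tilde g^{ij}\sim e^{-2(p-1)\tau-2\kappa_1}\mathrm{Id}$, the definitional weights relating $g_{ij}$ to $\gamma_{ij}$, and the rescaling factors in \eqref{scalings} all combine to produce precisely the prefactors $e^{a\tau}$, $e^{-p\tau-\kappa_0+a\tau}$, $e^{-(p-1)\tau-\kappa_1}$, $e^{-2p\tau-2\kappa_0+a\tau}$ written in \eqref{51}--\eqref{53}, with no leftover powers of $e^\tau$; the identity $\kappa_0=\tau_0+\kappa_1$ is used repeatedly here. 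A secondary point is that $\tilde{\mathbf u}$ rather than $\mathbf u$ appears in $\mathcal{E}_{\mathcal{LS}}$, so one must pass back through the constant matrix $T$; since $T$ depends only on $n,p$ this only affects the implied constant. Everything else — finiteness of the energies, positivity of $\zeta_\bullet$, the reduction of each component equation to \eqref{form2} — has already been established above, so no genuinely new estimate is required.
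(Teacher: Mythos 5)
Your argument is correct and matches the paper's (terse) proof, which likewise simply invokes the coercivity bounds \eqref{bound1} and \eqref{bound} of Lemmas \ref{lemma1} and \ref{vlemma}; you have carried out the exponential-weight bookkeeping and the $\tilde g^{ij}$-to-Euclidean comparison that the paper leaves implicit. (There is a small algebraic slip in your intermediate factorization of $\tilde g^{ij}$ --- the two prefactors multiply to $e^{2\tau}$ rather than $e^{2(\tau+\tau_0)}$ --- but the comparison $\tilde g^{ij}\sim e^{-2(p-1)\tau-2\kappa_1}\,\mathrm{Id}$ that you then use is the correct one, obtained directly from $\tilde g^{ij}=e^{2(\tau+\tau_0)}g^{ij}$, Lemma \ref{inverse}, and $\kappa_0=\tau_0+\kappa_1$.)
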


\begin{proof} This result follows from the definitions of the energy functionals, and from the application of Lemma \ref{lemma1} and Lemma \ref{vlemma}.  We note that in applying these lemmas, we only use the conclusions regarding energies and norms (e.g., \eqref{bound1}), not the conclusions regarding differential inequalities (e.g., \eqref{integralineq}). To reach these conclusions, it is necessary that the metric satisfy  $\mathcal{PBA}_{\{a, c_1, \eta, \tau_0, \kappa_1, n\}}$, but it is \emph{not} necessary that the fields satisfy the field equations \eqref{refm1}-\eqref{refm6}. 
\end{proof}

To state the main bootstrap assumption, it is useful to define the following collective $m^{th}$-order energy functional
\begin{equation}
\label{Ek}
\tilde{E}_m := \tilde{E}_{\mathcal{LS},m}  + \tilde{E}_{\mathcal{SH},m} + \tilde{E}_{\mathcal{M},m} + \tilde{E}_{\mathcal{SP},m} + \tilde{E}_{\mathcal{VP},m}.
\end{equation}
We can then define the following:
\begin{definition}
Let $p > 1$, $c_1 > 1$,  $\tau_0$, $\kappa_1$ and $\epsilon>0$ be real numbers, let $n \ge 3$ be an integer, and let $m_0 > n/2 +1$ be an integer.
Let $\eta$ and $a$ be specified by (\ref{con1}) and (\ref{con2}) respectively. The fields $(g, \psi, A)$ satisfy the  \textbf{main bootstrap assumption} $\mathcal{MBA}_{\{p, a, c_1, \eta,  \tau_0, \kappa_1, n,  \epsilon\}}$ on the interval $I$ if
\begin{description}
\item[1.]  $g: I \times \mathbf{T}^n \rightarrow \mathcal{L}_n$, $\psi: I \times \mathbf{T}^n \rightarrow \mathbf{R}$ and $A_\mu: I \times \mathbf{T}^n \rightarrow \mathbf{R}$ are $C^\infty$,
\item[2.]  $g$ satisfies $\mathcal{PBA}_{\{p, a, c_1, \eta, \tau_0, \kappa_1,n\}}$ on $I$,
\item[3.]  $g$, $\psi$ and $A$ satisfy
 \begin{equation} \label{energycontrol}\tilde{E}^{1/2}_{m_0}(\tau) \le \epsilon, \end{equation}
\end{description}
for $\tau \in I$.
\end{definition}

The primary use of the main bootstrap assumption is to control the (nonlinear) $\Delta$-terms which appear in the field equations \eqref{refm1}-\eqref{refm6}. Controlling these $\Delta$ terms, as well as certain commutator terms involving spacetime derivatives and the wave operator $\tilde \Box$, we can derive  the differential inequalities (\ref{ine1}) - (\ref{ine5}), which are crucial to our proof of global existence. We state the level of control of the $\Delta$-terms in the following lemma:

\begin{lemma}
\label{Delta}
If the fields  $(g, \psi, A)$ satisfy  $\mathcal{MBA}_{\{p, a, c_1, \eta,  \tau_0, \kappa_1, n,  \epsilon\}}$ on an interval $I$, then the following estimates 
\begin{align}
\label{old_est1} \| \Delta_{00} \|_{H^m} \le C \epsilon e^{-2a\tau} \tilde{E}^{1/2}_m\,,\\
\label{old_est2} \| \Delta_{0i} \|_{H^m} \le C \epsilon e^{p\tau + \kappa_0 -2a\tau } \tilde{E}^{1/2}_m\,,\\
\label{old_est3} \| \Delta_{ij} \|_{H^m} \le C \epsilon e^{2\kappa_0 -2a\tau } \tilde{E}^{1/2}_m\,,\\
\label{old_est4} \| \Delta_{\psi} \|_{H^m} \le C \epsilon e^{-2a\tau} \tilde{E}^{1/2}_m\,,\\
\label{est1}\| \Delta_e  \|_{H^m} \le C \epsilon e^{-2a\tau} \tilde{E}^{1/2}_m\,,\\
\label{est2}\| \Delta_{b,i}  \|_{H^m} \le C \epsilon e^{p\tau+\kappa_0 -2a\tau } \tilde{E}^{1/2}_m\,,
\end{align}
hold on $I$, where the constant coefficients depend on $n$, $p$, $m$ and $c_1$. In addition, the estimates
\begin{align}
\label{old_est5} \| \Delta_{\mathcal{LS}}[\partial^\alpha \tilde{\mathbf{u}}]  \|_{L^1} \le C \epsilon e^{-a\tau} E_{\mathcal{LS},m}\,,\\
\label{old_est6} \| \Delta_{\mathcal{SH}}[\partial^\alpha u_j]  \|_{L^1} \le C \epsilon e^{-a\tau} E_{\mathcal{SH},m}\,,\\
\label{old_est7} \| \Delta_{\mathcal{M}}[\partial^\alpha h_{ij}]  \|_{L^1} \le C \epsilon e^{-a\tau} E_{\mathcal{M},m}\,,\\
\label{est3}\| \Delta_{\mathcal{SP}}[\partial^\alpha A_0]  \|_{L^1} \le C \epsilon e^{-a\tau} E_{\mathcal{SP},m}\,,\\
\label{est4}\| \Delta_{\mathcal{VP}}[\partial^\alpha A_i]  \|_{L^1} \le C \epsilon e^{-a\tau} E_{\mathcal{VP},m}\,,
\end{align}
hold on $I$, for $| \alpha | \le m$, where the constant coefficients depend here on $n$, $p$, $m$, $c_1$ and on $e^{-\kappa_1}$.
\end{lemma}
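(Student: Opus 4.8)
### Proof Proposal for Lemma \ref{Delta}

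The plan is to exploit two structural facts about the $\Delta$-terms: first, each $\Delta$-term is at least quadratic in the "small" quantities (the perturbation fields $u, u_i, \psi, A_0, A_i$, their first derivatives, and the first derivatives of $\gamma_{ij}$); second, the $\mathcal{MBA}$ gives us, via Lemma \ref{control lemma} together with the rescalings (\ref{scalings}), $L^\infty$-control on low-order derivatives (using $m_0 > n/2+1$ and Sobolev embedding) and $L^2$-control on derivatives up to order $m$. The overall strategy is the standard product-estimate / Moser-type argument: to bound $\|\Delta\|_{H^m}$, expand $\partial^\alpha \Delta$ via Leibniz, and in each resulting product put the factor carrying the most derivatives in $L^2$ and the remaining factors in $L^\infty$; since $\Delta$ is quadratic, at least one factor always has $\le m/2 < m_0$ derivatives on it and hence is controlled in $L^\infty$ by the $\mathcal{MBA}$ smallness $\epsilon$. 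This is precisely the algorithm for estimating nonlinear terms introduced in \cite{Ringstrom08} and used in \cite{Ringstrom09}, which is why the bootstrap assumptions were chosen as they were.

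Concretely, I would proceed term by term. The terms $\tilde\Delta_{\mu\nu}$ and $\tilde\Delta_\psi$ are exactly those appearing in \cite{Ringstrom09}/\cite{Ringstrom08}, so estimates (\ref{old_est1}), (\ref{old_est2}), (\ref{old_est3}), (\ref{old_est4}) for the parts coming from $\tilde\Delta$ follow verbatim from Ringström's Lemma (the analogue of Lemma 10 in \cite{Ringstrom09}), once one checks that our $\mathcal{PBA}$ and $\mathcal{MBA}$ and rescalings match his; I would state this and cite it rather than redo it. The genuinely new work is the electromagnetic contributions: the $F_{\mu\sigma}F_\nu{}^\sigma$-type quadratic terms added to $\Delta'_{00}, \Delta'_{0i}, \Delta'_{ij}$, and the terms $\Delta'_e, \Delta'_{b,i}$. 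For these I would first record that $F_{\mu\nu} = \partial_\mu A_\nu - \partial_\nu A_\mu$, so $F$ is linear in first derivatives of $A_0, A_i$; hence every EM contribution to the $\Delta$'s is (at least) quadratic in $\partial A$, possibly times smooth coefficients built from $g^{\mu\nu}$ and Christoffel symbols of $g$. One then uses: (i) Lemma \ref{inverse} to control $g^{\mu\nu}$ and hence the coefficients; (ii) the norm bounds (\ref{52}), (\ref{53}) on $\partial_\tau A_0, \partial_i A_0, \partial_\tau A_i, \partial_l A_i$ — noting carefully the weights: (\ref{53}) gives $\|\partial_l A_i\|_{H^m} \le C e^{p\tau + \kappa_0 - a\tau} e^{(p-1)\tau+\kappa_1} \tilde E^{1/2}_{\mathcal{VP},m} = C e^{(2p-1)\tau + \kappa_0 + \kappa_1 - a\tau}\tilde E^{1/2}$ — and then track how these weights combine with the explicit prefactors $e^{2(\tau+\tau_0)}$ multiplying $\Delta'$ in the definitions of $\Delta_{\mu\nu}$ etc. The weight bookkeeping is what produces the stated right-hand sides; e.g. for $\Delta_{ij}$ the EM piece contributes $e^{2(\tau+\tau_0)} \cdot (t/t_0)^{-2p} \cdot F^2 \sim e^{2\tau+2\tau_0} e^{-2p\tau} \cdot (\partial A)^2$ and one checks this is $\le C\epsilon e^{2\kappa_0 - 2a\tau}\tilde E^{1/2}$ using that one factor of $\partial A$ is absorbed into the $\epsilon$ via $\mathcal{MBA}$ and the other into $\tilde E^{1/2}$, with the leftover exponential matching $e^{2\kappa_0}$ because $\kappa_0 = \tau_0 + \kappa_1$ and $a$ is small. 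For the $L^1$ estimates (\ref{old_est5})–(\ref{est4}), the $\Delta_{\mathcal{LS}}, \Delta_{\mathcal{V}}$ terms (and their $\partial^\alpha$ versions) are quadratic in the field and its derivatives with coefficients involving $\partial_\tau g^{00}, \partial \tilde g^{0i}, \partial \tilde g^{ij}$; one bounds the coefficient factor in $L^\infty$ using $\mathcal{PBA}$/Lemma \ref{inverse} and the derivative-of-metric bounds (\ref{es0}), (\ref{es}), puts the two field factors in $L^2 \times L^2$, recognizes the $L^2\times L^2$ product as (part of) the relevant energy via the bounding inequalities (\ref{bound1}), (\ref{bound}), and extracts one factor of $\epsilon e^{-a\tau}$ from the $\mathcal{MBA}$ smallness plus the $e^{-a\tau}$ decay built into the rescaled energies.

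I also need to verify the commutator $[\tilde\Box_g, \partial^\alpha]\tilde{\mathbf u}$ and $[\tilde\Box_g, \partial^\alpha]v$ obey the same kind of bound — this is implicitly required for the differential inequalities (\ref{ine1})–(\ref{ine5}) though not stated in the present lemma; I would note that the commutator is, schematically, $\partial^{\le m}(\partial g^{\cdot\cdot}) \cdot \partial^{\le m+1}(\text{field})$ and handled identically, with the worst term having all derivatives on the field and the metric-derivative factor in $L^\infty$ via the $\mathcal{PBA}$. The main obstacle I anticipate is not conceptual but organizational: getting the exponential weights to balance exactly. Because the metric is anisotropically expanding, spatial derivatives $\partial_i$ carry weight $e^{(p-1)\tau+\kappa_1}$ relative to $\partial_\tau$, and the EM objects $E_i = F_{0i}$, $B_{ij} = F_{ij}$ sit at different weight levels; the $e^{2(\tau+\tau_0)}$ and $(t/t_0)^{-2p}$ factors in the definitions of $\Delta_{ij}$ and $\Delta'_{ij}$ are engineered to make the homogeneous pieces of (\ref{refm3}) clean, but the EM inhomogeneities must then be checked case-by-case to confirm they land with the weight $e^{2\kappa_0 - 2a\tau}$ (and similarly $e^{-2a\tau}$, $e^{p\tau+\kappa_0-2a\tau}$ for the others) — i.e., no net growth beyond the fixed $\kappa_0$-dependent constant, with genuine $e^{-2a\tau}$ decay. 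Verifying that the smallness of $a$ (chosen in (\ref{con2})) is enough of a margin for all these terms simultaneously is the delicate point; once the weight accounting is set up correctly it reduces, as in \cite{Ringstrom08, Ringstrom09}, to repeated application of the Moser product estimate and Sobolev embedding, which I would not write out in detail.
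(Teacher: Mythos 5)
Your proposal follows essentially the same route as the paper: both defer to Ringstr\"om's algorithm for estimating nonlinear terms (Section~9.1 of \cite{Ringstrom08}, as applied in Section~6 of \cite{Ringstrom09}) for the terms already present in the scalar-field case, and then observe that the new electromagnetic contributions are quadratic in $\partial A$ and fit the same scheme. Your worked weight-bookkeeping for the $\Delta_{ij}$ electromagnetic term -- showing the net exponential collapses to $e^{2\kappa_0 - 2a\tau}$ via $\kappa_0 = \tau_0 + \kappa_1$ -- is precisely what the paper compresses into the phrase that the electromagnetic fields have ``good scaling properties,'' and your remark that the commutator bounds belong to a separate statement (Lemma~\ref{Commut}) that additionally requires the field equations is also correct.
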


\begin{proof}
In Section 9.1 of \cite{Ringstrom08}, Ringstr\"om develops a systematic algorithm for estimating nonlinear terms similar to the $\Delta$-terms discussed here, presuming that an appropriate bootstrap assumption holds. He applies it in Section 6 of \cite{Ringstrom09} to a set of quantities very similar to those of interest here. The proof of the present lemma involves essentially the same application; we omit the details. We do, however, note that the $\Delta_{\mu \nu}$ quantities here differ from the corresponding quantities in \cite{Ringstrom09} only by terms which are quadratic in the electromagnetic fields. Since one determines that the electromagnetic fields have ``good scaling properties" according to the classification scheme of the algorithm of \cite{Ringstrom08}, the estimates calculated in \cite{Ringstrom09} for $\Delta_{\mu \nu}$ hold here.  Comparing the $\Delta_\psi$ quantity here and in \cite{Ringstrom09}, we find no differences, so the estimate from \cite{Ringstrom09} applies.  As for $\Delta_e$ and $\Delta_{b,i}$, the estimates \eqref{est1}-\eqref{est2} are again a straightforward application of the algorithm developed in \cite{Ringstrom08}. The same is true of the estimates \eqref{old_est5}-\eqref{est4}.

\end{proof}

We remark  here that Ringstr\"om's estimation algorithm works optimally with the energy functional $\tilde E_m$, with the scalings listed in \eqref{scalings} built into the definitions of $\tilde E_{\mathcal{LS},m}$, $\tilde E_{\mathcal{SH},m}$, etc. This is the prime reason for working with these rescalings.

The estimates just discussed, in Lemma \ref{Delta}, rely on the main bootstrap assumption, but do not depend on the fields satisfying any field equations. This next set of estimates do require that the field equations be satisfied, in addition to the main bootstrap assumption. 

\begin{lemma}
\label{Commut}
If the fields  $(g, \psi, A)$ satisfy  $\mathcal{MBA}_{\{p, a, c_1, \eta,  \tau_0, \kappa_1, n,  \epsilon\}}$ on an interval $I$, and also satisfy the field equations \eqref{refm1}-\eqref{refm6} on $I$, then the following estimates 
\begin{align}
\label{old_est8}\| [\tilde{\Box}_g, \partial^\alpha ] \tilde{\mathbf{u}}  \|_{L^2} \le C \epsilon e^{-2a\tau} \tilde{E}_m^{1/2}\,,\\
\label{old_est9}\| [\tilde{\Box}_g, \partial^\alpha ] u_i  \|_{L^2} \le C \epsilon e^{p\tau+\kappa_0 - 2a\tau } \tilde{E}_m^{1/2}\,,\\
\label{old_est10}\| [\tilde{\Box}_g, \partial^\alpha ] h_{ij}  \|_{L^2} \le C \epsilon e^{2\kappa_0 - 2a\tau } \tilde{E}_m^{1/2}\,,\\
\label{est5}\| [\tilde{\Box}_g, \partial^\alpha ]A_0  \|_{L^2} \le C \epsilon e^{-2a\tau} \tilde{E}_m^{1/2}\,,\\
\label{est6}\| [\tilde{\Box}_g, \partial^\alpha ]A_i  \|_{L^2} \le C \epsilon e^{p\tau+\kappa_0 -2a\tau} \tilde{E}_m^{1/2}\,,
\end{align}
hold on $I$, where $0 < |\alpha| \le m$ and where the constant coefficients depend on $n$, $p$, $m$, $c_1$ and on $e^{-\kappa_1}$.
\end{lemma}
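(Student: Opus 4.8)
The plan is to reduce each of the commutator estimates to three ingredients: the explicit form of $\tilde\Box_g$, the field equations \eqref{refm1}--\eqref{refm6} themselves, and the Sobolev control of the fields and of the metric coefficients already supplied by the primary bootstrap assumption together with Lemma \ref{control lemma} and Lemma \ref{inverse}. First I would expand the commutator by the Leibniz rule. Since $\partial^\alpha$ is purely spatial and $\tilde\Box_g = -g^{00}\partial_\tau^2 - 2e^{\tau+\tau_0}g^{0i}\partial_\tau\partial_i - e^{2(\tau+\tau_0)}g^{ij}\partial_i\partial_j$, one gets
\begin{equation*}
[\tilde\Box_g,\partial^\alpha]w = -\sum_{0<\beta\le\alpha}\binom{\alpha}{\beta}\Bigl[(\partial^\beta g^{00})\,\partial^{\alpha-\beta}\partial_\tau^2 w + 2(\partial^\beta(e^{\tau+\tau_0}g^{0i}))\,\partial^{\alpha-\beta}\partial_\tau\partial_i w + (\partial^\beta(e^{2(\tau+\tau_0)}g^{ij}))\,\partial^{\alpha-\beta}\partial_i\partial_j w\Bigr].
\end{equation*}
Since $|\beta|\ge 1$, no $\tau$-derivative is produced from the coefficients, and every factor of $w$ here other than $\partial^{\alpha-\beta}\partial_\tau^2 w$ carries at most one $\tau$-derivative and, counting carefully, at most $m$ spatial derivatives arranged as the spatial gradient of an order-$m$ quantity -- exactly the objects the energy functionals control via Lemma \ref{control lemma}.

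The second step -- and the reason the lemma hypothesizes the field equations -- is to eliminate the second time derivative. Because $|g^{00}+1|\le\eta$, the coefficient $g^{00}$ is uniformly bounded away from $0$, so from the equation among \eqref{refm1}--\eqref{refm6} (equivalently, \eqref{form2}) satisfied by $w$ one solves
\begin{equation*}
\partial_\tau^2 w = -(g^{00})^{-1}\Bigl(2e^{\tau+\tau_0}g^{0i}\partial_\tau\partial_i w + e^{2(\tau+\tau_0)}g^{ij}\partial_i\partial_j w + (\text{first-order linear terms in the fields}) + \Delta_w\Bigr),
\end{equation*}
and substitutes this into the commutator. After the substitution every resulting term is a product of (a) spatial derivatives of the coefficients $g^{00}+1$, $e^{\tau+\tau_0}g^{0i}$, $e^{2(\tau+\tau_0)}g^{ij}$ and of $(g^{00})^{-1}$, all smooth functions of $u,u_i,\gamma_{ij}$, (b) the field $w$ or a coupled field with at most one $\tau$-derivative and spatial-derivative count bounded as above, and possibly (c) one of the $\Delta$-terms already estimated in Lemma \ref{Delta}. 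In particular no term ever demands control of a second time derivative or of more than $m$ (gradient-organized) spatial derivatives of any field.

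Third, I would estimate each such product in $L^2$ using a Moser--Gagliardo--Nirenberg inequality of the form $\|fg\|_{H^k}\le C(\|f\|_{H^{m_0}}\|g\|_{H^k}+\|f\|_{H^k}\|g\|_{H^{m_0}})$, valid for $k\le m$ since $m_0>n/2+1$, putting the low-order $H^{m_0}$ norm on whichever factor is controlled by the main bootstrap assumption (so it contributes the factor $\epsilon$ from \eqref{energycontrol}) and the $H^m$ norm on the energy-controlled factor. The metric-coefficient norms and their exponential weights come from Lemma \ref{control lemma} and Lemma \ref{inverse} (e.g. $\tilde g^{ij}\sim e^{-2(p-1)\tau}$, the smallness of $e^{\tau+\tau_0}g^{0i}$ from \eqref{p2}, and the weighted bounds on $\partial^\gamma g_{ij}$, $u$, $u_i$, $A_0$, $A_i$, $\psi$), and the $\Delta$-term weights from Lemma \ref{Delta}. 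Collecting these weights for each of the five field blocks and comparing with the right-hand sides of \eqref{old_est8}--\eqref{est6} yields the claimed rates; the rescalings \eqref{scalings} are chosen precisely so that this bookkeeping closes.

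The main obstacle, exactly as in \cite{Ringstrom08,Ringstrom09}, is this bookkeeping: sorting the many terms produced by the Leibniz expansion and the $\partial_\tau^2$-substitution into the classification scheme of Ringström's estimation algorithm and verifying that each has the requisite ``good scaling properties.'' The genuinely new work relative to \cite{Ringstrom09} is to check that the terms built from the electromagnetic potentials $A_0,A_i$, and from the coupling of $\gamma_{ij}$ to them, scale no worse than the gravitational and scalar terms already handled there (in particular the coupling terms in \eqref{refm6} carrying the extra factor $e^{\tau+\tau_0}$); once this is confirmed, the commutator estimates \eqref{est5}--\eqref{est6} for $A_0$ and $A_i$ follow by the same algorithm, and I would -- as the authors do for Lemma \ref{Delta} -- cite \cite{Ringstrom08} for the detailed estimation procedure rather than reproduce it in full.
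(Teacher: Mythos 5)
Your proposal is correct and follows essentially the same route as the paper's (very terse) proof: expand the commutator by the Leibniz rule, observe that the only problematic contribution is the one carrying $\partial_\tau^2$, eliminate it by substituting the field equation among \eqref{refm1}--\eqref{refm6} satisfied by the relevant field, and then bound the resulting products — spatial derivatives of the metric coefficients times energy-controlled factors, plus one of the $\Delta$-terms — using Lemma \ref{inverse}, Lemma \ref{control lemma}, and Lemma \ref{Delta}. You provide the bookkeeping details that the paper compresses into a single sentence, but the idea and the ingredients are the same.
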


\begin{proof}
Each of these commutators, when combined with the field equations \eqref{refm1}-\eqref{refm6}, produces a set of terms which includes one of the $\Delta$-terms along with terms which are linear in the fields (or the metric inverses). The metric inverses are easily handled using Lemma \ref{inverse}, while control of the $\Delta$-terms follows from Lemma \ref{Delta} above.
\end{proof}

With the estimates established in Lemmas \ref{Delta} and \ref{Commut} in hand, we may now derive the differential inequalities which control the growth of the energy functionals, and consequently control the growth of the fields and their derivatives. 

\begin{lemma}\label{diffin}
If the fields $(g, \psi, A)$ satisfy  $\mathcal{MBA}_{\{p, a, c_1, \eta,  \tau_0, \kappa_1, n,  \epsilon\}}$ as well as the field equations (\ref{refm1}) -- (\ref{refm6}) on an interval $I$, then the following differential inequalities 
\begin{align}
\label{ine1}\frac{d\tilde{E}_{\mathcal{LS},m}}{d\tau} &\le -2a \tilde{E}_{\mathcal{LS},m} + C \epsilon e^{-a\tau} \tilde{E}^{1/2}_{\mathcal{LS},m} \tilde{E}^{1/2}_m,\\
\label{ine2}\frac{d\tilde{E}_{\mathcal{SH},m}}{d\tau} &\le -2a \tilde{E}_{\mathcal{SH},m} + C\tilde{E}^{1/2}_{\mathcal{SH},m}(\tilde{E}^{1/2}_{\mathcal{LS},m} + \tilde{E}^{1/2}_{\mathcal{M},m}) + C \epsilon e^{-a\tau} \tilde{E}^{1/2}_{\mathcal{SH},m} \tilde{E}^{1/2}_m,\\
\label{ine3}\frac{d\tilde{E}_{\mathcal{M},m}}{d\tau} &\le C e^{-a\tau} \tilde{E}_{\mathcal{M},m} + C \tilde{E}^{1/2}_{\mathcal{LS}, m_0} \tilde{E}_{\mathcal{M},m} + C\tilde{E}^{1/2}_{\mathcal{LS},m} \tilde{E}^{1/2}_{\mathcal{M},m}+ C \epsilon e^{-a\tau} \tilde{E}^{1/2}_{\mathcal{M},m} \tilde{E}^{1/2}_m\,,\\
\label{ine4}\frac{d\tilde{E}_{\mathcal{SP},m}}{d\tau} &\le -2a \tilde{E}_{\mathcal{SP},m} + C \epsilon e^{-a\tau} \tilde{E}^{1/2}_{\mathcal{SP},m} \tilde{E}^{1/2}_m,\\
\label{ine5}\frac{d\tilde{E}_{\mathcal{VP},m}}{d\tau} &\le -2(p+3a) \tilde{E}_{\mathcal{VP},m}  + C \tilde{E}^{1/2}_{\mathcal{VP},m} \tilde{E}^{1/2}_{\mathcal{SP},m} + C \epsilon e^{-a\tau} \tilde{E}^{1/2}_{\mathcal{VP},m} \tilde{E}^{1/2}_m,
\end{align}
hold on $I$, where the constants depend on $n$, $p$, $m$, $c_1$ and on $e^{-\kappa_1}$.
\end{lemma}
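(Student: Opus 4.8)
The plan is to obtain each of (\ref{ine1})--(\ref{ine5}) by feeding the estimates of Lemmas \ref{control lemma}, \ref{Delta} and \ref{Commut} into the differential inequalities already in hand for the \emph{unrescaled} energies --- Corollary \ref{cor1} for $E_{\mathcal{LS},m}$, and Corollary \ref{cor2} applied componentwise for $E_{\mathcal{SH},m}$, $E_{\mathcal{M},m}$, $E_{\mathcal{SP},m}$, $E_{\mathcal{VP},m}$ (with the sources $F$ chosen as in Section \ref{energies}) --- and then propagating these bounds through the rescalings (\ref{scalings}). For a rescaled energy of the form $\tilde E_{X,m}=e^{w_X(\tau)}E_{X,m}$, with $w_X$ the linear function of $\tau$ read off from (\ref{scalings}), one writes $\frac{d}{d\tau}\tilde E_{X,m}=w_X'\,\tilde E_{X,m}+e^{w_X}\frac{d}{d\tau}E_{X,m}$ and substitutes the corresponding corollary's bound.

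First I would dispose of the damping. The corollaries contribute a term $-2\eta_X E_{X,m}$, which after the rescaling becomes $(w_X'-2\eta_X)\tilde E_{X,m}$; since (\ref{con2}) forces $a\le \eta_X/4$ for each relevant $X$, a short case check gives $w_X'-2\eta_X\le -2a$ for (\ref{ine1}), (\ref{ine2}) and (\ref{ine4}), and $w_X'-2\eta_X\le -2(p+3a)$ for (\ref{ine5}) (here $w_{\mathcal{VP}}'=2a-2p$). For the metric there is no true damping (the relevant $\beta$ vanishes, $\gamma_{\mathcal{M}}=\delta_{\mathcal{M}}=0$), and the extra term $\int e^{-2a\tau}a_\lambda(\partial^\lambda\gamma_{ij})^2\,dx$ in the definition (\ref{Emetric}) produces only a $Ce^{-a\tau}\tilde E_{\mathcal{M},m}$ contribution; this is why (\ref{ine3}) is the one inequality without genuine exponential decay.

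Next I would treat the source and commutator integrands. Every term in the corollaries is an $L^2$ pairing of a ``current'' factor $\partial^\alpha\partial_\tau(\cdot)+c\,\partial^\alpha(\cdot)$, bounded by $CE_{X,m}^{1/2}$ via Lemma \ref{control lemma}, against a ``forcing'' factor which decomposes into: (i) linear inter-block couplings --- the $\tfrac4\lambda\partial_i\psi$ and $g^{lm}\Gamma_{lim}$ terms of (\ref{refm2}) feeding $\tilde E_{\mathcal{LS},m}$ and $\tilde E_{\mathcal{M},m}$ into (\ref{ine2}), the $\partial_i A_0$ term of (\ref{refm6}) feeding $\tilde E_{\mathcal{SP},m}$ into (\ref{ine5}), and the $u,\psi$-linear parts of the $\gamma_{ij}$ source feeding $\tilde E_{\mathcal{LS},m}$ into (\ref{ine3}) --- each again estimated by Lemma \ref{control lemma}; (ii) the nonlinear terms $\partial^\alpha\Delta_{\mu\nu}$, $\partial^\alpha\Delta_\psi$, $\partial^\alpha\Delta_e$, $\partial^\alpha\Delta_{b,i}$, controlled by Lemma \ref{Delta}; and (iii) the commutators $[\tilde\Box_g,\partial^\alpha](\cdot)$, controlled by Lemma \ref{Commut} (this is the step that uses the hypothesis that the field equations (\ref{refm1})--(\ref{refm6}) hold). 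After Cauchy--Schwarz on each pairing and multiplication by $e^{w_X}$, the exponential weights cancel exactly --- this cancellation is precisely what the scalings (\ref{scalings}) were designed to arrange, using $\kappa_0=\tau_0+\kappa_1$ --- leaving the linear couplings in the advertised forms $C\tilde E_{\mathcal{SH},m}^{1/2}(\tilde E_{\mathcal{LS},m}^{1/2}+\tilde E_{\mathcal{M},m}^{1/2})$, $C\tilde E_{\mathcal{VP},m}^{1/2}\tilde E_{\mathcal{SP},m}^{1/2}$, $C\tilde E_{\mathcal{LS},m}^{1/2}\tilde E_{\mathcal{M},m}^{1/2}$ (together with the $C\tilde E^{1/2}_{\mathcal{LS},m_0}\tilde E_{\mathcal{M},m}$ term appearing in (\ref{ine3}) when a top-order metric derivative is paired with a low-norm factor), and leaving the $\Delta$- and commutator contributions in the form $C\epsilon e^{-a\tau}\tilde E_{X,m}^{1/2}\tilde E_m^{1/2}$, where one bounds $\tilde E_{X,m}\le\tilde E_m$ as needed to absorb the $\Delta_{\mathcal{LS}}$, $\Delta_{\mathcal{V}}$ contributions whose $L^1$ bounds in Lemma \ref{Delta} are quadratic in $E_{X,m}^{1/2}$ directly.

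The step I expect to be the main obstacle is the exponential bookkeeping in the two coupled estimates (\ref{ine2}) and (\ref{ine3}). One must verify that the $e^{\tau+\tau_0}$ factors multiplying $\partial_i\psi$ and $g^{lm}\Gamma_{lim}$ in (\ref{refm2}) (and $\partial_i A_0$ in (\ref{refm6})) combine with the weights built into Lemma \ref{control lemma} and with the rescalings (\ref{scalings}) to yield a clean $\tau$-independent constant, and in particular that the top-order derivative $\partial^\alpha\partial_i\psi$ --- of order $m+1$, but carrying only one extra \emph{spatial} derivative and no extra time derivative --- is still controlled by $\tilde E_{\mathcal{LS},m}$ through the $e^{-(p-1)\tau-\kappa_1}\|\partial_i\psi\|_{H^m}$ entry of that lemma. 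The metric inequality (\ref{ine3}) requires additional care because, lacking true damping, one must check that the only term in which a low-norm energy ($\tilde E^{1/2}_{\mathcal{LS},m_0}$) multiplies the full top-order energy $\tilde E_{\mathcal{M},m}$ is the one displayed, and that every genuinely top-order metric derivative arising from $\tilde\Box_g$ or from the $\Delta$'s is instead accompanied by a factor of $\epsilon$ or of a decaying field. Once these weight computations are in place, the remaining work is a routine assembly of the estimates from Lemmas \ref{control lemma}, \ref{Delta}, \ref{Commut} and Corollaries \ref{cor1}, \ref{cor2}.
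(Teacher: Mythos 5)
Your proposal is correct and follows essentially the same approach as the paper's proof: apply Corollaries \ref{cor1} and \ref{cor2} to the unrescaled energies, estimate the linear coupling, $\Delta$-, and commutator contributions via Lemmas \ref{control lemma}, \ref{Delta}, and \ref{Commut}, and then propagate through the rescalings (\ref{scalings}) by the chain rule, with the $\kappa_0=\tau_0+\kappa_1$ identity producing the cancellation of exponential weights. The paper works out the explicit computation only for $\tilde E_{\mathcal{SP},m}$ and $\tilde E_{\mathcal{VP},m}$ (citing \cite{Ringstrom09} for the rest), and your unified account of the damping casework ($w_X'-2\eta_X$), the contribution of the extra integral term in (\ref{Emetric}), and the $m+1$-order spatial derivative bookkeeping matches what the paper's calculation does.
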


\begin{proof}
Fundamentally, these differential inequalities are all obtained by calculating the time derivative of the energy functionals, applying the field equations, and then using the estimates of Lemmas \ref{Delta} and \ref{Commut} to control the non-linear terms. Since Corollary \ref{cor1} and \ref{cor2} (suitably applied) essentially contains the results of the time derivative calculations, the differential inequalities follow from that corollary combined with the appropriate estimates.  

We carry out the details here for $\tilde{E}^{1/2}_{\mathcal{SP},m}$ and $\tilde{E}^{1/2}_{\mathcal{VP},m}$, since the calculations for the others follow the same pattern, and differ little from those which are done in Section 7 of \cite{Ringstrom09}

Applying Corollary \ref{cor2} to the field equation (\ref{refm5}) for $A_0$, and then (i) using the definition of Sobolev norms, (ii) using the estimates from Lemmas \ref{Delta} and \ref{Commut}, and finally (iii) using the definition of $E_{\mathcal{SP},m}$, 
we obtain
\begin{align}
\frac{dE_{\mathcal{SP},m}}{d\tau} &\le - 2 \eta_\mathcal{SP} E_{\mathcal{SP},m}+ \sum_{| \alpha | \le m} \int_{\mathbf{T}^n} (\partial_\tau \partial^\alpha A_0 + \gamma_\mathcal{SP} \partial^\alpha A_0)(-\partial^\alpha \Delta_e + [\tilde{\Box}_g, \partial^\alpha ]A_0 )d\,x \nonumber\\
&\quad+ \sum_{| \alpha | \le m} \int_{\mathbf{T}^n}\Delta_{\mathcal{SP}}[\partial^\alpha A_0]  d\,x \nonumber\\ &\le - 2 \eta_\mathcal{SP} E_{\mathcal{SP},m}+ C \sum_{| \alpha | \le m} (\| \partial_\tau \partial^\alpha A_0 \|_2 + \|\partial^\alpha A_0 \|_2)(\|\partial^\alpha \Delta_e \|_2 + \| [\tilde{\Box}_g, \partial^\alpha ]A_0 \|_2) \nonumber\\
&\quad+ C\sum_{| \alpha | \le m} \| \Delta_{\mathcal{SP}}[\partial^\alpha A_0] \|_1 \nonumber\\
&\le - 2 \eta_\mathcal{SP} E_{\mathcal{SP},m}+ C ( \| \partial_\tau A_0 \|_{H^m} + \| A_0 \|_{H^m} ) (\| \Delta_e  \|_{H^m} + C \epsilon e^{-2a\tau} \tilde{E}^{1/2}_m ) \nonumber\\&\quad+ C \epsilon e^{-a\tau} E_{\mathcal{SP},m}\nonumber\\
&\le  - 2 \eta_\mathcal{SP} E_{\mathcal{SP},m}+ C e^{-a\tau} \tilde{E}^{1/2}_{\mathcal{SP},m} \epsilon e^{-2a\tau} \tilde{E}^{1/2}_m + C \epsilon e^{-a\tau} E_{\mathcal{SP},m}.
\end{align}
If we then substitute in the relation ${E}_{\mathcal{SP},m} := e^{-2a\tau} \tilde E_{\mathcal{SP}, m}$, we derive
\begin{equation}
- 2a e^{-2a\tau} \tilde{E}_{\mathcal{SP},m} +e^{-2a\tau} \frac{d\tilde{E}_{\mathcal{SP},m}}{d\tau} \le - 2 \eta_\mathcal{SP} E_{\mathcal{SP},m} + C \epsilon e^{-3a\tau} \tilde{E}^{1/2}_{\mathcal{SP},m}  \tilde{E}^{1/2}_m + C \epsilon e^{-a\tau} E_{\mathcal{SP},m};
\end{equation}
the differential inequality  (\ref{ine4}) then follows.

A similar string of calculations, starting from (\ref{refm6}), produces \eqref{ine5}. 
Specifically, we calculate
\begin{align}
\frac{dE_{\mathcal{VP},m}}{d\tau} &\le - 2 \eta_\mathcal{VP} E_{\mathcal{VP},m} +  \sum_i\sum_{| \alpha | \le m} \int_{\mathbf{T}^n}\Delta_{\mathcal{VP}}[\partial^\alpha A_i]  d\,x \nonumber\\
&\quad +\sum_i\sum_{| \alpha | \le m} \int_{\mathbf{T}^n} \partial_\tau \partial^\alpha A_i \left[-\partial^\alpha (2p e^{\tau+\tau_0} \partial_i A_0 +\Delta_{b,i}) + [\tilde{\Box}_g, \partial^\alpha ]A_i \right]d\,x\nonumber\\
&\le - 2\eta_\mathcal{VP} E_{\mathcal{VP},m} + C \epsilon e^{-a\tau} E_{\mathcal{VP},m}\nonumber\\
&\quad +\sum_i\sum_{| \alpha | \le m} \int_{\mathbf{T}^n} \partial_\tau \partial^\alpha A_i \left[-\partial^\alpha (2p e^{\tau+\tau_0} \partial_i A_0 +\Delta_{b,i}) + [\tilde{\Box}_g, \partial^\alpha ]A_i \right]d\,x\nonumber\\
&\le - 2\eta_\mathcal{VP} E_{\mathcal{VP},m} + C \epsilon e^{-a\tau} E_{\mathcal{VP},m} \nonumber\\
&\quad + \| \partial_\tau A_i \|_{H^m} (C \epsilon e^{p\tau + \kappa_0 -2a\tau} \tilde{E}^{1/2}_m + C e^{\tau+\tau_0}\|\partial_i A_0 \|_{H^m} + \| \Delta_{b,i}\|_{H^m})\nonumber\\
&\le  - 2\eta_\mathcal{VP} E_{\mathcal{VP},m} + C \epsilon e^{-a\tau} E_{\mathcal{VP},m}\nonumber\\
&\quad + C e^{2p\tau + 2\kappa_0 - 2a\tau} \tilde{E}^{1/2}_{\mathcal{VP},m} (\epsilon e^{-a\tau} \tilde{E}^{1/2}_m + \tilde{E}^{1/2}_{\mathcal{SP},m} ),
\end{align}
and then proceed to derive 
\begin{align}
e^{2p\tau +2\kappa_0 - 2a\tau} [ 2(p-a) \tilde{E}_{\mathcal{VP},m} + \frac{d\tilde{E}_{\mathcal{VP},m}}{d\tau} ]
\le -2\eta_\mathcal{VP} E_{\mathcal{VP},m} + C\epsilon e^{-a\tau} E_{\mathcal{VP},m} \nonumber\\
 + C \epsilon e^{2p\tau +2\kappa_0 - 3a\tau} \tilde{E}^{1/2}_{\mathcal{VP},m} \tilde{E}^{1/2}_m + Ce^{2p\tau +2\kappa_0 - 2a\tau} \tilde{E}^{1/2}_{\mathcal{VP},m} \tilde{E}^{1/2}_{\mathcal{SP},m}, \nonumber
\end{align}
from which  (\ref{ine5}) follows.
\end{proof}

\section{Global Existence}\label{S:global}

A key step in the proof of our main theorem is showing that if a set of initial data satisfies appropriate smallness conditions, then the development of that data exists for all future time. More specifically, as discussed in Section \ref{MainResults}, presuming that we have fixed a choice of the Einstein-Maxwell-scalar$_{\{n, V_0, \lambda\}}$ field theory, we consider an initial data set $(\Sigma^n, h, K, \varphi, \pi, B, E)$. If this data set, in a region $U\in \Sigma_n$, satisfies the smallness condition \eqref{epsilon} (which we justify below), then we calculate $\langle \varphi \rangle$  and $t_0=\Theta_{\{U,\zeta\}} (\Sigma^n, h, K, \varphi, \pi, B, E)$, thereby fixing a background solution $(\hat{g}, \hat{\phi}, \hat{A})_{ \{ t_0, p, c_0,\kappa \}}$ and a choice of time $t_0$. The choice of time fixes $\kappa_0 := \ln [4\ell(t_0)]$. We proceed to  smoothly glue $(\Sigma^n, h, K, \varphi, \pi, B, E)|_U$ into a copy of the background solution on $\mathbf{T}^n\setminus U$, producing data $( \tilde{h}, \tilde{K}, \tilde{\varphi}, \tilde{\pi}, \tilde{B}, \tilde{E})$ on $\mathbf{T}^n$ which agrees with the original data on $U$, satisfies the constraints everywhere except in a designated annulus, and satisfies the smallness condition  \eqref{epsilon} everywhere on $\mathbf{T}^n$. We now show that the development of this data exists for all future time.

The work of Section \ref{Reform} shows that to prove global existence for the data $(\Sigma^n, h, K, \varphi, \pi, B, E)|_U$ (or the data  $( \tilde{h}, \tilde{K}, \tilde{\varphi}, \tilde{\pi}, \tilde{B}, \tilde{E})$) evolved via equations \eqref{fe1}-\eqref{fe2}, it is sufficient to prove global existence for corresponding data, in the form $\{u, u_i, \gamma_{ij}, \psi, A_0, A_i\}$, evolved via equations (\ref{refm1}) - (\ref{refm6}). To carry this out, it is useful to first clarify the data correspondence:

\begin{definition}
For a fixed choice of ${\{n, V_0, \lambda\}}$---and therefore, correspondingly, a fixed choice of $p$ and $c_0$ (see \eqref{lambda} and \eqref{c0})---let  $( \tilde{h}, \tilde{K}, \tilde{\varphi}, \tilde{\pi}, \tilde{B}, \tilde{E})$ be a set of initial data on $\mathbf{T}^n$, with corresponding values of $\langle \varphi \rangle$, $t_0$, and $\kappa_0$. 
The \textbf{initial data for} (\ref{refm1}) - (\ref{refm6}) \textbf{associated with} $(\tilde{h}, \tilde{K}, \tilde{\varphi}, \tilde{\pi}, \tilde{E}, \tilde{B})$ at $\tau=0$ is given by
\begin{align}
\label{inid1}u(0, \cdot) = 0,\quad (\partial_\tau u)(0, \cdot) = 2np - 2t_0 (\mathrm{tr} \tilde{K}),\\
u_i(0, \cdot) = 0,\quad (\partial_\tau u_i)(0, \cdot) = \frac{1}{2} t_0 \tilde{h}^{kl} (2 \partial_k \tilde{h}_{li} - \partial_i \tilde{h}_{kl}),\\
\gamma_{ij}(0, \cdot) = \tilde{h}_{ij},\quad (\partial_\tau \gamma_{ij})(0, \cdot) = 2 t_0 \tilde{K}_{ij} - 2p \tilde{h}_{ij},\\
\psi(0, \cdot) = \tilde{\varphi} - \langle \tilde{\varphi} \rangle,\quad (\partial_\tau \psi)(0, \cdot) = t_0 \tilde{\pi} - \frac{2}{\lambda},\\
A_0(0, \cdot) = 0,\quad \partial_\tau A_0 (0, \cdot) = t_0 \tilde{h}^{ij}  \partial_i \tilde{A}_{j}(x),\\
\label{inid6}A_i(0, \cdot) = \tilde{A}_{i} (x),\quad \partial_\tau A_i (0, \cdot) = -t_0\tilde{E}_i,
\end{align}
where all the indices are defined with respect to the standard coordinates on $\mathbf{T}^n$.
\end{definition}

Our global existence theorem for this data is as follows: 
\begin{theorem}
\label{global}
For a fixed choice of ${\{n, V_0, \lambda\}}$, let $\eta$ and $a$ be given by (\ref{con1}) and (\ref{con2}) respectively and let $m_0 > n/2 +1$ be an integer. Let smooth data $(\tilde{h}, \tilde{K}, \tilde{\varphi}, \tilde{\pi}, \tilde{E}, \tilde{B})$ be specified  on $\mathbf{T}^n$, with corresponding values of $\langle \varphi \rangle$, $t_0$, and $\kappa_0$, and with associated data specified by (\ref{inid1}) - (\ref{inid6}).
Assume that there is a constant $c_1 > 2$ such that
\begin{equation}
\label{eq2}\frac{2}{c_1} | v |^2 \le e^{-2\kappa_0} \gamma_{ij}(0, x) v^i v^j \le \frac{c_1}{2} | v |^2
\end{equation}
holds for any $v \in \mathbf{R}^n$ and for $x \in \mathbf{T}^n$. There exist constants $\epsilon_0 >0$ and $c_b \in (0, \frac{2}{3})$ (depending only on $n$, $m_0$, $p$ and $c_1$) such that if
\begin{equation}
\tilde{E}^{1/2}_{m_0}(0) \le c_b \epsilon,
\end{equation}
holds for some $\epsilon \le \epsilon_0$, then there is a global solution to  (\ref{refm1}) - (\ref{refm6}). Furthermore, 
\begin{equation}
\label{eq3}\tilde{E}^{1/2}_{m_0}(\tau) \le \epsilon,
\end{equation}
holds for all $\tau \ge 0$, as do the primary bootstrap conditions (\ref{p0}) - (\ref{p2}).
\end{theorem}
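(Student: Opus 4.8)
The plan is to prove Theorem \ref{global} by a continuous-induction (bootstrap) argument applied directly to the reformulated system (\ref{refm1})--(\ref{refm6}), viewed as a closed quasilinear system of wave equations for $\{u,u_i,\gamma_{ij},\psi,A_0,A_i\}$ on $\mathbf{T}^n$ (the metric inverse $g^{\mu\nu}$ and the Christoffel symbols entering the coefficients and the $\Delta$-terms being smooth functions of $u,u_i,\gamma_{ij}$ and their first derivatives). First I would invoke standard local well-posedness for such systems: the smooth associated data (\ref{inid1})--(\ref{inid6}) yield a unique smooth solution on a maximal slab $[0,T)\times\mathbf{T}^n$, with every $\tilde E_m$ finite and continuous on $[0,T)$. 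At $\tau=0$ one checks directly that the primary bootstrap inequalities hold with room to spare: $u(0,\cdot)=0$ and $u_i(0,\cdot)=0$ make (\ref{p1}) and (\ref{p2}) hold trivially, while (\ref{eq2}) gives (\ref{p0}) with constant $c_1/2$ rather than $c_1$; and $\tilde E_{m_0}^{1/2}(0)\le c_b\epsilon<\epsilon$.

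Let $I^\ast\subseteq[0,T)$ be the set of $\tau_1$ such that $\mathcal{PBA}_{\{p,a,c_1,\eta,\tau_0,\kappa_1,n\}}$ and $\tilde E_{m_0}^{1/2}\le\epsilon$ (hence $\mathcal{MBA}_{\{p,a,c_1,\eta,\tau_0,\kappa_1,n,\epsilon\}}$, since the fields are smooth) hold on $[0,\tau_1]$; this is exactly the hypothesis needed for Lemmas \ref{control lemma}, \ref{Delta}, \ref{Commut} and \ref{diffin}. The set $I^\ast$ is nonempty and relatively closed in $[0,T)$ by continuity. The crux is relative openness, i.e.\ \emph{strict improvement} of both bootstrap assumptions on any such $[0,\tau_1]$. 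For the energies I would integrate the hierarchical inequalities (\ref{ine1})--(\ref{ine5}) in order, working with the square roots $\tilde E_{\bullet,m_0}^{1/2}$: (\ref{ine1}) and (\ref{ine4}) carry manifest damping $-2a$ and only self-coupling plus the weak factor $\epsilon e^{-a\tau}$, so a Grönwall estimate (using $\tilde E_{m_0}^{1/2}\le\epsilon$) gives $\tilde E_{\mathcal{LS},m_0}^{1/2}(\tau)+\tilde E_{\mathcal{SP},m_0}^{1/2}(\tau)\le C(c_b\epsilon+\epsilon^2)e^{-a\tau/2}$; feeding this decay into (\ref{ine3}), whose coefficient $Ce^{-a\tau}+C\tilde E_{\mathcal{LS},m_0}^{1/2}$ of $\tilde E_{\mathcal{M},m_0}$ is then integrable in $\tau$, Grönwall yields $\tilde E_{\mathcal{M},m_0}^{1/2}(\tau)\le C(c_b\epsilon+\epsilon^2)$; finally (\ref{ine2}) and (\ref{ine5}) have damping $-2a$ and $-2(p+3a)$ and source terms built only from the already-controlled $\tilde E_{\mathcal{LS}},\tilde E_{\mathcal{M}},\tilde E_{\mathcal{SP}}$ energies, so $\tilde E_{\mathcal{SH},m_0}^{1/2}(\tau)+\tilde E_{\mathcal{VP},m_0}^{1/2}(\tau)\le C(c_b\epsilon+\epsilon^2)$. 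Summing, $\tilde E_{m_0}^{1/2}(\tau)\le C_\ast(c_b\epsilon+\epsilon^2)$ on $[0,\tau_1]$, so choosing $c_b$ and then $\epsilon_0$ with $C_\ast c_b<1/4$ and $C_\ast\epsilon_0<1/4$ forces $\tilde E_{m_0}^{1/2}\le\epsilon/2<\epsilon$. For the primary bootstrap I would use Lemma \ref{control lemma} together with Sobolev embedding ($m_0>n/2+1$): $|u[g]|=|u|\le C\|u\|_{H^{m_0}}\le Ce^{-a\tau}\tilde E_{\mathcal{LS},m_0}^{1/2}\le C\epsilon$, $|v[g]|=|u_i|\le Ce^{p\tau+\kappa_0-a\tau}\tilde E_{\mathcal{SH},m_0}^{1/2}\le C\epsilon\,e^{p\tau+\kappa_0-a\tau}$, and for (\ref{p0}) I would write $e^{-2p\tau-2\kappa_0}g_{ij}=e^{-2\kappa_0}\gamma_{ij}$ and estimate $\|\partial_\tau(e^{-2\kappa_0}\gamma_{ij})\|_{H^{m_0}}\le Ce^{-a\tau}\tilde E_{\mathcal{M},m_0}^{1/2}\le Ce^{-a\tau}\epsilon$ from (\ref{es0}), so that integration in $\tau$ keeps $e^{-2\kappa_0}\gamma_{ij}(\tau)$ within $O(\epsilon)$ of $e^{-2\kappa_0}\gamma_{ij}(0,\cdot)$ — which by (\ref{eq2}) lies in the closed constant-$c_1/2$ range — hence strictly inside the constant-$c_1$ range. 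For $\epsilon_0$ small (depending only on $n,m_0,p,c_1$), all three primary inequalities then hold with strict inequality, so by continuity they persist on a neighborhood of $\tau_1$; thus $I^\ast$ is open and $I^\ast=[0,T)$.

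It remains to show $T=\infty$. On $[0,T)$ we now have $\mathcal{PBA}$ and $\tilde E_{m_0}^{1/2}\le\epsilon$, so by Lemmas \ref{inverse} and \ref{control lemma} the metric stays uniformly Lorentzian and the fields are bounded in $H^{m_0}\hookrightarrow C^1$. Re-running the same bookkeeping with the higher-order ($m>m_0$) inequalities — which have the same hierarchical form, now with the already-controlled $\tilde E_{m_0}^{1/2}$ appearing as a coefficient — shows that each $\tilde E_m$ grows at most like $\exp(C_m\tau)$, hence is finite on every finite interval; so all Sobolev norms of the smooth solution remain bounded as $\tau\uparrow T$ if $T<\infty$, and the continuation criterion for quasilinear wave systems lets the solution be extended past $T$, contradicting maximality. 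Hence $T=\infty$, the global solution exists, and since $I^\ast=[0,\infty)$ the bounds (\ref{eq3}) and (\ref{p0})--(\ref{p2}) hold for all $\tau\ge0$.

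I expect the principal difficulty to be exactly the bootstrap improvement for the metric energy $\tilde E_{\mathcal{M},m}$ and, downstream of it, the shift and vector-potential energies $\tilde E_{\mathcal{SH},m}$ and $\tilde E_{\mathcal{VP},m}$: since (\ref{ine3}) carries no damping term, one cannot simply sum the five differential inequalities (the $\mathcal{SH}$-- and $\mathcal{VP}$--to--$\{\mathcal{LS},\mathcal{M},\mathcal{SP}\}$ couplings would produce a bare $C\tilde E_{m_0}$ and hence uncontrolled exponential growth); one must respect the order $\{\mathcal{LS},\mathcal{SP}\}\to\mathcal{M}\to\{\mathcal{SH},\mathcal{VP}\}$, exploit the decay of $\tilde E_{\mathcal{LS}}$ to render the $\tilde E_{\mathcal{M}}$-coefficient integrable, and verify that all constants emerging from the successive Grönwall steps are absorbable by $c_b$ and $\epsilon_0$ uniformly in $\tau_1$ and $T$. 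Keeping the primary-bootstrap improvement for $g_\flat$ free of any spurious $\tau$- or $\kappa_0$-dependent loss, via careful integration of (\ref{es0}), is the other point requiring care.
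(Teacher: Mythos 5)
Your proposal is correct and takes essentially the same route as the paper: an open--closed (continuous induction) argument on the interval where the main bootstrap assumption holds, with the bootstrap improved by integrating the differential inequalities (\ref{ine1})--(\ref{ine5}) in the hierarchical order $\{\mathcal{LS},\mathcal{SP}\}\to\mathcal{M}\to\{\mathcal{SH},\mathcal{VP}\}$, and the primary bootstrap improved via Lemma \ref{control lemma} and Sobolev embedding. The only cosmetic differences are that the paper packages the set $\mathcal{A}$ to carry both existence and the $\mathcal{MBA}$ simultaneously rather than separating the maximal slab $[0,T)$ from the bootstrap interval $I^\ast$, and your stated $\mathcal{SH}$/$\mathcal{VP}$ bound $C(c_b\epsilon+\epsilon^2)$ should read $C(c_b^{1/2}\epsilon+\epsilon^{3/2})$ (coming from the square root of the $\tilde E_{\mathcal{M},m_0}$ bound), which does not affect the ability to close the bootstrap by shrinking $c_b$ and $\epsilon_0$.
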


\begin{proof}
We fix a choice of $\epsilon$ (possibly to be adjusted later), and we define 
$\mathcal{A}$ to be the set of $s \in [0, \infty)$ for which there exists a set of fields $(g,\phi, A)$ on the spacetime manifold $[0, s) \times \mathbf{T}^n$ such that i) the fields $(u, u_i, \gamma_{ij}, \psi, A_0, A_i)$ corresponding to $(g, \phi, A)$ satisfy the field equations (\ref{refm1}) - (\ref{refm6}) on $[0, s) \times \mathbf{T}^n$, and agree with the initial data (\ref{inid1}) - (\ref{inid6}) at $\tau=0$; and (ii), these fields satisfy the main bootstrap assumption $\mathcal{MBA}_{\{p, a, c_1, \eta,  \tau_0, \kappa_1, n,  \epsilon\}}$ on $[0, s) \times \mathbf{T}^n$. To prove the theorem, we need to show that $\mathcal{A}$ is non-empty, relatively open, and relatively closed (as a subset of $[0, \infty)).$\footnote{It is clear from its definition that $\mathcal{A}$ is connected.}

To show that  $\mathcal{A}$ is non-empty, we first note that, by construction, the system (\ref{refm1}) - (\ref{refm6}) is hyperbolic; hence, local existence for  solutions of the initial value problem follows.\footnote{We may also argue local existence for the initial value problem for (\ref{refm1}) - (\ref{refm6}) by invoking the equivalence of this system, up to gauge, to the Einstein-Maxwell-scalar field system (as argued in Section \ref{Reform}), and then relying on standard well-posedness  theorems for the Einstein-Maxwell-scalar field system.} In particular, there exists an interval $(\tau_{min}, \tau_{max})$ containing $\tau=0$ on which there exists a unique (smooth) solution to  (\ref{refm1}) - (\ref{refm6}). To verify that this solution satisfies the main bootstrap assumption for some nonvanishing neighborhood of $\tau=0$,  we observe that since the initial data satisfy the conditions $u(0, \cdot) = 0$, $u_i(0, \cdot) = 0$, and (\ref{eq2}), it follows that the primary bootstrap conditions (\ref{p0}) - (\ref{p2}) hold on some such neighborhood. Since the initial data also satisfies 
$\tilde{E}^{1/2}_{m_0}(0) \le \frac{2}{3} \epsilon$, we see that \eqref{eq3} holds on some neighborhood of $\tau=0$. We thus verify $\mathcal{MBA}_{\{p, a, c_1, \eta,  \tau_0, \kappa_1, n,  \epsilon\}}$, and consequently 
determine that $\mathcal{A}$ is non-empty.

To show that $\mathcal{A}$ is relatively closed, it is sufficient to show that if we assume that $T\in [0,\infty)$ is contained in $\mathcal{A}$, then the solution on $[0,T)$ extends past $T$, and $\mathcal{MBA}_{\{p, a, c_1, \eta,  \tau_0, \kappa_1, n,  \epsilon\}}$ holds at $\tau=T$.\footnote{This proves that $\mathcal{A}$ is relatively closed, since it shows that if $[0, T) \subset \mathcal{A}$, then $[0, T] \subset \mathcal{A}$.} It follows from the main bootstrap assumption $\mathcal{MBA}_{\{p, a, c_1, \eta,  \tau_0, \kappa_1, n,  \epsilon\}}$ and the field equations (\ref{refm1}) - (\ref{refm6}) that the fields $(u, u_i, \gamma_{ij}, \psi, A_0, A_i)$ defined on $[0, T)$ extend smoothly to $ T$, with the metric maintaining its Lorentzian character. The well-posedness of the field equations with initial data at $ T$ (as argued above) then guarantees that the solution extends past $ T$, and one readily verifies from the definition of $\mathcal{MBA}_{\{p, a, c_1, \eta,  \tau_0, \kappa_1, n,  \epsilon\}}$ that if the solution extends and if the main bootstrap assumption holds on $[0, T)$ then it also holds on $[0,T]$.

It remains to show that $\mathcal{A}$ is relatively open. To do this, it is sufficient to show that, in the situation just discussed, the main bootstrap assumption holds not just on $[0,T]$, but on some interval $[0,\hat T)$ with $\hat T>T.$ To show this, it is sufficient to show that \emph{an improved version} of $\mathcal{MBA}_{\{p, a, c_1, \eta,  \tau_0, \kappa_1, n,  \epsilon\}}$---i.e., a version with tighter estimates---holds on $[0,T)$. 

The improvements we obtain on the $\mathcal{PBA}_{\{p, a, c_1, \eta, \tau_0, \kappa_1, n\}}$ portion of the bootstrap assumption involve replacing the inequalities \eqref{p0}-\eqref{p2} in Definition \ref{PrimaryBootstrap} by strict inequalities. One does this by combining  inequality \eqref{energycontrol} from the main bootstrap assumption with Lemma \ref{control lemma} together with Sobolev embeddings to obtain control on the evolution of the metric components. The argument is identical to that in Theorem 4 of \cite{Ringstrom09}. 

The key to improving the energy estimate \eqref{energycontrol} is the set of differential inequalities in Lemma \ref{diffin} which control the evolution of the energy functionals. Since $\mathcal{MBA}_{\{p, a, c_1, \eta,  \tau_0, \kappa_1, n,  \epsilon\}}$ holds on the interval $[0,T)$, these inequalities hold there as well. We consider first \eqref{ine1}. Combining this with assumption \eqref{energycontrol}, we have
\begin{equation}
\frac{d\tilde{E}_{\mathcal{LS},m_0}}{d\tau} \le -2a \tilde{E}_{\mathcal{LS},m_0} + C \epsilon^2 e^{-a\tau} \tilde{E}^{1/2}_{\mathcal{LS},m_0}.
\end{equation}
Integrating this first-order ordinary differential inequality, we obtain
\begin{equation}
\label{intine1}
\tilde{E}^{1/2}_{\mathcal{LS},m_0} (\tau) \le e^{-a \tau} \tilde{E}^{1/2}_{\mathcal{LS},m_0} (0) + \frac{1}{2} C\tau e^{-a\tau} \epsilon ^2
\end{equation}
for all $\tau \in [0,T).$
Then using assumption \eqref{energycontrol} to estimate $\tilde{E}^{1/2}_{\mathcal{LS},m_0} (0)$, we get
\begin{equation}
\label{ine1+}
\tilde{E}^{1/2}_{\mathcal{LS}, m_0}(\tau) \le C_{\mathcal{LS}} (c_b \epsilon + \epsilon^2) e^{-a\tau/2}.
\end{equation}

Since the differential inequality \eqref{ine4}  for $\tilde{E}_{\mathcal{SP},m} (\tau)$ is essentially the same as that for $\tilde{E}_{\mathcal{LS},m_0} (\tau)$, we derive the same estimate for this quantity: 
\begin{equation}
\label{ine4+}
\tilde{E}^{1/2}_{\mathcal{SP}, m_0}(\tau) \le C_{\mathcal{SP}} (c_b \epsilon + \epsilon^2) e^{-a\tau/2}.
\end{equation}

To handle $\tilde{E}_{\mathcal{M}, m_0}(\tau)$, which satisfies \eqref{ine3}, we first use the estimate \eqref{ine1+} to simplify \eqref{ine3}, and then use the integrating factor $f=exp[\frac{C}{a}(e^{-a\tau}-1)]$ to solve the resulting first order ordinary differential inequality. We obtain, presuming that $c_b\le 1$  (see Theorem 4 of \cite{Ringstrom09}),  
\begin{equation}
\label{ine3+}
\tilde{E}_{\mathcal{M}, m_0} (\tau) \le  C_\mathcal{M} (c_b\epsilon + \epsilon^2)\epsilon.
\end{equation}

To control $\tilde{E}_{\mathcal{SH}, m_0} (\tau)$, which satisfies \eqref{ine2}, we substitute the estimates derived above for $\tilde{E}_{\mathcal{LS}, m_0} (\tau)$ and for $\tilde{E}_{\mathcal{M}, m_0} (\tau)$, together with the bootstrap estimate assumed for $\tilde{E}_{m_0}(\tau)$, into the differential inequality \eqref{ine2}, thereby obtaining
\begin{equation}
\frac{d\tilde{E}_{\mathcal{SH},m_0}}{d\tau} \le -2a \tilde{E}_{\mathcal{SH},m_0} + C_{\mathcal{SH}}(c_b^{1/2} \epsilon +\epsilon^{3/2})\tilde{E}^{1/2}_{\mathcal{SH},m_0}.
\end{equation}
We see that the right hand side of this inequality is negative if $\tilde{E}_{\mathcal{SH},m_0}$ exceeds a certain value; hence we obtain the upper bound 
\begin{equation}
\label{ine2+}
\tilde{E}^{1/2}_{\mathcal{SH}, m_0} (\tau) \le \frac{C_\mathcal{SH}}{2a} ( c_b^{1/2} \epsilon + \epsilon^{3/2})
\end{equation}
for this energy functional.

The remaining energy functional to control is $\tilde{E}_{\mathcal{VP},m_0}(\tau)$. Comparing the differential inequalities \eqref{ine3} and \eqref{ine5}, and noting from above (see estimates \eqref{ine1+} and \eqref{ine4+}) the identical estimates for $\tilde{E}^{1/2}_{\mathcal{LS}, m_0}(\tau)$ and for $\tilde{E}^{1/2}_{\mathcal{SP}, m_0}(\tau)$ we see that the differential inequality for $\tilde{E}_{\mathcal{VP},m_0}(\tau)$ is the same as that for $\tilde{E}_{\mathcal{M},m_0}(\tau)$, except that it includes a decay term, $-2(p+3a) \tilde{E}_{\mathcal{VP}, m_0}$, and that it does not include  terms similar to $Ce^{-a\tau} \tilde{E}_{\mathcal{M},m}$ and $C\tilde{E}^{1/2}_{\mathcal{LS},m_0}\tilde{E}_{\mathcal{M},m}$. It immediately follows that 
\begin{equation}
\label{ine5+}
\tilde{E}_{\mathcal{VP}, m_0} (\tau) \le  C_\mathcal{VP} (c_b\epsilon + \epsilon^2)\epsilon.
\end{equation}

We may now combine estimates \eqref{ine1+}, \eqref{ine4+}, \eqref{ine3+}, \eqref{ine2+}, and \eqref{ine5+}, together with sufficiently small choices of $c_b$ and $\epsilon$ (depending on the constants $C_\mathcal{LS}$, $C_\mathcal{M}$, $C_\mathcal{SH}$, $C_\mathcal{SP}$ and $C_\mathcal{VP}$) and conclude that
\begin{equation}
\tilde{E}^{1/2}_{m_0}(\tau) \le \frac{1}{3} \epsilon 
\end{equation}
holds in $[0, T)$. Thus we have an improvement on all estimates comprising  $\mathcal{MBA}_{\{p, a, c_1, \eta,  \tau_0, \kappa_1, n,  \epsilon\}}$. This proves global existence, thereby concluding the proof of this theorem.
\end{proof}

The global existence theorem just proven guarantees control of the energy functionals up to some chosen order $m_0$. In fact, we can show that the energy functionals to all orders $m$ are finite:
 
\begin{theorem}
\label{thmorderm}
For any solution of the field equations (\ref{refm1}) - (\ref{refm6}) which satisfies the hypothesis of Theorem \ref{global}, there exists a sequence of constants $C_m$  such that for every positive integer $m$, the corresponding energy functional satisfies 
\begin{equation}
\label{asym0}
\tilde{E}^{1/2}_m(\tau) \le C_m.
\end{equation}
 for all $\tau \ge 0$.
\end{theorem}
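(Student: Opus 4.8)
The plan is to bootstrap Theorem \ref{global} from order $m_0$ up to every order $m$. By Theorem \ref{global} the solution exists for all $\tau\ge 0$, the bootstrap assumption $\mathcal{MBA}_{\{p,a,c_1,\eta,\tau_0,\kappa_1,n,\epsilon\}}$ holds on $[0,\infty)$, and (from the estimates \eqref{ine1+}, \eqref{ine4+} obtained in its proof) $\tilde{E}_{\mathcal{LS},m_0}^{1/2}(\tau)$ and $\tilde{E}_{\mathcal{SP},m_0}^{1/2}(\tau)$ actually decay like $e^{-a\tau/2}$. Since $\mathcal{MBA}$ (which involves only the order-$m_0$ energy) together with the field equations \eqref{refm1}--\eqref{refm6} now hold on \emph{all} of $[0,\infty)$, the hypotheses of Lemma \ref{diffin} are met there, and therefore the differential inequalities \eqref{ine1}--\eqref{ine5} hold on $[0,\infty)$ \emph{for every order} $m$, not just $m=m_0$ (the constants in Lemma \ref{diffin} already depend on $m$). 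It is these inequalities, at a general order $m$, that I would integrate. Note that $\tilde{E}_m^{1/2}(0)$ is finite for every $m$ because the initial data is smooth and $\mathbf{T}^n$ is compact, so the claim is exactly that this finite initial value is propagated to a uniform-in-$\tau$ bound.

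First I would fix $m$ and pass to the half-energies $z_{\mathcal{X}}:=\tilde{E}_{\mathcal{X},m}^{1/2}$ for $\mathcal{X}\in\{\mathcal{LS},\mathcal{SP},\mathcal{M},\mathcal{SH},\mathcal{VP}\}$; dividing \eqref{ine1}--\eqref{ine5} by the corresponding $\tilde{E}_{\mathcal{X},m}^{1/2}$ (regularizing by $\sqrt{\tilde E+\delta}$ and letting $\delta\downarrow 0$ if one wants to be careful) and using $\tilde{E}_m^{1/2}\le\sum_{\mathcal{X}} z_{\mathcal{X}}$, one obtains a system of scalar differential inequalities of the form $z'\le A(\tau)\,z$ with $z=(z_{\mathcal{LS}},z_{\mathcal{SP}},z_{\mathcal{M}},z_{\mathcal{SH}},z_{\mathcal{VP}})$ and $A(\tau)$ having nonnegative off-diagonal entries, so that the comparison principle gives $z(\tau)\le\Phi(\tau,0)\,z(0)$ with $\Phi$ the (nonnegative) fundamental matrix of $y'=A(\tau)y$. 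I would then split $A=A_0+B$, where $A_0$ is the hierarchical part: $z_{\mathcal{LS}}$ and $z_{\mathcal{SP}}$ carry the damping $-a$ and are forced only by integrable terms; $z_{\mathcal{M}}$ has the self-coefficient $\tfrac12(Ce^{-a\tau}+C\tilde{E}_{\mathcal{LS},m_0}^{1/2})$, which is integrable in $\tau$, and is forced by $z_{\mathcal{LS}}$ (hence by a decaying, integrable quantity); and $z_{\mathcal{SH}}$, $z_{\mathcal{VP}}$ carry the dampings $-a$, $-(p+3a)$ and are forced by $z_{\mathcal{LS}},z_{\mathcal{M}}$ and by $z_{\mathcal{SP}}$ respectively. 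The remaining part $B(\tau)$ collects the ``top-order'' self-couplings, every one of which carries the factor $C\epsilon e^{-a\tau}$ and hence satisfies $\int_0^\infty\|B(\tau)\|\,d\tau<\infty$. Integrating the $A_0$-system in hierarchical order (first $z_{\mathcal{LS}},z_{\mathcal{SP}}$, then $z_{\mathcal{M}}$ via a scalar Gronwall with integrable coefficient and integrable forcing, then $z_{\mathcal{SH}},z_{\mathcal{VP}}$ from their damped equations with now-bounded forcing) shows that $\Phi_0(\tau,s)$ is uniformly bounded; a final Gronwall then absorbs the integrable perturbation $B$, giving $\|\Phi(\tau,0)\|\le C_m$ and hence $\tilde{E}_m^{1/2}(\tau)\le\sum_{\mathcal{X}}z_{\mathcal{X}}(\tau)\le C_m\,\tilde{E}_m^{1/2}(0)$ for all $\tau\ge 0$, with $C_m$ depending on $n$, $p$, $m$, $c_1$, $e^{-\kappa_1}$ and $\epsilon$. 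This parallels Ringstr\"om's higher-order-energy argument in \cite{Ringstrom09, Ringstrom08}.

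The hard part will be the last Gronwall step, and more precisely the boundedness of $\Phi_0$. The subtlety is that the five order-$m$ energies do \emph{not} form a triangular system by themselves: each of \eqref{ine1}--\eqref{ine5} has on its right-hand side the \emph{full} $m$-th order energy $\tilde{E}_m^{1/2}$ rather than only lower-order or decaying pieces, so the inequalities are genuinely coupled, and there is no smallness available in $m$ (unlike at order $m_0$, where $\tilde{E}_{m_0}^{1/2}\le\epsilon$ was used to make the self-coupling quadratically small). What rescues the argument is purely the time decay: the coupling enters only through the coefficient $C\epsilon e^{-a\tau}$, which is integrable over $[0,\infty)$, so it is a genuinely lower-order perturbation in the Gronwall sense and cannot destroy boundedness regardless of how large $C$ is. Establishing that the unperturbed hierarchical system generates a uniformly bounded flow is the technical heart — in particular that $z_{\mathcal{M}}$, whose damping coefficient vanishes, nonetheless grows by at most a bounded factor because its self-coefficient $Ce^{-a\tau}+C\tilde{E}_{\mathcal{LS},m_0}^{1/2}$ is integrable and its forcing from $z_{\mathcal{LS}}$ is integrable (as $z_{\mathcal{LS}}$ decays). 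This is exactly the hierarchical structure already exploited at order $m_0$ in the proof of Theorem \ref{global}, so the verification is essentially a repetition of that bookkeeping at arbitrary order.
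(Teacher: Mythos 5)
Your proof is correct and exploits precisely the mechanism that makes the theorem true — the integrable top-order coupling coefficient $C\epsilon e^{-a\tau}$ and the hierarchical (triangular modulo that coupling) structure of the system of differential inequalities. But the technical route is genuinely different from the paper's. The paper introduces a second round of exponential rescalings,
\[
\hat{E}_{\mathcal{LS},m} = e^{a\tau/2}\tilde{E}_{\mathcal{LS},m},\quad \hat{E}_{\mathcal{SP},m} = e^{a\tau/2}\tilde{E}_{\mathcal{SP},m},\quad \hat{E}_{\mathcal{SH},m} = e^{-a\tau/2}\tilde{E}_{\mathcal{SH},m},
\]
and then bundles all five families into a \emph{single scalar} quantity $\mathcal{E}_m := \hat{E}_{\mathcal{LS},m}+\hat{E}_{\mathcal{SH},m}+\tilde{E}_{\mathcal{M},m}+\hat{E}_{\mathcal{SP},m}+\tilde{E}_{\mathcal{VP},m}$, which after some algebra satisfies a clean one-dimensional inequality of the form $\mathcal{E}_m' \le Ce^{-a\tau/4}\mathcal{E}_m + C\tilde{E}^{1/2}_{\mathcal{LS},m_0}\tilde{E}_{\mathcal{M},m}$. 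It then runs a two-stage bootstrap: first take $m=m_0$ (where $\tilde{E}_{\mathcal{M},m_0}$ is bounded by $\epsilon^2$) to extract the global-in-time decay $\tilde{E}^{1/2}_{\mathcal{LS},m_0}\lesssim e^{-a\tau/4}$, then feed this back to close the Gronwall at general $m$. Finally $\tilde{E}_{\mathcal{SH},m}$ — which the rescaling to $\hat{E}_{\mathcal{SH},m}$ only controls up to a factor $e^{a\tau/2}$ — is handled in a separate, short Gronwall step directly from \eqref{ine2} using the already-established bounds on the other four. Your approach instead keeps the five half-energies $z_{\mathcal{X}}$ as a vector, views the inequalities as $z'\le A(\tau)z$, decomposes $A=A_0+B$ with $B$ integrable, and bounds the fundamental matrix of the $A_0$ system by integrating the triangular hierarchy component by component. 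Both work; the paper's choice of rescaling buys a one-line Gronwall at the cost of a slightly opaque definition of $\mathcal{E}_m$, while your matrix framing makes the hierarchical structure explicit at the cost of more case-by-case bookkeeping (bounding $\Phi_0$ entry by entry, tracking that the $z_{\mathcal{M}}$ self-coefficient is integrable only after $\tilde{E}^{1/2}_{\mathcal{LS},m_0}$ is known to decay, etc.). One small remark: you import the decay of $\tilde{E}^{1/2}_{\mathcal{LS},m_0}$ directly from estimate \eqref{ine1+} in the proof of Theorem \ref{global}. That is legitimate — once the bootstrap assumption is known to hold on all of $[0,\infty)$, the integration that produced \eqref{ine1+} is valid for all $\tau$ — but the paper is careful to flag (in a footnote) that \eqref{ine1+} was derived only on a finite interval, and prefers to re-derive the decay within the present theorem by first setting $m=m_0$ in its $\mathcal{E}_m$ inequality. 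Your shortcut is sound; just make the extension of \eqref{ine1+} to $[0,\infty)$ explicit rather than tacit.
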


\begin{proof}
It follows from Theorem \ref{global} that the solutions under consideration satisfy the estimates of  $\mathcal{MBA}_{\{p, a, c_1, \eta,  \tau_0, \kappa_1, n,  \epsilon\}}$ for all time $\tau$. Hence the differential inequalities 
 (\ref{ine1}) - (\ref{ine5}) hold for all $\tau \ge 0$. 
 
It is useful to work with a set of three sequences of rescaled energy functionals; we set
\begin{equation}
\label{rescales}
\hat {E}_{\mathcal{SH},m} := e^{-a\tau/2} \tilde{E}_{\mathcal{SH},m},\, \hat{E}_{\mathcal{LS},m} := e^{a\tau/2} \tilde{E}_{\mathcal{LS},m},\, \hat{E}_{\mathcal{SP},m} := e^{a\tau/2} \tilde{E}_{\mathcal{SP},m}.
\end{equation} 
Based on (\ref{ine1}), (\ref{ine2}), and (\ref{ine4}), we find that the differential inequalities for these rescaled quantities take the form
\begin{align}
\frac{d \hat{E}_{\mathcal{LS},m}}{d\tau} &\le -a \hat{E}_{\mathcal{LS},m} + C\epsilon e^{-3a\tau/4} \hat{E}_{\mathcal{LS},m}^{1/2} \tilde{E}^{1/2}_m\,,\nonumber\\
\frac{d \hat{E}_{\mathcal{SP},m}}{d\tau} &\le -a \hat{E}_{\mathcal{SP},m} + C\epsilon e^{-3a\tau/4} \hat{E}_{\mathcal{SP},m}^{1/2} \tilde{E}^{1/2}_m\,,\nonumber\\
\frac{d \hat{E}_{\mathcal{SH},m}}{d\tau} &\le -2a \hat{E}_{\mathcal{SH},m} + C e^{-a\tau/4} \hat{E}_{\mathcal{SH},m}^{1/2} ( \tilde{E}^{1/2}_{\mathcal{LS},m} + \tilde{E}^{1/2}_{\mathcal{M},m})\nonumber\\ 
&\quad\, + C\epsilon e^{-5a\tau/4} \hat{E}_{\mathcal{SH},m}^{1/2} \tilde{E}^{1/2}_m\,.\nonumber
\end{align}
If we now define the quantity
\begin{equation}
\label{defE}
\mathcal{E}_m := \hat{E}_{\mathcal{LS},m} + \hat{E}_{\mathcal{SH},m} + \tilde{E}_{\mathcal{M},m} +\hat{E}_{\mathcal{SP},m} + \tilde{E}_{\mathcal{VP},m}
\end{equation}
then it follows from the above differential inequalities combined with 
(\ref{ine3}) and \eqref{ine5} that $\mathcal{E}_m(\tau)$ satisfies 
\begin{equation}
\label{eq4}
\frac{d \mathcal{E}_m}{d\tau} \le C e^{-a\tau/4}\mathcal{E}_m + C \tilde{E}^{1/2}_{\mathcal{LS},m_0} \tilde{E}_{\mathcal{M},m}.
\end{equation}

We now set $m=m_0$. Noting the boundedness of $\tilde{E}_{\mathcal{M}, m_0} (\tau)$ (see \eqref{eq3}) and noting the behavior of $\tilde{E}^{1/2}_{\mathcal{LS},m_0} (\tau)$ (which we infer from the relationship \eqref{rescales} between $\tilde{E}^{1/2}_{\mathcal{LS},m_0} (\tau)$ and $\hat{E}^{1/2}_{\mathcal{LS},m_0} (\tau)$, and the relationship \eqref{defE} between $\hat{E}^{1/2}_{\mathcal{LS},m_0} (\tau)$ and $\mathcal{E}_{m_0})$, we derive 
\begin{equation}
 \frac{d \mathcal{E}_{m_0}}{d\tau} \le C e^{-a\tau/4}\mathcal{E}_{m_0}.
 \end{equation}
It follows from this differential inequality that 
 $\mathcal{E}_{m_0}$ is bounded\footnote{ We note that the proof of Theorem \ref{global} contains estimates of $\mathcal{E}_{m_0}$ and of $\tilde{E}_{\mathcal{LS},m_0} $ on a finite time  interval. Here, we show that the estimates hold for all $\tau \ge 0$.} and consequently that $\tilde{E}^{1/2}_{\mathcal{LS},m_0} \le C e^{-a\tau/4}$.

If we now substitute this estimate for $\tilde{E}^{1/2}_{\mathcal{LS},m_0}$ into \eqref{eq4}, we obtain (for arbitrary $m$)
\begin{equation}
\frac{d \mathcal{E}_m}{d\tau} \le C e^{-a\tau/4}\mathcal{E}_m.
\end{equation} 
It follows that  $\mathcal{E}_m$ is bounded for all $m$,  which implies that  $\tilde{E}_{\mathcal{LS},m}$, $\tilde{E}_{\mathcal{M},m}$, $\tilde{E}_{\mathcal{SP},m}$ and $\tilde{E}_{\mathcal{VP},m}$ are bounded as well. 

It remains to show that $\tilde{E}_{\mathcal{SH},m}$ is bounded. Substituting into (\ref{ine2})  the boundedness conditions just determined, we have 
\begin{equation}
\label{this}
 \frac{d \tilde{E}_{\mathcal{SH},m} }{d\tau} \le -2a \tilde{E}_{\mathcal{SH},m} + C e^{-a\tau} \tilde{E}_{\mathcal{SH},m} + C\tilde{E}_{\mathcal{SH},m}^{1/2}.
 \end{equation}
For $\tau$  large enough, the second term can be absorbed into the first term. It follows that $\tilde{E}_{\mathcal{SH},k}$ is bounded, since \eqref{this} requires that it decay if it exceeds certain value. This proves the theorem.
\end{proof}

\section{Causal Geodesic Completeness}\label{S:geodesic}

With global existence for spacetime developments of certain types of initial data sets proven above, we  argue in this section that causal geodesics with initial points contained in certain regions of those data sets are complete. We do this in two steps, with the first  proposition establishing certain geometric properties of these developments.

We note that since the behavior of causal geodesics depends exclusively on the metric, the discussion here is very similar to that in \cite{Ringstrom09}.

\begin{proposition}\label{geo1}
Let $(g, \phi, A )$ denote a spacetime development (satisfying the field equations (\ref{refm1}) - (\ref{refm6})) of the sort constructed in Theorem \ref{global}. Let $\gamma$ be a future-directed causal curve contained in this development, with parametrization domain $[s_0, s_{max})$, and  with $\gamma^0(s_0) = t_0$ ( $t_0$ is given by (\ref{t0})). If the $\epsilon$ parametrizing the smallness of the data (as stated in  Theorem \ref{global}) is sufficiently small (depending only on $n$, $p$ and $c_1$), then $\dot{\gamma}^0 > 0$ (for all $s$), and the projected spatial length of the path satisfies the condition 
\begin{equation}\label{eq15}
\int_{s_0}^{s_{max}} [ g_{ij}(t_0, \gamma_\flat) \dot{\gamma}^i \dot{\gamma}^j ]^{1/2} ds \le d(\epsilon) \ell(t_0),
\end{equation}
 where $d(\epsilon)$ is independent of $\gamma$, with $d(\epsilon) \to 1$ as $\epsilon \to 0$, and where $\gamma_\flat(s) = (\gamma^1(s), \gamma^2(s),\cdots, \gamma^n(s))$. Furthermore, assuming that $\gamma$ is future inextendible, we have $\gamma^0(s) \to \infty$ as $s \to s_{max}$.
\end{proposition}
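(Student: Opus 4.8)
\textbf{Proof plan for Proposition \ref{geo1}.}

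The plan is to exploit the fact that, by Theorem \ref{global} and Theorem \ref{thmorderm}, the rescaled metric components $u = 1+g_{00}$, $u_i = g_{0i}$, and $e^{-2p\tau-2\kappa_0}g_{ij}$ are $C^0$-close to their background values (via Lemma \ref{control lemma} together with Sobolev embedding, using $m_0 > n/2+1$), with closeness controlled by $\epsilon$ and with exponential-in-$\tau$ decay of the perturbations of $u$, $u_i$. So the perturbed metric is, uniformly in $\tau$, a small perturbation of the background metric $\hat g$, and one can essentially redo the causal-structure computation from the footnote in Section \ref{FE&BS} with controlled error terms.

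First I would set up the basic causal inequality. Writing $\gamma(s)$ as a future-directed causal curve, the condition $g_{\mu\nu}\dot\gamma^\mu\dot\gamma^\nu \le 0$ combined with $g_{00} = -1 + u$, $g_{0i} = u_i$, and the primary bootstrap bounds \eqref{p0}--\eqref{p2} (which hold for all $\tau$ by Theorem \ref{global}) gives, after completing the square in $\dot\gamma^0$, an inequality of the form $(1-C\epsilon)(\dot\gamma^0)^2 \ge e^{2p\tau+2\kappa_0}(1-C\epsilon)|\dot\gamma_\flat|^2$ with $C$ depending only on $n,p,c_1$. Since $\gamma$ is future-directed and $\epsilon$ is small, this forces $\dot\gamma^0 > 0$ everywhere (the degenerate/backward alternative is excluded by continuity from $\dot\gamma^0(s_0)>0$), which in turn lets me reparametrize the curve by $t = \gamma^0$. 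Next, using $e^{-2p\tau-2\kappa_0}g_{ij}$ is within $C\epsilon$ of $\delta_{ij}$ in $C^0$ (Lemma \ref{control lemma}, estimates \eqref{es0}--\eqref{es} plus the primary bootstrap), I bound the projected spatial length: $\int [g_{ij}(t_0,\gamma_\flat)\dot\gamma^i\dot\gamma^j]^{1/2}\,ds$. The key point is that $g_{ij}$ is evaluated at the \emph{fixed} time $t_0$, so I first compare $g_{ij}(t_0,\gamma_\flat)$ with $e^{2\kappa_0}\delta_{ij}$ (a $C\epsilon$ error) and then estimate $\int e^{\kappa_0}|\dot\gamma_\flat|\,ds \le (1+C\epsilon)\int e^{\kappa_0} e^{-p\tau-\kappa_0}\dot\gamma^0\,ds = (1+C\epsilon)\int_{t_0}^{t_{\max}} (t/t_0)^{-p}\,dt \le (1+C\epsilon)\ell(t_0)$, exactly paralleling the background computation in the Section \ref{FE&BS} footnote, but now with the multiplicative error factor $d(\epsilon) := (1+C\epsilon)^k$ (some fixed power), which tends to $1$ as $\epsilon\to0$. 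This gives \eqref{eq15}.

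Finally, for the claim $\gamma^0(s)\to\infty$ as $s\to s_{\max}$ under the assumption that $\gamma$ is future-inextendible: since $\dot\gamma^0>0$, $\gamma^0$ is strictly increasing, so it has a limit $t_\ast \in (t_0,\infty]$. Suppose for contradiction $t_\ast < \infty$. Then by \eqref{eq15} the spatial projection $\gamma_\flat(s)$ has bounded length, hence (since $\mathbf{T}^n$ is compact, or since the projection stays in a fixed ball and the metric is comparable to Euclidean) converges to some point as $s\to s_{\max}$; so $\gamma(s)$ converges to a point of the spacetime manifold $\mathbf{T}^n\times[0,\tau_\ast]$, which lies in the interior of the globally hyperbolic development produced by Theorem \ref{global}. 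But a causal curve converging to an interior point of a globally hyperbolic spacetime is future-extendible, contradicting future-inextendibility. Hence $t_\ast = \infty$.

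\textbf{Main obstacle.} The delicate step is the bookkeeping in the causal inequality: since $g_{0i}$ need not vanish (it only decays), completing the square and tracking how the primary-bootstrap bounds \eqref{p0}--\eqref{p2} on $|v[g]|$ interact with the lower bound on $g_\flat$ must be done carefully to get clean constants depending only on $n,p,c_1$, and to guarantee that the spatial-length estimate's error factor genuinely satisfies $d(\epsilon)\to1$. This is essentially the content of the analogous lemma in \cite{Ringstrom09}, adapted here; the electromagnetic fields play no role since geodesics see only the metric.
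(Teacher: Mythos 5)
Your overall strategy---exploit the primary bootstrap bounds and the energy estimates to reduce the causal computation to the background one---is the same as the paper's, and the inextendibility argument is correct. However, there is a genuine gap in your treatment of the projected spatial length.

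You claim that ``$e^{-2p\tau-2\kappa_0}g_{ij}$ is within $C\epsilon$ of $\delta_{ij}$ in $C^0$ (Lemma \ref{control lemma}, estimates \eqref{es0}--\eqref{es} plus the primary bootstrap).'' This is not available under the hypotheses of Proposition~\ref{geo1}. The primary bootstrap \eqref{p0} only gives comparability of $e^{-2p\tau-2\kappa_0}g_\flat$ with $\delta$ by the fixed factor $c_1 > 2$, not $\epsilon$-closeness, and similarly the hypothesis of Theorem \ref{global} includes only the $c_1$-comparability condition \eqref{eq2} on the initial spatial metric, never $\epsilon$-closeness to $e^{2\kappa_0}\delta$. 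The estimates \eqref{es0} and \eqref{es} control $\partial_\tau\gamma_{ij}$ and \emph{spatial} derivatives $\partial^\alpha g_{ij}$ with $|\alpha|\ge 1$, but say nothing about the $|\alpha|=0$ deviation of $\gamma_{ij}$ from $e^{2\kappa_0}\delta_{ij}$. (Indeed, the extra term in the definition \eqref{Emetric} of $E_{\mathcal{M},m}$ is inserted precisely because the base energy does not control $\gamma_{ij}$ itself.) Consequently, your ``completing the square'' step $(1-C\epsilon)(\dot\gamma^0)^2 \ge e^{2p\tau+2\kappa_0}(1-C\epsilon)|\dot\gamma_\flat|^2$ is too strong: the correct bound carries a factor $c_1$ on the Euclidean side (see the paper's \eqref{eq6}), so passing to the Euclidean projected length and back would spoil the conclusion $d(\epsilon)\to 1$.

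The paper circumvents this by never comparing $g_{ij}(t_0,\cdot)$ to $e^{2\kappa_0}\delta_{ij}$ with an $\epsilon$-error. Instead it compares $(t/t_0)^{-2p}g_{ij}(t,\cdot)$ directly to $g_{ij}(t_0,\cdot)$ via $\int_0^\tau \partial_\tau\gamma_{ij}\,d\tau'$, using $e^{-2\kappa_0+a\tau}\|\partial_\tau\gamma_{ij}\|_\infty \le C\epsilon$ (which \emph{is} available from \eqref{es0} plus Sobolev embedding). This yields $\|(t/t_0)^{-2p}e^{-2\kappa_0}g_{ij}(t,\cdot)-e^{-2\kappa_0}g_{ij}(t_0,\cdot)\|_\infty \le Ca^{-1}\epsilon$, and combining this with the causal bound $g_{ij}\dot\gamma^i\dot\gamma^j\le c(\eta)(\dot\gamma^0)^2$ gives $e^{-2\kappa_0}g_{ij}(t_0,\gamma_\flat)\dot\gamma^i\dot\gamma^j \le d^2(\epsilon)(t/t_0)^{-2p}e^{-2\kappa_0}(\dot\gamma^0)^2$ with $d(\epsilon)\to 1$, where the $c_1$-comparability only enters inside the $O(\epsilon)$ error term. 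You should replace the $\delta_{ij}$-comparison in your spatial-length estimate by this direct comparison of $g_{ij}(t,\cdot)$ to $g_{ij}(t_0,\cdot)$.
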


\begin{proof}
Since $\gamma$ is a future-directed causal curve, we have the following inequalities
\begin{align}
\label{eq5}
g_{\mu\nu} \dot{\gamma}^\mu \dot{\gamma}^\nu \le 0,\\
\label{eq13}
g_{00} \dot{\gamma}^0 + g_{0i} \dot{\gamma}^i < 0.
\end{align}
It then follows from (\ref{p2}) (one of the estimates comprising the primary bootstrap assumption, which we have shown in Theorem \ref{global} holds for $(g, \phi, A )$) that for some constant $\eta \in (0,1)$,
\begin{equation}
| 2 g_{0i} \dot{\gamma}^0 \dot{\gamma}^i | \le \eta^{1/2} | \dot{\gamma}^0 |^2 + \eta^{-1/2} | g_{0i} \dot{\gamma}^i |^2 \le \eta^{1/2} | \dot{\gamma}^0 |^2 + \eta^{1/2}  c_1^{-1} e^{2p\tau +2\kappa_0 - 2a\tau} \delta_{ij} \dot{\gamma}^i \dot{\gamma}^j.
\end{equation}
As a consequence of  (\ref{p0}) (another of the proven primary bootstrap estimates), the last term is bounded by $\eta^{1/2} g_{ij} \dot{\gamma}^i \dot{\gamma}^j$. Combining  (\ref{p1}) and (\ref{eq5}), we find that
\begin{equation}\label{eq12}
g_{ij} \dot{\gamma}^i \dot{\gamma}^j \le c(\eta) \dot{\gamma}^0 \dot{\gamma}^0,
\end{equation}
where $c(\eta) \to 1$ as $\eta \to 0+$. Again invoking (\ref{p0}),  we determine that
\begin{equation}\label{eq6}
\delta_{ij} \dot{\gamma}^i \dot{\gamma}^j \le c_1 c(\eta) e^{-2p\tau -2\kappa_0} \dot{\gamma}^0 \dot{\gamma}^0 = c_1 c(\eta)(t/t_0)^{-2p}e^{-2\kappa_0} \dot{\gamma}^0 \dot{\gamma}^0.
\end{equation}

Now, as a consequence of the energy estimate (\ref{eq3}), the role of the energy functionals in controlling function norms (see \eqref{es0}), and Sobolev embedding (recall the condition in Theorem \ref{global} that $m_0 > n/2 + 1$), we have 
\begin{equation}
e^{-2\kappa_0 + a\tau} \| \partial_\tau \gamma_{ij} \|_\infty \le C\epsilon.
\end{equation}
Consequently, we  obtain 
\begin{equation}
 \| (t/t_0)^{-2p}e^{-2\kappa_0} g_{ij}(t,\cdot) - e^{-2\kappa_0} g_{ij}(t_0,\cdot) \|_\infty \le C a^{-1} \epsilon,
 \end{equation}
where $C$ depends only on $n$, $p$ and $c_1$. Combining this with (\ref{eq6}), we obtain
\begin{equation}
\label{that}
 |e^{-2\kappa_0} g_{ij}(t_0,\gamma_\flat) \dot{\gamma}^i\dot{\gamma}^j - (t/t_0)^{-2p}e^{-2\kappa_0}g_{ij} \dot{\gamma}^i\dot{\gamma}^j | \le C a^{-1} \epsilon c_1 c(\eta) (t/t_0)^{-2p}e^{-2\kappa_0} \dot{\gamma}^0 \dot{\gamma}^0.
 \end{equation}
We note here that it follows from  (\ref{eq11}), (\ref{eq10}) and (\ref{eq3}) that\footnote{The argument is as follows: From (\ref{eq11}), (\ref{eq10}) and (\ref{eq3}), we have $| u | \le \| u \|_{H^m} \le C e^{-a\tau} \tilde{E}^{1/2}_{LS, m} \le C\epsilon e^{-a\tau}$ and we have $| u_j |^2 \le \| u_j \|^2_{H^m} \le C^2 e^{2p\tau+2\kappa_0-2a\tau} \tilde{E}_{SH,m} \le C^2 \epsilon^2 e^{2p\tau+2\kappa_0-2a\tau}$. If $\epsilon$ is small enough, we can replace $\eta$ by $\epsilon$.}
$\eta$ in (\ref{p1}) and (\ref{p2}) can be replaced by $C\epsilon$, where $C$ only depends on $n$, $p$ and $c_1$. Combining this observation with the two inequalities \eqref{eq12}  and \eqref{that} we produce the inequality
\begin{equation} 
\label{eq14}
e^{-2\kappa_0} g_{ij}(t_0,\gamma_\flat) \dot{\gamma}^i\dot{\gamma}^j \le d^2(\epsilon)  (t/t_0)^{-2p}e^{-2\kappa_0} \dot{\gamma}^0 \dot{\gamma}^0,
\end{equation}
where $d(\epsilon) \to 1$ as $\epsilon \to 0+$. 

To show that $\dot{\gamma}^0 > 0$, we consider (\ref{eq13}). Combining (\ref{p2}) and (\ref{eq6}) we derive 
\begin{equation}
\label{theother}
|g_{0i} \dot{\gamma}^i | \le [ e^{-2p\tau -2\kappa_0} \delta^{ij} g_{0i} g_{0j} ]^{1/2} [ e^{2p\tau +2\kappa_0} \delta_{ij} \dot{\gamma}^i \dot{\gamma}^j ]^{1/2} \le \xi(\epsilon) | \dot{\gamma}^0 |,
\end{equation}
where $\xi(\epsilon) \to 0$ as $\epsilon \to 0+$. If we now make $\epsilon$ sufficiently small (depending only on $n$, $p$ and $c_1$), it follows  from \eqref{theother} and \eqref{eq13} that $\dot{\gamma}^0 > 0$.
Combining  this result with (\ref{eq14}), we obtain the estimate (\ref{eq15}). 

To complete the proof, we assume 
that $\gamma$ is future inextendible and we suppose that $\gamma^0$ does not tend to $\infty$. Since $\dot{\gamma}^0 >0$, it follows that $\gamma^0$ has to converge to a finite number. Examining (\ref{eq6}), we see that $\gamma^i$ must converge as well; hence $\gamma_\flat$ must converge to a point on $\mathbf{T}^n$. But this contradicts the assumption that $\gamma$ is a future  inextendible causal path.
\end{proof}

We proceed now to prove causal geodesic completeness:

\begin{proposition}
\label{geo2}
For sufficiently small $\epsilon$ (depending on $n, p$, and $c_1$), the spacetime development (satisfying the field equations (\ref{refm1}) - (\ref{refm6})) of a set of initial data satisfying the hypothesis of 
Theorem \ref{global} is future causally geodesically complete.
\end{proposition}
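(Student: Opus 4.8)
The plan is to show that every future-directed causal geodesic $\gamma$ in a development of the kind produced by Theorem \ref{global} has maximal affine parameter interval $[s_0,s_{\max})$ with $s_{\max}=\infty$; future causal geodesic completeness is then immediate. Since the behaviour of causal geodesics depends only on the metric, the argument follows \cite{Ringstrom09} closely, and the electromagnetic field enters \emph{only} through the global metric estimates already in hand (Theorems \ref{global} and \ref{thmorderm} and the bootstrap conclusions). Proposition \ref{geo1} has already supplied the main geometric content along $\gamma$: $\dot\gamma^0>0$, and the causal inequalities $g_{ij}\dot\gamma^i\dot\gamma^j\le c(\eta)(\dot\gamma^0)^2$ and $\delta_{ij}\dot\gamma^i\dot\gamma^j\le c_1 c(\eta) e^{-2p\tau-2\kappa_0}(\dot\gamma^0)^2$. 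What remains is to prevent the affine parameter from running out in finite ``time'', which we accomplish by bounding $w:=\dot\gamma^0$ from above, uniformly along $\gamma$.

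The geodesic equation gives $\dot w=-\Gamma^0_{\mu\nu}\dot\gamma^\mu\dot\gamma^\nu=-\Gamma^0_{00}w^2-2\Gamma^0_{0i}\,w\,\dot\gamma^i-\Gamma^0_{ij}\dot\gamma^i\dot\gamma^j$. Writing $g_{00}=-1+u$, $\partial_t g_{ij}=(\gamma^0)^{-1}(2pg_{ij}+(\partial_\tau g_{ij}-2pg_{ij}))$, and using $g^{00}=-1+(1+g^{00})$, one isolates
\[
\Gamma^0_{ij}=\frac{p}{\gamma^0}\,g_{ij}+r_{ij},
\]
where $r_{ij}$ and $\Gamma^0_{00},\Gamma^0_{0i}$ involve only the quantities that vanish on the background ($1+g^{00}$, $g^{0k}$, $\partial u$, $\partial g_{0i}$, $\partial_\tau g_{ij}-2pg_{ij}$, $\partial_l g_{ij}$) and their spatial derivatives. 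The one non-decaying contribution, $\frac{p}{\gamma^0}g_{ij}\dot\gamma^i\dot\gamma^j$, is $\ge 0$ because $g_\flat>0$ (the primary bootstrap), so it only helps and may be dropped from the upper bound. For the rest, Lemma \ref{inverse} and \eqref{p0}--\eqref{p2} control $1+g^{00}$, $g^{0i}$ and $g_\flat^{\pm1}$; Lemma \ref{control lemma}, the energy bounds \eqref{eq3} and Theorem \ref{thmorderm}, and Sobolev embedding (recall $m_0>n/2+1$) give the pointwise bounds $|\partial_\tau u|,\,|\partial u|,\,|\partial g_{0i}|,\,|\partial_\tau g_{ij}-2pg_{ij}|,\,|\partial_l g_{ij}|$ along $\gamma$; and the causal inequalities \eqref{eq12}, \eqref{eq6} convert the velocity factors. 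Assembling these, every remaining term is $\le C\epsilon\, t_0^{-1}e^{-(1+a)\tau}w^2$, so $\dot w\le C\epsilon\, t_0^{-1}e^{-(1+a)\tau}w^2$. Since $\dot\tau=\dot\gamma^0/\gamma^0=w/(t_0e^{\tau})>0$, this becomes, in the $\tau$ variable,
\[
\frac{dw}{d\tau}\le C\epsilon\, e^{-a\tau}\,w ,
\]
whence Gronwall gives $w(\tau)\le w(\tau(s_0))\,e^{C\epsilon/a}=:V$ on the entire existence interval of $\gamma$, with $C$ depending only on $n,p,c_1$.

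To conclude, suppose $s_{\max}<\infty$. From $w\le V$ and \eqref{eq6} (with $\tau\ge\tau(s_0)$) the velocity $\dot\gamma$ is uniformly bounded on $[s_0,s_{\max})$; by standard ODE continuation a maximal geodesic with bounded velocity and finite parameter interval must leave every compact subset of the development, so, the spatial slices being compact tori and $\gamma^0$ increasing, $\gamma^0\to\infty$, i.e.\ $\tau\to\infty$, as $s\to s_{\max}$. But then, integrating $e^{\tau}\dot\tau=w/t_0\le V/t_0$ over $[s_0,s_{\max})$,
\[
\int_{\tau(s_0)}^{\infty}e^{\tau}\,d\tau=\int_{s_0}^{s_{\max}}e^{\tau}\dot\tau\,ds\le\frac{V}{t_0}\,(s_{\max}-s_0)<\infty,
\]
which is absurd. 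Hence $s_{\max}=\infty$ for every future-directed causal geodesic, and the development is future causally geodesically complete. The main obstacle in carrying this out is the Christoffel estimate giving $dw/d\tau\le C\epsilon e^{-a\tau}w$: one must carefully combine the primary bootstrap, the energy-to-Sobolev control of Lemma \ref{control lemma}, Sobolev embedding, and the causal inequalities, and verify that the single non-decaying term $\frac{p}{\gamma^0}g_{ij}\dot\gamma^i\dot\gamma^j$ has the favourable sign needed to discard it; everything else is bookkeeping that parallels \cite{Ringstrom09}.
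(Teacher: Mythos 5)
Your proof is correct and follows the same approach as the paper's: estimate the three Christoffel components $\Gamma^0_{00},\Gamma^0_{0i},\Gamma^0_{ij}$ (the paper obtains precisely $|\Gamma^0_{00}|\le C\epsilon(p/t)e^{-a\tau}$, $|\Gamma^0_{0i}|\le C\epsilon(p/t)e^{p\tau+\kappa_0-a\tau}$, $|\Gamma^0_{ij}-(p/t)g_{ij}|\le C\epsilon(p/t)e^{2p\tau+2\kappa_0-a\tau}$, which matches your decomposition $\Gamma^0_{ij}=(p/\gamma^0)g_{ij}+r_{ij}$ and your $\Gamma^0_{00},\Gamma^0_{0i}$ bounds after contracting with $\dot\gamma$ and invoking \eqref{eq6}); discard the favourably-signed term $(p/t)g_{ij}\dot\gamma^i\dot\gamma^j\ge 0$; apply Gronwall to bound $\dot\gamma^0$ from above; and conclude $s_{\max}=\infty$. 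Your Gronwall is carried out in the $\tau$ variable rather than $s$, but yields the identical bound $w\le w(\tau(s_0))e^{C\epsilon/a}$. The only genuine deviation is in the closing step: the paper simply cites Proposition~\ref{geo1}, which already guarantees $\gamma^0\to\infty$ for a future-inextendible causal curve, and then observes $\gamma^0(s)\le\gamma^0(s_0)+C|s-s_0|$, whereas you re-derive $\gamma^0\to\infty$ via the ODE escape lemma and then run a change-of-variables integral. Your route is valid but slightly roundabout; in particular the intermediate deduction ``bounded velocity $+$ finite $s_{\max}$ $\Rightarrow$ escape every compact $\Rightarrow$ $\gamma^0\to\infty$'' is already in tension with the bound $\gamma^0(s)\le\gamma^0(s_0)+V(s_{\max}-s_0)$, so the contradiction could be stated immediately, without the integral; alternatively one can invoke the last sentence of Proposition~\ref{geo1} as the paper does. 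This is a matter of exposition, not correctness.
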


\begin{proof}
We let $\gamma$ be a future-directed causal geodesic and we let  $(s_{min}, s_{max})$ denote the maximum range of its (proper) parameter. The geodesic equation for $\gamma$ takes the form
\begin{equation}
\ddot{\gamma}^{\beta} + \Gamma^{\beta}_{\mu \nu} \dot{\gamma}^\mu\dot{\gamma}^\nu = 0; 
\end{equation}
for the purposes of our argument here, we work with the $\beta=0$ component of this equation:
\begin{equation}
\label{eq16}
\ddot{\gamma}^0+ \Gamma^0_{\mu \nu} \dot{\gamma}^\mu\dot{\gamma}^\nu = 0.
\end{equation}

We wish to estimate the second term of this equation. First, using the estimation algorithm Section 9.1 of \cite{Ringstrom08} together with the energy estimate \eqref{eq3}, we obtain
\begin{align}
\nonumber | \Gamma^0_{00}| &\le C \epsilon (p/t) e^{-a\tau}, \\
\nonumber | \Gamma^{0}_{0i} | &\le C \epsilon (p/t) e^{p\tau +\kappa_0-a\tau}, \\
\nonumber | \Gamma^{0}_{ij}- (p/t) g_{ij} | &\le C \epsilon (p/t) e^{2p\tau +2\kappa_0 -a\tau}.
\end{align}
Consequently,  for $t$ large enough or $\epsilon$ small enough, we have $\Gamma^{0}_{ij}\dot{\gamma}^i \dot{\gamma}^j \ge 0$. Combining these estimates with (\ref{eq6}), we conclude that
\begin{equation}
 | \Gamma^0_{00} \dot{\gamma}^0 \dot{\gamma}^0 | + 2 |  \Gamma^{0}_{0i} \dot{\gamma}^0 \dot{\gamma}^i | \le C \epsilon (p/t) e^{-a\tau} | \dot{\gamma}^0 |^2, 
 \end{equation}
where $C$  depends only on $n$, $p$ and $c_1$. Based on these conclusions together with (\ref{eq16}), we find that
\begin{equation}
\label{gamma..}
 \ddot{\gamma}^0 \le C\epsilon (p/t) e^{-a\tau} \dot{\gamma}^0 \dot{\gamma}^0 = C \epsilon (p/t) (t/t_0)^{-a} \dot{\gamma}^0 \dot{\gamma}^0, 
 \end{equation}
for $s \ge s_1$. We note here that $\gamma^0$ and $t$ can be used interchangeably.

Since $\dot{\gamma}^0 > 0$ (assuming $\epsilon$ to be small enough),
we may divide both sides of \eqref{gamma..} by $\dot{\gamma}^0$ and then integrate, thereby producing 
\begin{equation} 
\ln \frac{\dot{\gamma}^0(s)}{\dot{\gamma}^0(s_1)} \le C \epsilon p \int_{s_1}^s t^{-1} (t/t_0)^{-a} \dot{\gamma}^0 ds = C \epsilon p \int_{\gamma^0(s_1)}^{\gamma^0(s)} t^{-1} (t/t_0)^{-a} dt \le C \epsilon p /a, 
\end{equation}
where we let $s_1$ be large enough such that $\gamma^0(s_1) \ge t_0$ if necessary. It follows that $\dot{\gamma}^0$ is bounded away from 0. Hence we have
\[ \gamma^0(s) - \gamma^0(s_0) = \int_{s_0}^s \dot{\gamma}^0(s) ds \le C | s - s_0|, \]
for some constant $C$. Since $\gamma^0(s) \to \infty$ as $ s \to s_{max}$, we conclude that $s_{max} = \infty$. Thus $\gamma$ is future complete.
\end{proof}

\section{Asymptotic Expansions}\label{S:asymptotic}

The results of Sections \ref{S:global} and \ref{S:geodesic} provide the bulk of the proof of  our main result, Theorem \ref{maintheorem} (we complete the proof below, in Section \ref{S:proof}). These results say little about asymptotic behavior of the solutions of interest, beyond global existence and geodesic completeness. Here, we address the issue of asymptotic behavior, as a step toward the proof of Theorem \ref{2ndtheorem}.

\begin{proposition}
\label{asym}
For sufficiently small $\epsilon$ (depending on $n,p, c_1$, and $m_0$),  the spacetime development $(g, \phi, A)$ (satisfying the field equations (\ref{refm1}) - (\ref{refm6})) of a set of initial data satisfying the hypothesis of Theorem \ref{global} has  the following asymptotic behavior:  There exists a smooth Riemannian metric $H$ on $\mathbf{T}^n$, and for every integer $l \ge 0$ there exists a constant $\alpha_l$ (depending only on $n$, $p$, $c_1$ and $l$), such that for all $t \ge t_0$, the following estimates hold:
\begin{align}
\label{asym1} \| \phi(t,\cdot) - \frac{2}{\lambda} \ln t + \frac{c_0}{\lambda} \|_{C^l} + \| (t\partial_t\phi)(t,\cdot) - \frac{2}{\lambda} \|_{C^l} \le \alpha_l (t/t_0)^{-a},\\
\label{asym2} \| E_i \|_{C^l} = \| \partial_i A_0 - \partial_0 A_i \|_{C^l} \le \alpha_l e^{\kappa_0} (t/t_0)^p \,(t/t_0)^{-1-a},\\
\label{asym3} \| B_{ij} \|_{C^l} = \| \partial_i A_j - \partial_j A_i \|_{C^l} \le \alpha_l e^{2\kappa_0} (t/t_0)^{2p} \,(t/t_0)^{-1-a},\\
\label{asym4} \| (1 + g_{00})(t, \cdot) \|_{C^l} + \| (t \partial_t g_{00})(t, \cdot) \|_{C^l} \le \alpha_l (t/t_0)^{-a},\\
\label{asym5} \| \frac{1}{t}g_{0i}(t, \cdot) - \frac{1}{(n-2)p +1}H^{jm} \gamma_{jim} \|_{C^l} + \| t\partial_t (\frac{1}{t}g_{0i})(t, \cdot) \|_{C^l} \nonumber\\ \le \alpha_l (t/t_0)^{-a}\,,\\
\label{asym6}\| e^{-2\kappa_0} (t/t_0)^{-2p} g_{ij}(t, \cdot) - H_{ij} \|_{C^l} \nonumber\\ +  \| e^{-2\kappa_0} (t/t_0)^{-2p} (t\partial_t g_{ij})(t, \cdot) - 2pH_{ij}\|_{C^l} \le \alpha_l (t/t_0)^{-a}\,,\\
\label{asym7}\|  e^{2\kappa_0}(t/t_0)^{2p} g^{ij}(t, \cdot) - H^{ij} \|_{C^l} \le \alpha_l (t/t_0)^{-a}\,,\\
\label{asym8}\| e^{-2\kappa_0} (t/t_0)^{-2p} t\,K_{ij}(t, \cdot) - pH_{ij} \|_{C^l}\le \alpha_l (t/t_0)^{-a}\,.
\end{align}
Here $\gamma_{jim}$ denote the (lowered index) Christoffel symbols associated to the metric $H$, $K_{ij}(t, \cdot)$ are the components of the second fundamental form induced on the Cauchy surface $\{t\}\times \mathbf{T}^n$ by the spacetime metric $g_{\mu\nu}$, 
and  $\| \cdot \|_{C^l}$ denotes the $C^l$ norm on $\mathbf{T}^n$.
\end{proposition}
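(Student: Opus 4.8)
The plan is to extract, from the global existence result and the uniform energy bounds already in hand (Theorem~\ref{thmorderm}), pointwise asymptotics by the standard two-step mechanism: first convert the energy bounds into $H^m$- and, via Sobolev embedding on $\mathbf{T}^n$ (recall $m_0>n/2+1$ and that all $\tilde{E}_m$ are finite), $C^l$-decay statements for the appropriately rescaled fields and their $\tau$-derivatives; then, for each quantity that converges, integrate the decay estimate for its $\tau$-derivative from $\tau$ to $\infty$ to simultaneously define the limit and bound the remainder. Before doing this I would sharpen the energy decay: feeding the uniform bound $\tilde{E}_m\le C_m$ back into \eqref{ine1}, \eqref{ine4} and \eqref{ine5} (together with the ODE comparisons used in the proof of Theorem~\ref{thmorderm}) gives $\tilde{E}^{1/2}_{\mathcal{LS},m},\tilde{E}^{1/2}_{\mathcal{SP},m},\tilde{E}^{1/2}_{\mathcal{VP},m}\le C_m e^{-a\tau}$, while $\tilde{E}_{\mathcal{SH},m}$ and $\tilde{E}_{\mathcal{M},m}$ stay bounded; in particular $E_{\mathcal{M},m}=e^{4\kappa_0-2a\tau}\tilde{E}_{\mathcal{M},m}$ decays like $e^{-2a\tau}$, which is exactly what is needed.

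The ``diagonal'' fields are then immediate. From \eqref{51}, \eqref{eq11}, \eqref{52} and the energy bounds, $\psi$, $u=1+g_{00}$, $A_0$ and their $\partial_\tau$-derivatives are $O(e^{-a\tau})$ in every $C^l$, which is \eqref{asym1} and \eqref{asym4}. For the spatial metric, set $\tilde{g}_{ij}:=e^{-2\kappa_0}(t/t_0)^{-2p}g_{ij}=e^{-2\kappa_0}\gamma_{ij}$; estimate \eqref{es0} gives $\|\partial_\tau\tilde{g}_{ij}\|_{H^m}\le C e^{-a\tau}$, so $\tilde{g}_{ij}(\tau)$ converges in $H^m$, hence in $C^l$, to a limit $H_{ij}$ with $\|\tilde{g}_{ij}(\tau)-H_{ij}\|_{C^l}\le C e^{-a\tau}$; the uniform-in-$\tau$ bounds on all spatial derivatives of $\gamma_{ij}$ from \eqref{es} (this is the role of the extra term in the definition of $E_{\mathcal{M},m}$) show $H$ is smooth and positive definite, giving \eqref{asym6}. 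Then \eqref{asym7} follows by inverting, using Lemma~\ref{inverse} to control the difference between $g^{ij}$ and the inverse of $g_{\flat}$ (a correction quadratic in the already-small $g_{00}+1$ and $g_{0i}$), and \eqref{asym8} follows algebraically from the ADM formula for $K_{ij}$ in terms of $\partial_t g_{ij}$, the lapse $(-g^{00})^{-1/2}$ and the shift, combined with \eqref{asym4}--\eqref{asym6}. The electromagnetic estimates \eqref{asym2}, \eqref{asym3} come from the $C^l$-decay of $A_0$ together with the bounds on $\partial_\tau A_i$ and $\partial_j A_i$ read off from the now-decaying $\tilde{E}_{\mathcal{VP},m}$ via \eqref{53}; inserting the scale factors $e^{-\tau-\tau_0}$ and $e^{(1-p)\tau}$ one checks that the stated right-hand sides dominate.

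The one field whose asymptotics are \emph{not} visible to the energy functionals is the shift $g_{0i}$: \eqref{eq10} controls $g_{0i}$ only up to the non-sharp factor $e^{p\tau+\kappa_0}$, so $\tilde{E}_{\mathcal{SH},m}$ cannot even show that $g_{0i}/t$ is bounded. This is the main obstacle, and I would resolve it, following Ringstr\"om, by a direct analysis of the evolution equation for the rescaled variable $\hat{u}_i:=g_{0i}/t=e^{-\tau-\tau_0}u_i$. Substituting $u_i=e^{\tau+\tau_0}\hat{u}_i$ into \eqref{refm2} and dividing by $e^{\tau+\tau_0}$ produces a damped equation of the schematic form
\begin{equation*}
\partial_\tau^2\hat{u}_i+(np+1)\,\partial_\tau\hat{u}_i+2p\big[(n-2)p+1\big]\,\hat{u}_i=\tfrac{4}{\lambda}\partial_i\psi+2p\,g^{lm}\Gamma_{lim}+R_i,
\end{equation*}
where $R_i$ collects the spatial part of $\tilde{\Box}_g$ (which carries the factor $\tilde{g}^{ij}=e^{-2\kappa_1}e^{(2-2p)\tau}\gamma^{ij}$, hence decays), the $(g^{00}+1)$-corrections, and $e^{-\tau-\tau_0}\Delta_{0i}$, all $O(e^{-a\tau})$ in $C^l$ by the estimates above. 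The forcing converges in $C^l$ at rate $e^{-a\tau}$ to $2p\,H^{lm}\gamma_{lim}$, using the $C^l$-decay of $\psi$ and the convergence of $\gamma_{ij}$ (hence of its Christoffel contraction). Since the constant-coefficient operator on the left is strictly damped, its forced solution converges to the stationary value $\hat{u}_i^\infty:=\tfrac{1}{(n-2)p+1}H^{lm}\gamma_{lim}$ — the quantity in \eqref{asym5} — with $\partial_\tau\hat{u}_i=O(e^{-a\tau})$; rigorously this is a Gr\"onwall/energy estimate for $\hat{u}_i-\hat{u}_i^\infty$ (decreasing $a$ if necessary to stay below $\min|\mathrm{Re}\,\mu_\pm|$ for the characteristic roots), legitimate because the right-hand side is a genuine source with the stated decay. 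This yields \eqref{asym5}.

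Finally, since Theorem~\ref{thmorderm} gives uniform bounds on $\tilde{E}_m$ for every $m$, all of the $H^m$-statements above hold for arbitrarily large $m$, so Sobolev embedding on $\mathbf{T}^n$ upgrades them to $C^l$-statements for each $l$, and collecting constants produces the family $\alpha_l$ depending only on $n$, $p$, $c_1$, $l$. The only genuinely new bookkeeping relative to Ringstr\"om is for the electromagnetic fields, which is routine once one observes (as in Lemma~\ref{Delta}) that they enter only through quadratic, well-scaled terms; the shift analysis of the previous paragraph is where the real content lies.
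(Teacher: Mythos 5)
Your overall structure matches the paper's: the first four estimates come from the norm bounds \eqref{51}, \eqref{eq11}, \eqref{52}, \eqref{53} plus the uniform energy bound of Theorem~\ref{thmorderm} and Sobolev embedding; \eqref{asym6} and \eqref{asym7} come from the decay of $\partial_\tau\gamma_{ij}$ via \eqref{es0} together with a direct computation of $\partial_\tau(e^{2p\tau}g^{ij})$ (the paper does the latter instead of invoking Lemma~\ref{inverse}, but either works); and \eqref{asym5}, \eqref{asym8} require the Ringstr\"om-style shift analysis, which the paper outsources entirely to Section~10 of \cite{Ringstrom09} after noting the new electromagnetic contribution $F_{0\sigma}F_i{}^\sigma - \tfrac{1}{2(n-1)}g_{0i}F_{\alpha\beta}F^{\alpha\beta}$. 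Your reconstruction of the rescaled shift equation for $\hat{u}_i = g_{0i}/t$, including the coefficients $(np+1)$ and $2p[(n-2)p+1]$ and the stationary value $\hat{u}_i^\infty = \tfrac{1}{(n-2)p+1}H^{lm}\gamma_{lim}$, is correct.

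However, there is a genuine gap in your treatment of the remainder $R_i$. You assert that $e^{-\tau-\tau_0}\Delta_{0i}$ is $O(e^{-a\tau})$ in $C^l$ ``by the estimates above,'' but the estimate \eqref{old_est2} plus Theorem~\ref{thmorderm} give only $\|e^{-\tau-\tau_0}\Delta_{0i}\|_{H^m}\lesssim \epsilon\,e^{(p-1-2a)\tau}$, and since $a$ is defined in \eqref{con2} to be at most $\tfrac{1}{4}(p-1)$, the exponent $p-1-2a$ is strictly positive: the crude bound on the remainder \emph{grows}. The same problem afflicts the $(g^{00}+1)$-corrections, which after division by $e^{\tau+\tau_0}$ become $(1+g^{00})(\partial_\tau\hat u_i+\hat u_i)$ — terms that are $O(e^{-a\tau})$ only \emph{a~posteriori}, once one knows $\hat u_i$ is bounded, which is precisely what is being proved. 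So the claim that ``the right-hand side is a genuine source with the stated decay'' begs the question. The correct argument (and this is the substantive content of Ringstr\"om's Section~10) is iterative: feeding $\|\hat u_i\|\lesssim e^{(p-1-a)\tau}$ from \eqref{eq10} into the damped ODE improves the growth rate by $e^{-a\tau}$ each pass (because the dangerous terms in $\Delta_{0i}$ are linear in $u_i$ and paired with $O(\epsilon e^{-a\tau})$ factors), and after finitely many iterations one crosses below zero and then converges. Your parenthetical about decreasing $a$ to stay below $\min|\mathrm{Re}\,\mu_\pm|$ is not where the real difficulty lies; the difficulty is that $R_i$ cannot be treated as an external source in a single shot. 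You also do not single out and check the new electromagnetic term in $\Delta'_{0i}$, which the paper explicitly flags as the one new ingredient in adapting Ringstr\"om's shift argument, though since it is quadratic in the well-scaled $F_{\mu\nu}$ the check is indeed routine.

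A minor remark: your preliminary ``sharpening'' of $\tilde E_{\mathcal{LS},m}$, $\tilde E_{\mathcal{SP},m}$, $\tilde E_{\mathcal{VP},m}$ to decay like $e^{-a\tau}$ is unnecessary for \eqref{asym1}--\eqref{asym4}, \eqref{asym6}, \eqref{asym7}: the rescalings built into \eqref{scalings} already place the factors $e^{a\tau}$ into the norm bounds of Lemma~\ref{control lemma}, so the uniform bound on $\tilde E_m$ alone gives the claimed $e^{-a\tau}$ decay of the field norms. That said, it is harmless and is indeed close to what Theorem~\ref{thmorderm} already extracts en route.
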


\begin{remark}
As noted above in Section \ref{MainResults}, although it may appear from \eqref{asym2} and \eqref{asym3}  that the electromagnetic field grows exponentially, this is essentially a coordinate effect, which reflects the use of coordinate bases relative to which the physical metric is expanding. If one examines the locally measured
fields (factoring out the metric expansion) then the electric field and magnetic field decay as $(t/t_0)^{-(1+a)}$. In fact, one expects that a further analysis would show a stronger decay rate.
\end{remark}

\begin{proof}
The first four of these estimates, (\ref{asym1}) - (\ref{asym4}), follow immediately from the energy bound in Theorem \ref{thmorderm} together with the norm bounds  (\ref{51}), (\ref{eq11}), (\ref{52}) and (\ref{53}).

Based on the norm estimate (\ref{es0}), on the relation  $\gamma_{ij} := (t/t_0)^{-2p} g_{ij} = e^{-2p \tau} g_{ij}$, and on Sobolev embedding, we see that 
the $C_l$ norm of $e^{-2\kappa_0} \partial_\tau \gamma_{ij}$ decays as $e^{-a\tau}$. Thus there exist smooth functions $H_{ij}$ such that for every integer $l \ge 0$,
\begin{equation}
\label{extra} 
\| e^{-2\kappa_0} \gamma_{ij}(\tau, \cdot) - H_{ij} \|_{C_l} \le \alpha_l e^{-a\tau} 
\end{equation}
holds for some $\alpha_l$ and for all $\tau > 0$. Combining this decay result with that for $\gamma_{ij}$ noted above, we obtain (\ref{asym6}). 

To derive \eqref{asym7}, we calculate 
\begin{equation}
e^{2\kappa_0} \partial_\tau (e^{2p\tau} g^{ij}) = 2p e^{2p\tau + 2\kappa_0} g^{ij} - e^{2p\tau + 2\kappa_0} g^{i\mu} g^{j\nu} \partial_\tau g_{\mu\nu}. 
\end{equation}
It follows from (\ref{es0}) that the  $C_l$ norm of the right hand side of the above decays as $e^{-a\tau}$. Hence there exist smooth functions $H^{ij}$ on $\mathbf{T}^n$ such that (\ref{asym7}) holds for all $\tau > 0$. We also conclude from above that $H^{ij} H_{jk} = \delta^i_k$, which implies that $H_{ij}$ is a Riemannian metric on $\mathbf{T}^n$ and $H^{ij}$ is its inverse.

To obtain the remaining estimates of this proposition, (\ref{asym5}) and (\ref{asym8}), we refer to the derivation carried out in Section 10 of \cite{Ringstrom09}. The corresponding analysis for the Einstein-Maxwell-scalar field theory requires that we handle an extra term: $F_{0\sigma} F_i\,^\sigma - \frac{1}{2(n-1)} g_{0i} F_{\alpha\beta} F^{\alpha\beta}$. We readily verify that this term decays sufficiently quickly for the argument of Section 10 of \cite{Ringstrom09} to apply here.
\end{proof}

\section{Proof of the Main Theorem}\label{S:proof}

The theorems and propositions of Sections \ref{S:global}, \ref{S:geodesic}, and \ref{S:asymptotic} essentially prove our main results, Theorem \ref{maintheorem} and Theorem \ref{2ndtheorem}. Here we discuss a few of the remaining details. We note that the steps needed here to complete these theorems are similar to those carried out in \cite{Ringstrom08} and \cite{Ringstrom09}; hence, we leave out some of the details.

\vspace{4mm}
{\em Step 1: Construction of a global-in-time patch, related to data on $U$.} Theorem \ref{global} of Section \ref{S:global} proves global-in-time existence for smooth initial data $(\tilde{h}, \tilde{K}, \tilde{\varphi}, \tilde{\pi}, \tilde{E}, \tilde{B})$ specified  on $\mathbf{T}^n$ and satisfying suitable smallness conditions. As noted in Section \ref{S:global}, a key step in using Theorem \ref{global} to prove our main result, Theorem \ref{maintheorem}, is to construct this smooth data on $\mathbf{T}^n$ from the given data  $(h, K, \varphi, \pi, E, B)$ specified  on $U\in \Sigma^n$ and satisfying the hypotheses of Theorem \ref{maintheorem}.  To do this, we define 
 a cutoff function $f_c(x) \in C_0^\infty (B_1(0))$ such that $f_c(x) = 1$ for $| x | \le 15/16$ and $0 \le f_c(x) \le 1$, and we define $(\tilde{h}, \tilde{K}, \tilde{\varphi}, \tilde{\pi}, \tilde{E}, \tilde{B})$ on $\mathbf{T}^n$ as follows:
\begin{align}
\tilde{h}_{ij} &= f_c (h_{ij}\circ x^{-1}) + (1 - f_c) e^{2\kappa_0} \delta_{ij},\\
\tilde{K}_{ij} &= f_c (K_{ij}\circ x^{-1}) + (1 - f_c) \frac{p}{t_0} e^{2\kappa_0} \delta_{ij},\\
\label{20}\tilde{\varphi} &= f_c \varphi\circ x^{-1} + (1 - f_c) \langle \varphi \rangle \nonumber\\ &\quad- \frac{1 - f_c}{1 - \langle f_c \rangle} [ \langle f_c (\varphi \circ x^{-1}) \rangle - \langle f_c \rangle \langle \varphi \rangle ],\\
\tilde{\pi} &= f_c (\pi \circ x^{-1}) + (1 - f_c) \frac{2}{\lambda t_0},\\
\label{sym:etilde}\tilde{E} &= f_c (E\circ x^{-1}),\\
\label{sym:btilde}\tilde{B} &= f_c (B\circ x^{-1}).
\end{align}
Here $t_0$ and $\kappa_0$ are defined in the statement of Theorem \ref{maintheorem}, and  the last term in (\ref{20}) is included so as to ensure that $t_0$ defined in Theorem \ref{global} equals that defined in Theorem \ref{maintheorem}. 

To go from the initial data set  $(\tilde{h}, \tilde{K}, \tilde{\varphi}, \tilde{\pi}, \tilde{E}, \tilde{B})$ just defined to the associated initial data for the evolution equations  (\ref{refm1}) - (\ref{refm6}), we rely on (\ref{inid1}) - (\ref{inid6}). Included in this transition is passage from the magnetic field $\tilde {B}(x)$ to the vector potential 
 $\tilde{A}_i(x)$. To show that this can always be done, with the needed norm control, we note the following:\footnote{This lemma is very similar to  Lemma 2.6 in \cite{Svedberg}. The main difference is that our version is specifically carried out for $\mathbf{T}^n$.}
\begin{lemma}
\label{lemmaAtilde}
Suppose that $\tilde{B}(x)$  is a closed 2-form on $\mathbf{T}^n$ such that $\mbox{supp}\,\tilde{B} \subset B_1(0)$ and $\tilde{B}_{ij} \in H^{m}(\mathbf{T}^n)$ for $m > n/2 + 1$. There exists a 1-form $\tilde{A}(x)$ on $\mathbf{T}^n$ such that (i) $\tilde{B} = d \tilde{A}$; and (ii) for any given $\epsilon > 0$, there exists $\delta > 0$ such that if
\begin{align}
\sum_{i,j} \| \tilde{B}_{ij} \|_{H^{m}(\mathbf{T}^n)} \le \delta,
\end{align}
then
\begin{align}
\sum_i \| \tilde{A}_i \|_{H^{m+1}(\mathbf{T}^n)} \le \epsilon.
\end{align}
\end{lemma}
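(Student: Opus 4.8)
The plan is to construct $\tilde A$ by Hodge theory on the torus. Since $\tilde B$ is a closed $2$-form on $\mathbf T^n$, its de Rham cohomology class $[\tilde B] \in H^2_{dR}(\mathbf T^n)$ is the obstruction to writing $\tilde B = d\tilde A$ globally. I claim this class vanishes: because $\mbox{supp}\,\tilde B \subset B_1(0)$, the form $\tilde B$ is supported in a coordinate ball, and any closed form supported in a contractible coordinate chart that vanishes on an open set is exact \emph{on that chart} via the Poincar\'e lemma; more carefully, one can integrate $\tilde B$ over any of the standard $2$-cycles $\mathbf T^2_{ab} \subset \mathbf T^n$ (the coordinate $2$-tori), and since those cycles can be translated off $B_1(0)$ while staying in their homology class, $\int_{\mathbf T^2_{ab}} \tilde B = 0$ for all $a,b$. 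Hence $[\tilde B]=0$ and $\tilde B$ is exact on $\mathbf T^n$.

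Next, to get the quantitative bound, I would use the Hodge decomposition together with elliptic estimates. Among all primitives of $\tilde B$, select the canonical one by solving $\Delta \tilde A = d^* \tilde B$ in the Coulomb gauge $d^*\tilde A = 0$; explicitly, set $\tilde A := d^* G \tilde B$ where $G$ is the Green's operator for the Hodge Laplacian on $\mathbf T^n$ (acting on the orthogonal complement of the harmonic forms, which is where $\tilde B$ lives since $[\tilde B]=0$). Then $d\tilde A = (dd^* + d^*d)G\tilde B - d^* d G \tilde B = \tilde B - d^* G\, d\tilde B = \tilde B$, using $d\tilde B = 0$ and that $G$ commutes with $d$. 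The operator $d^* G$ is a pseudodifferential operator of order $-1$, so standard elliptic regularity on the compact manifold $\mathbf T^n$ gives $\|\tilde A\|_{H^{m+1}(\mathbf T^n)} \le C_{n,m}\,\|\tilde B\|_{H^m(\mathbf T^n)}$, where $C_{n,m}$ depends only on $n$ and $m$. Taking $\delta := \epsilon / C_{n,m}$ then yields the conclusion: if $\sum_{i,j}\|\tilde B_{ij}\|_{H^m(\mathbf T^n)} \le \delta$ then $\sum_i \|\tilde A_i\|_{H^{m+1}(\mathbf T^n)} \le \epsilon$. The regularity hypothesis $m > n/2+1$ is more than enough to run the elliptic estimate (one only needs $\tilde B \in H^m$, and Sobolev embedding is not strictly required here, though it is what makes $\tilde A_i \in C^1$, which is used downstream).

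The main obstacle is really the bookkeeping rather than anything deep: one must be careful that the cohomological vanishing argument is airtight (establishing $[\tilde B]=0$ from the support condition) and that the elliptic estimate is stated for forms rather than functions, with the constant depending only on the fixed data $n$ and $m$ and not on $\tilde B$. A minor subtlety is that the paper also needs the \emph{support} or at least the decay of $\tilde A$ to be controlled enough for the later patching in Step 1 of Section \ref{S:proof}; but here $\tilde A = d^* G\tilde B$ need not have compact support, so if a support condition on $\tilde A$ were required one would instead invoke the Poincar\'e lemma directly on a slightly larger ball $B_{r}(0) \supset B_1(0)$ with $r<$ the injectivity radius, producing a compactly supported primitive, and then run the elliptic/interpolation estimate on that ball. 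Since the statement of Lemma \ref{lemmaAtilde} as written asks only for the $H^{m+1}(\mathbf T^n)$ bound, the Hodge-theoretic construction above suffices, and I would present that as the proof, remarking in passing on the compact-support variant. An alternative, entirely elementary route—mirroring \cite{Svedberg}—is to write down the standard homotopy-operator primitive from the Poincar\'e lemma on $B_1(0)$ (an explicit integral in the radial variable), check it lies in $H^{m+1}$ with the desired bound by direct estimation of the integral kernel, and extend by zero; I would mention this as the more self-contained option if one wants to avoid citing Hodge theory.
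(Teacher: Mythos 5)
Your proposal is correct, and it takes a genuinely different (though ultimately convergent) route from the paper's. The paper first invokes the Poincar\'e lemma on the ball $B_1(0)$ to produce an explicit homotopy-operator primitive $\bar A_i(x) = \int_0^1 t\,x^j \tilde B_{ji}(tx)\,dt$, asserts that this extends by zero to $\mathbf T^n$, and then performs a gauge transformation $\tilde A = \bar A + d\phi$ with $\Delta\phi = -\sum_i\partial_i\bar A_i$ so that the final $\tilde A$ is divergence-free and satisfies a Poisson equation with source $\nabla\cdot\tilde B$, to which it applies standard elliptic theory (citing Svedberg's Lemma 2.6 for the detailed estimates). You instead go directly to the Hodge-theoretic characterization: check $[\tilde B]=0$ by translating the coordinate $2$-tori off the support, then set $\tilde A := d^*G\tilde B$ and read off the bound from $d^*G$ being a $\Psi$DO of order $-1$ on the closed manifold. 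The two approaches produce the same object (the Coulomb-gauge primitive), and both hinge on the same elliptic gain of one derivative. Your route is arguably cleaner: it isolates the cohomological vanishing as an explicit checkpoint (the paper handles it only implicitly, through the claim that the Poincar\'e primitive vanishes near $\partial B_1(0)$ and hence extends by zero --- a step that actually requires some care, since the radial homotopy formula does not manifestly inherit compact support from $\tilde B$), and it avoids constructing the intermediate $\bar A$ altogether. The paper's route is more elementary and self-contained, requiring only the Poincar\'e lemma and scalar elliptic theory rather than Hodge theory on forms, and it more closely tracks \cite{Svedberg}. Your closing remarks --- that $m>n/2+1$ is not needed for the elliptic estimate itself, and that compact support of $\tilde A$ is not actually required downstream (only the $H^{m+1}$ bound enters the energy $\tilde E_{\mathcal{VP},m_0}(0)$) --- are both accurate.
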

\begin{proof}
Applying the Poincar\'e Lemma in the form of Problem 15-2 from  \cite{JackLee} to the closed two-form $\tilde{B}(x)$ on the ball $B_1(0)$, we obtain a 1-form $\bar{A}$ on $B_1(0)$ such that $\tilde{B} = d \bar{A}$ and $\bar{A}_i = \int_0^1 tx^j \tilde{B}_{ji}(tx)dt$ on $B_1(0)$. Recalling that $\tilde{B}_{ji}$ vanishes at the boundary of the ball $B_1(0)$, we see that $\bar{A}$ does as well, so we may extend $\bar{A}$ continuously to all of $\mathbf{T}^n$, with it vanishing on $\mathbf{T}^n \setminus B_1(0)$.

To control the norm of the vector potential, we now carry out a gauge transformation: We replace $\bar{A}$ by 
$\tilde{A} := \bar{A} + d\phi$, with $\phi$ taken to be a solution of the Poisson equation $\Delta \phi := - \sum_i \partial_i \bar{A}_i$. It follows from a straightforward calculation that the components $\tilde{A}_i$ of the gauge-transformed vector potential satisfy a Poisson equation with the source term being the divergence of $\tilde{B}$. Applying standard elliptic theory (e.g., Theorem 3 on page 249 of \cite{McOwen})
to this equation (on the closed manifold $\mathbf{T}^n$), we obtain estimates which lead to the conclusion of this lemma (see the proof of Lemma 2.6 in \cite{Svedberg} for details of these estimates, done on a bounded domain rather than on a closed manifold).
\end{proof}

The hypothesis of Theorem \ref{global} includes the condition that $\tilde{E}_{m_0}^{1/2}(0)$ be small. To show that this holds for the data  $(\tilde{h}, \tilde{K}, \tilde{\varphi}, \tilde{\pi}, \tilde{E}, \tilde{B})$ on $\mathbf{T}^n$  constructed as above (from data satisfying the hypothesis of the Theorem \ref{maintheorem}), we need to verify that for any  given any $\delta > 0$, there exists $\epsilon > 0$ such that if  (\ref{epsilon}) holds, then
$\tilde{E}_{m_0}^{1/2}(0) \le \delta.$
The arguments leading to the estimates $\tilde{E}^{1/2}_{\mathcal{LS}, m_0}(0) \le C\epsilon$, $\tilde{E}^{1/2}_{\mathcal{SH}, m_0}(0) \le C\epsilon$, and $\tilde{E}^{1/2}_{\mathcal{M}, m_0}(0) \le C\epsilon$ are very similar to those in Section 11 of \cite{Ringstrom09}. Here we show that $\tilde{E}^{1/2}_{\mathcal{SP}, m_0}(0)$ and $\tilde{E}^{1/2}_{\mathcal{VP}, m_0}(0)$ must  be arbitrarily small, presuming that we choose a small enough $\epsilon$.

To show that $\tilde{E}^{1/2}_{\mathcal{SP}, m_0}(0)$ can be made arbitrarily small, since $A_0(0,\cdot) = 0$, we only need to show that $ \| \partial_\tau A_0(0,\cdot) \|_{H^{m_0}} = \|  t_0 \tilde{h}^{ij} \partial_i \tilde{A}_{j}(x) \|_{H^{m_0}} $ can be made arbitrarily small. It is therefore sufficient  to show that $ \| \tilde{A}_i(x) \|_{H^{m_0+1}} $ can be made small enough.
To estimate $\tilde{E}^{1/2}_{\mathcal{VP}, m_0}(0)$, we need to estimate $\| e^{-\kappa_0} \partial_\tau A_i(0, \cdot) \|_{H^{m_0}}$ and $\| e^{-\kappa_0}  A_i(0, \cdot) \|_{H^{m_0+1}}$. Combining  (\ref{sym:etilde}),  (\ref{inid6}), and (\ref{epsilon}), we conclude
 \[ \| e^{-\kappa_0} \partial_\tau A_i(0, \cdot) \|_{H^{m_0}} \le C\epsilon.\]
 Hence it remains to estimate $\| e^{-\kappa_0}  A_i(0, \cdot) \|_{H^{m_0+1}} $, and for this  it is enough to estimate $\| \tilde{A}_i(x) \|_{H^{m_0+1}} $. From Lemma \ref{lemmaAtilde}, we conclude that $\| \tilde{A}_i(x) \|_{H^{m_0+1}}$ can be made arbitrarily small presuming  that $\| \tilde{B}_{ij}(x) \|_{H^{m_0}}$ is small enough. Hence $\tilde{E}^{1/2}_{\mathcal{SP}, m_0}(0)$ and $\tilde{E}^{1/2}_{\mathcal{VP}, m_0}(0)$ can be arbitrarily small given that we choose a small enough $\epsilon$. In conclusion,  we have a small enough $ \tilde{E}_{m_0}^{1/2}(0)$.

The remaining condition from the hypothesis of Theorem \ref{global} that we need to verify is the estimate  (\ref{eq2}). Relying on arguments similar to those of
Section 11 of \cite{Ringstrom09}, we determine  from (\ref{epsilon}) that (\ref{eq2}) holds for some $c_1 > 2$. 
Thus, noting that $\kappa_1$ only depends on $p$ and that  $m_0$ only depends on $n$, we see that Theorem \ref{global} applies here if we assume $\epsilon$ to be small enough (depending on $n$, $p$). As a consequence, we obtain a solution $(g', \psi', A')$ to (\ref{refm1}) - (\ref{refm6}) on $(t_-, \infty) \times \mathbf{T}^n$ for some $0< t_- < t_0$. It follows from  Proposition \ref{asym} that we also have the asymptotic estimates (\ref{asym1}) - (\ref{asym8}).

We verify now that the development of data on certain subsets of $B_1(0)$ solve the Einstein-Maxwell-scalar field equations. Transforming the field variables to $(g', \phi', A')$ (in accordance with the discussion in Section \ref{Reform}), we verify that these fields satisfy  the modified field equations (\ref{me1}), (\ref{me2}) and (\ref{me3}). Furthermore, since the constraints (\ref{cstrt1}) - (\ref{cstrt3}) are satisfied on $B_{15/16}(0)$ and since the initial data have been  constructed so that $\mathcal{D}_\mu (t_0, \cdot) = \partial_t \mathcal{D}_\mu (t_0, \cdot) =0$ and $ \mathcal{G} (t_0, \cdot) = \partial_t \mathcal{G} (t_0, \cdot) = 0$, it follows from standard hyperbolic PDE theory together with arguments from Section \ref{GaugeChoice} that on the domain of dependence of $B_{15/16}(0) \times \{t_0\}$,  (which we label $D[B_{15/16}(0) \times \{t_0\}]$) the functions  $(g', \phi', A')$,
satisfy  the original Einstein-Scalar-Maxwell system of equations (\ref{fe1}) - (\ref{fe2}).

Assuming $\epsilon$ to be small enough, we  conclude from Propositions \ref{geo1} and \ref{geo2} (following a line of argument similar to that found in Section 11 of \cite{Ringstrom09}) that causal geodesics starting in $B_{1/4}(0) \times \{ t_0 \} $ are future complete, and that 
\begin{equation}
B_{5/8}(0) \times  (t_-, \infty) \subseteq D[ B_{29/32}(0) \times\{t_0\}],
\end{equation}
where we increase $t_-$ if necessary. Hence we have a global-in-time spacetime which contains the development of data on a subset of the set $U\in \Sigma^n$ (from the hypothesis of Theorem \ref{maintheorem}). Within this development, causal geodesics are complete.

\vspace{4mm}
{\em Step 2: Construction of local-in-time patches, away from $U$.} To obtain a Cauchy development of the initial data on all of  $\Sigma^n$, we also need patches that are developments of initial data on subsets of $\Sigma^n$ that are not contained in $U$. Such initial data does not generally satisfy any smallness condition. 

To work with data of this sort, it is useful to define a reference spacetime metric on $\Sigma^n \times (t_-,\infty)$: We set
\begin{equation}
\label{refmetric}
\bar{g}: = (1 -f_c \circ x) (-dt^2 + h) + (f_c \circ x) (Id \times x)^* g',
\end{equation}
where $h$ is the Riemannian metric on $\Sigma^n$ included in the initial data, $g'$ is the solution metric from Step 1, $f_c$ is the cutoff function defined in Step 1, and $x$ is a coordinate map. As shown in Section 11 of \cite{Ringstrom09}, this metric  $\bar{g}$ is Lorentzian on $\Sigma \times (t_-, \infty)$.

Letting $p$ be any point in $\Sigma^n$, and choosing an open subset $S\in \Sigma^n$ which contains $p$ and has coordinate representation $\{y^1, \cdots, y^n\}$, we define coordinates 
 $\{y^0, y^1, \cdots, y^n\}$ on $\mathbf{R}\times S$ with $y^0 = t$, and we consider the PDE system
\begin{equation}
\nabla^\alpha \nabla_\alpha \phi  - V'(\phi) = 0,\quad
\nabla^\mu \bar{F}_{\mu\nu} = \nabla^\mu F_{\mu\nu} + \partial_\nu (\mathcal{G} - \bar{\mathcal{D}}_\gamma A^\gamma)= 0 ,
\end{equation}
\begin{equation}
\bar{R}_{\mu\nu}  = \partial_\mu \phi \partial_\nu \phi + \frac{2}{n-1} V(\phi) g_{\mu\nu} + F_{\mu\sigma} F_\nu\,^\sigma - \frac{1}{2(n-1)} g_{\mu\nu} F_{\rho\sigma} F^{\rho\sigma} ,
\end{equation}
where (as in Subsection \ref{RefI})
\begin{equation}
\bar{R}_{\mu\nu} := R_{\mu\nu} + \nabla_{(\mu} \bar{\mathcal{D}}_{\nu)},\quad \bar{\mathcal{D}}_\mu := \bar{\Gamma}_\mu - \Gamma_\mu,\quad \bar{\Gamma}_\mu := g_{\mu\nu} g^{\alpha\beta} \bar{\Gamma}_{\alpha\beta}^\nu,
\end{equation}
\begin{equation}
\mathcal{G} := \nabla^\mu A_\mu,\quad \bar{F}_{\mu\nu} := F_{\mu\nu} + g_{\mu\nu} (\mathcal{G} - \bar{\mathcal{D}}_\gamma A^\gamma),
\end{equation}
and $\bar{\Gamma}_{\alpha\beta}^\nu$ is the Christoffel symbol corresponding to  the reference metric $\bar{g}$.

We may choose gauges so that $\bar{\mathcal{D}}_\mu = \mathcal{G} = 0$ on $S$. Since the constraints (\ref{cstrt1}) - (\ref{cstrt3}) are satisfied on $S$, we necessarily have $\partial_t \bar{\mathcal{D}}_\mu = \partial_t\mathcal{G} = 0$ on $S$. It then follows from local existence and uniqueness results and arguments similar to those of \cite{Ringstrom08, Ringstrom09} that we obtain a patch of spacetime development $(W_p, g_p, \phi_p, A_p)$ which satisfies  (\ref{fe1}) - (\ref{fe2}); here $W_p$ is a spacetime neighborhood containing the point $p$.

\vspace{4mm}
{\em Step 3: Patching together the spacetime development patches.}

With a bit of care, as described in Section 16 of \cite{Ringstrom08}, it is straightforward to combine the development of the data related to $U$ (from Step 1), with the developments of data on sets $W_p$, for a sufficient collection of points $p$ (as per Step 2), and thereby construct 
a Cauchy development $(M, g, \phi, A)$ of the initial data on $\Sigma^n$.  This completes the proofs of Theorem \ref{maintheorem} and Theorem \ref{2ndtheorem}.

\section{Conclusion}\label{S:conclusion}

In this paper we have proven the asymptotic GFC-stability of the inflationary cosmological model
\begin{align}
\hat{g} &= - dt^2 + (t/t_0)^{2p} \sum_{i,j=1}^{n}\delta_{ij} dx^i dx^j,\nonumber\\
\hat{\phi} &= \frac{2}{\lambda} \ln t - \frac{c_0}{\lambda} ,\nonumber\\
\hat{A}_{\mu} &= 0,\nonumber
\end{align}
as a solution to the Einstein-Maxwell-Scalar field equations
\begin{align}
R_{\mu\nu} - \frac{1}{2} R g_{\mu\nu} &= T_{\mu\nu},\nonumber\\
\nabla^\mu \nabla_\mu \phi - V'(\phi) &= 0,\nonumber\\
\nabla^\mu F_{\mu\nu} &= 0,\nonumber
\end{align}
with
\begin{align}
\nonumber p = \frac{4}{(n-1)\lambda^2} ,\quad c_0 = \ln \left[ \frac{2(np - 1) }{\lambda^2 V_0} \right],\quad V(\phi) = V_0 e^{-\lambda \phi},\\
\nonumber T_{\mu\nu} =  \partial_{\mu} \phi  \partial_{\nu} \phi - g_{\mu\nu} ( \frac{1}{2} g^{\rho\sigma} \partial_{\rho} \phi \partial_{\sigma} \phi + V(\phi) ) + ( F_{\mu\sigma} F_\nu\,^\sigma - \frac{1}{4} g_{\mu\nu} F_{\rho \sigma} F^{\rho \sigma}).
\end{align}

There are other model solutions of similar field equation systems  for which one might be able to obtain similar conclusions. These include spacetime solutions of field equations including charged scalar fields or multiple scalar fields. Along with considering various fields coupled to the Einstein equations, one might  also consider the spacetimes  with non-flat spatial slices.

\section*{Acknowledgements}
This work, which was carried out during the course of XL's PhD studies, was partially supported by NSF grant  PHY-0968612 at the University of Oregon.

\end{document}